\newif\ifhyper\IfFileExists{hyperref.sty}{\hypertrue}{\hyperfalse}
\ifhyper\usepackage{hyperref}\fi
\newenvironment{proofof}[1]{\par{\noindent \bf Proof of #1:}}{\qed\par}
\newcommand{\red}[1]{{ {#1}}}
\newtheorem{theorem}{Theorem}
\newtheorem{lemma}[theorem]{Lemma}
\newtheorem{proposition}[theorem]{Proposition}
\newtheorem{corollary}[theorem]{Corollary}
\newtheorem{claim}[theorem]{Claim}
\newtheorem{fact}[theorem]{Fact}
\newtheorem{definition}[theorem]{Definition}
\newenvironment{sketch}{\noindent \emph{Proof
sketch:}\nopagebreak[2]}{\qed \medskip}
\newcommand{\Var}{\operatorname{Var}}
\newcommand{\feas}{\mathrm{Feas}}
\newcommand{\ignore}[1]{}
\newcommand{\cref}[1]{Corollary~\ref{cor:#1}}
\newcommand{\dtv}{\mathrm{d_{TV}}}
\newcommand{\dcdf}{\mathrm{d_{cdf}}}
\newcommand{\bX}{\mathbf{X}}
\newcommand{\bY}{\mathbf{Y}}
\newcommand{\bZ}{\mathbf{Z}}
\renewcommand{\Pr}{\operatorname{{\bf Pr}}}
\date{}
\author{
Anindya De\footnote{Northwestern University. Email: \texttt{ anindya@eecs.northwestern.edu.}}}
\begin{document}

\setcounter{page}{0}

\title{Boolean function analysis meets stochastic optimization: An approximation scheme for stochastic knapsack}



\maketitle

\thispagestyle{empty}

\begin{abstract}
The stochastic knapsack problem is the stochastic variant of the classical knapsack problem in which the algorithm designer is given a a knapsack with a given capacity and a collection of items where each item is associated with a  profit and a probability distribution on its size. The goal is to select a subset of items with maximum profit and violate the capacity constraint with probability at most $p$ (referred to as the overflow probability). 

While several approximation algorithms~\cite{KRT97, GI99, BGK11, DeanGV08, LiY13} have been developed for this problem, most of these algorithms relax the capacity constraint of the knapsack. In this paper, we design efficient approximation schemes for this problem without relaxing the capacity constraint.  
 
\begin{itemize}
\item[(i)]Our first result is in the case when item sizes are Bernoulli random variables. In this case, we design a (nearly) fully polynomial time approximation scheme (FPTAS) which only relaxes the overflow probability.
\item[(ii)] Our second result generalizes the first result to the case when all the item sizes are  supported on a (common) set of constant size. In this case, we obtain a quasi-FPTAS. 
\item[(iii)] Our third result is in the case when item sizes are so-called ``hypercontractive" random variables i.e., random variables whose second and fourth moments are within constant factors of each other. In other words, the kurtosis of the random variable
is upper bounded by a constant. 
 This class has been widely studied in probability theory and most natural random variables are hypercontractive including well-known families  such as Poisson, Gaussian, exponential and Laplace distributions. In this case, we design a polynomial time approximation scheme which relaxes both the overflow probability and maximum profit. 
\end{itemize}

Crucially, all of our algorithms meet the capacity constraint exactly, a result which was previously known only when the item sizes were Poisson or Gaussian random variables~\cite{GI99, Goyal:2010}. \red{Our results
rely on new connections between Boolean function analysis and  stochastic optimization and  are obtained by an adaption and extension of ideas  such as (central) limit theorems, moment matching theorems and the influential critical index machinery of Servedio~\cite{Servedio:07cc} developed in the context of complexity theoretic analysis of halfspaces.} We believe that \red{these ideas and techniques} may prove to be useful in other stochastic optimization problems as well.  

\end{abstract}

\newpage

\section{Introduction}
The knapsack problem is one of the most well-studied combinatorial optimization problems~\cite{Garey:1979} 
and early work on this problem dates back more than a century~\cite{GBM:96}. While several variants of this problem have now been studied, in its simplest instantiation, we are given a set of items, each associated with a size and profit. Given a capacity constraint of $C$, the task is to find a subset of items which maximizes the total profit and whose total size is bounded by $C$. While the knapsack problem is known to be NP-hard, it admits both a pseudopolynomial time algorithm as well as a fully polynomial time approximation scheme, thus making the problem \emph{tractable} in many settings. 

In this paper, we are interested in the stochastic variant of this problem. Here, the item sizes are no longer fixed and are instead given as a probability distribution (supported on $\mathbb{R}^+$, i.e.,~the set of positive real numbers). As is the case with nearly any combinatorial optimization problem, there are several potential stochastic variants of the knapsack problem which have been studied in the literature. See~\cite{DeanGV08, BGK11, LiY13, gupta2011approximation} for a partial list of results in various types of models. Our emphasis is on the so-called \emph{chance-constrained} version of knapsack (alternately referred to as the fixed-set version of stochastic knapsack). A problem instance here is given by items $I_1, \ldots, I_n$ where each $I_j = (\bX_j, v_j)$. Here $\{\bX_j\}$ are (independent) non-negative real-valued random variables representing the \emph{stochastic size} of each item and $v_j$ are non-negative real numbers representing the profit of each item.  Given a knapsack capacity $C$ and an overflow probability $p \ge 0$, the aim is to find a set $S \subseteq [n]$ of items which  maximizes $\sum_{j \in S} v_j$ such that  $\Pr[\sum_{j \in S} \bX_j >C] \le p$. The second condition, namely $\Pr[\sum_{j \in S} \bX_j >C] \le p$ is equivalent to saying that the constraint on the knapsack is violated with probability at most $p$.

More generally, in a chance-constrained optimization problem, we want to maximize an objective function while allowing the constraints to be violated with a maximum probability $p$ (which is referred to as the unreliability level). Such problems have been long investigated  in the optimization community starting with the work of Charnes \emph{et al.}~\cite{CCS:58} and the seminal work of Pr\'{e}kopa~\cite{Prekopa:70, Prekopa:95} and continue to remain the topic of current research~\cite{BNR02, NS06}. See the books
\cite{BEN:09, Shapiro:2014} which provide a good survey of the current state of the art of this family of problems. \red{As far as the author is aware, work in the TCS community has mostly focused on specific problems in this family (as opposed to developing a broad theory of chance-constrained optimization problems). However, some authors (see Nikolova's thesis~\cite{Nikolova:09}) have considered  ways of modeling risk other than  via chance constraints.}

Before we discuss prior work on this problem, let us focus on some core issues of the stochastic knapsack problem which this paper seeks to address. \begin{enumerate}
\item Given $S \subseteq [n]$, it is $\mathsf{\#P}$ hard to compute $\Pr[\sum_{i \in S} \bX_j>C]$ even for very simple classes of random variables (such as when each $\bX_j$ is $\pm w_j$ with probability $1/2$ each). Thus, even checking whether a given solution meets the probabilistic constraint exactly is computationally hard. 
\item For the usual (i.e., deterministic) knapsack problem, when $\{v_j\}$ are arbitrary non-negative numbers, maximizing the profit is NP-hard. 
\end{enumerate}
This suggests  that relaxing (at least one of) the overflow probability or the maximum profit is necessary to obtain efficient algorithms.\footnote{We do not know of any formal hardness results here apart from those trivially implied by the hardness of the deterministic knapsack problem.} However, with the exception of two cases,  namely when $\{\bX_i\}$ are distributed as Poisson~\cite{GI99} or Gaussian~\cite{Goyal:2010}, all known algorithms relax the capacity constraint as well. In fact, both these algorithms rely on very delicate properties of these distributions: (i) sum of two Poisson (resp. Gaussian) random variables is a Poisson (resp. Gaussian) random variable. (ii) their distribution is determined entirely by (at most) their first two moments. In fact, these algorithms cannot handle the case when some of the variables follow a Gaussian and the others follow a Poisson distribution.

The main focus of this work is to obtain approximation schemes for stochastic knapsack without relaxing the capacity constraint for a large class of random variables. In particular, we obtain such approximation schemes in three different settings for stochastic knapsack: 
\begin{itemize}
\item[(i)] When $\{\bX_j\}$ are Bernoulli random variables, we obtain a $\mathsf{poly}(n) \cdot \mathsf{quasipoly}(1/\epsilon)$ time approximation scheme. 
\item[(ii)] When $\{\bX_j\}$ are all supported on a common support $\{a_1, \ldots, a_k\}$, we obtain a 
$(nk/\epsilon)^{O(k \log (1/\epsilon))^{k+1}}$ time approximation scheme. Note that when $k=O(1)$, the running time is quasipolynomial in $n$ and $\epsilon$. 
For $k=2$, this is the same result as the first one (with a slightly worse running time). 
\item[(iii)] When $\{\bX_j\}$ are so-called $(c,2,4)$  hypercontractive random variables, we obtain a $n^{\tilde{O}(c^4/\epsilon^2)}$ time approximation scheme.  Roughly speaking, a random variable is $(O(1), 2,4)$ hypercontractive if its (central) fourth moment is bounded by the square of its variance up to a $O(1)$ factor. In the language of statistics, this is also referred to as having the kurtosis bounded by a constant. 
While we later  elaborate on this  notion later, we  mention here that most \emph{common random variables} are $(c,2,4)$ hypercontractive for a constant $c$. Examples include Poisson\footnote{We note that a Poisson with mean $\lambda$ is $(\lambda^{-1/4}, 2,4)$ hypercontractive. Thus, the constant $c$ is upper bounded by $O(1)$ only when the mean of the Poisson is bounded away from $0$. On the other hand, Gaussian, exponential, Laplace etc. are $(c,2,4)$ hypercontractive for a fixed $c>0$. }, Gaussian, exponential, Laplace, uniform on an interval, finitely supported distributions etc. Just to contrast with our earlier remark, our algorithms can easily handle the case when, say, some of the $\{\bX_i\}$ follow a Poisson distribution while others follow a  Gaussian distribution. 
\end{itemize}

\subsection{Our results}
We now formally state our results. 
To do this, we begin by formally defining an instance of the stochastic knapsack problem. 
\begin{definition}~\label{def:knapsack}
An instance of the stochastic knapsack problem is specified  by a list of items $\{(\bX_i, v_i)\}_{i=1}^n$, a capacity $C>0$ and a risk budget $p>0$. Here each $v_i$ is a positive rational number representing the profit of item $i$ and $\bX_i$ is a non-negative random variable representing the ``stochastic size" of the item. 
For $q>0$, let $\feas_q \subseteq 2^{[n]}$ be defined as
$$
\feas_q = \{S \subseteq [n]: \Pr[\sum_{j \in S} \bX_j >C ] \le q\}. 
$$
The task here is to output $S \subseteq \feas_p$ such that 
$$
\sum_{i \in S} v_i = \max_{\tilde{S} \in \feas_p} \sum_{i \in \tilde{S}} v_i. 
$$
Let $\mathcal{D}$ be a class of non-negative real valued random variables and $\mathcal{V}\subseteq \mathbb{R}^+$. If $\{\bX_i\}_{i=1}^n \subseteq \mathcal{D}$ and $\{v_i\}_{i=1}^n \subseteq \mathcal{V}$, then we say that it is an instance of \emph{type $(\mathcal{D}, \mathcal{V})$.}
\end{definition} 

\begin{definition}
Given an instance of stochastic knapsack as in Definition~\ref{def:knapsack}, we say that an algorithm outputs an $(\epsilon,0)$ approximation if it outputs $S \in \feas_{p+\epsilon}$ such that 
$
\sum_{i \in S} v_i \ge \max_{\tilde{S} \in \feas_{p}} \sum_{i \in \tilde{S}} v_i. 
$ An algorithm is said to output an $(\epsilon, \epsilon)$ approximation if it outputs $S \in \feas_{p+\epsilon}$ such that 
$
\sum_{i \in S} v_i \ge (1-\epsilon) \cdot \max_{\tilde{S} \in \feas_{p}} \sum_{i \in \tilde{S}} v_i. 
$
\end{definition}
Thus, in an $(\epsilon,0)$ approximation algorithm, we only relax the overflow probability (by an additive $\epsilon$) whereas an $(\epsilon, \epsilon)$ approximation algorithm relaxes both the profit and the 
overflow probability. Note that crucially neither type of approximation relaxes the capacity constraint. 

\subsubsection{Approximation scheme for Bernoulli random variables}
Our first result is an $(\epsilon,0)$ approximation algorithm when $\{\bX_i\}$ are Bernoulli random variables. More formally, let $\mathcal{D}_B$ be the class of Bernoulli random variables and let $\mathbb{Q}^+$ be the set of positive rational numbers. Then, we have the following theorem. 
\begin{theorem}~\label{thm:Bernoulli1}
There is an $(\epsilon,0)$ approximation algorithm for stochastic knapsack instances of type $(\mathcal{D}_B, \mathbb{Q}^+)$ running in time $\mathsf{poly}(n) \cdot (1/\epsilon)^{\log^2(1/\epsilon)}$. 
\end{theorem}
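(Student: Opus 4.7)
The plan is to reformulate the feasibility check as the analysis of a linear threshold function (LTF) over independent Bernoulli random variables, and then combine the critical-index machinery of Servedio with a Berry--Esseen style central limit theorem.

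Writing $\bX_i = s_i Y_i$ with each $Y_i$ Bernoulli of parameter $p_i$, the overflow event $\{\sum_{i\in S}\bX_i>C\}$ is exactly the LTF event $\sum_{i\in S} s_i Y_i > C$, so the task becomes: find $S$ maximizing $\sum_{i\in S} v_i$ subject to this LTF firing with probability at most $p$. For the unknown optimum $S^*$, sort its items by decreasing $s_i\sqrt{p_i(1-p_i)}$ and let $H^*$ be the prefix up to the critical index $\kappa$ of the resulting LTF at regularity parameter $\Theta(\eps)$. An iterated-peeling argument will show $\kappa = O(\log^2(1/\eps))$. Setting $T^* = S^*\setminus H^*$, regularity of the tail and a Berry--Esseen bound for weighted Bernoulli sums give
\[
\Pr\Bigl[\sum_{i\in S^*} s_i Y_i > C\Bigr] \;=\; \sum_{y_H\in\{0,1\}^{H^*}} \Pr[Y_{H^*}=y_H]\cdot \Bigl(1 - \Phi\Bigl(\tfrac{C-\sum_{i\in H^*} s_i y_i-\mu_{T^*}}{\sigma_{T^*}}\Bigr)\Bigr)\;\pm\;\eps,
\]
with $\mu_{T^*} = \sum_{i\in T^*} s_i p_i$ and $\sigma_{T^*}^2 = \sum_{i\in T^*} s_i^2 p_i(1-p_i)$. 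Hence, up to additive error $\eps$ in the overflow probability, the feasibility of $S^*$ is determined only by the discretized ``head profile'' of $H^*$ and the pair $(\mu_{T^*},\sigma_{T^*}^2)$.

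Given this structural reduction, the algorithm first sorts all items globally by $s_i\sqrt{p_i(1-p_i)}$. It then enumerates over all discretized head profiles, i.e., sequences of at most $\kappa$ triples $(s,p,v)$ whose entries are rounded to a $(1+\eps)$-multiplicative grid independent of $n$; there are $(1/\eps)^{O(\kappa)} = (1/\eps)^{O(\log^2(1/\eps))}$ such profiles. For each profile the algorithm matches it to a concrete head $H$ among the compatible items (taking the profit-maximizing choice), and runs a dynamic program over the remaining items whose state is $(\textrm{profit},\mu,\sigma^2)$ with $\mu,\sigma^2$ rounded to a $\poly(n/\eps)$-size grid. Each final state is tested for feasibility via the displayed formula, and the algorithm outputs the maximum-profit feasible state. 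Summing costs gives $\poly(n)\cdot(1/\eps)^{O(\log^2(1/\eps))}$.

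The main obstacle is proving the head size bound $\kappa = O(\log^2(1/\eps))$ uniformly for the optimum $S^*$, since $S^*$'s LTF need not be regular after a single peel. The remedy is the iterated critical-index argument: one peels successive blocks of heaviest items and shows that each block shrinks the $\ell_2$ mass of the remaining weights by at least a constant factor, so $O(\log(1/\eps))$ blocks of $O(\log(1/\eps))$ items each suffice; the concatenation of these blocks forms $H^*$. A secondary technical check is that the two discretizations---$(1+\eps)$-rounding of head profiles and additive gridding of $(\mu,\sigma^2)$---distort the overflow probability by only $\eps$ when propagated through the Berry--Esseen approximation, which reduces to Lipschitz estimates for the Gaussian CDF together with the observation that $\sigma_{T^*}$ is not too small after we have paid for all head peels.
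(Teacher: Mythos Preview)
Your proposal has a genuine gap. You are essentially applying the paper's strategy for the \emph{hypercontractive} case (Theorem~\ref{thm:hyper-c}) to Theorem~\ref{thm:Bernoulli1}, but Bernoulli random variables are \emph{not} uniformly $(c,2,4)$-hypercontractive: a Bernoulli with parameter $p$ has $\mu_4/\mu_2^2 \approx 1/p$, which blows up as $p\to 0$. The critical-index plus Berry--Ess\'een machinery depends on a uniform hypercontractivity bound, and without it the argument breaks in the small-variance regime. Concretely, take $S^*$ to be all $n$ items with $p_i = 1/n^2$. Then $\sigma^2 \approx 1/n$ and $\mu\approx 1/n$; the true overflow probability at $C=0$ is $\approx 1/n$, while the Gaussian surrogate $\Pr[\mathcal N(\mu,\sigma^2)>0]\approx 1/2$. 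Peeling any $O(\log^2(1/\eps))$ items still leaves $\sigma_{T^*}^2 \approx 1/n$, so the tail is neither close to Gaussian nor close to its mean in the sense you need. Your ``observation that $\sigma_{T^*}$ is not too small after we have paid for all head peels'' is simply false here, and the iterated-peeling bound $\kappa=O(\log^2(1/\eps))$ cannot hold without hypercontractivity. (Also, $\mathcal D_B$ in this paper means $\{0,1\}$-valued Bernoullis, so the weights $s_i$ you introduce are all $1$; the more general ``Bernoulli-type'' class is handled separately.)

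The paper's actual proof of Theorem~\ref{thm:Bernoulli1} uses entirely different tools tailored to Poisson binomial distributions and never invokes the critical index. It splits on whether $\mathsf{Var}(\sum_{i\in S^*}\bX_i)$ exceeds $1/\eps^2$. In the large-variance case it uses a \emph{translated Poisson} approximation (total variation, not just cdf) so that matching the first two moments suffices. In the small-variance case---precisely where your approach fails---it partitions items by whether $p_i$ is near $0$, near $1$, or in the bulk, applies a Poisson approximation (Le Cam) to the extreme buckets, and applies the Roos/Daskalakis--Papadimitriou \emph{moment-matching theorem for PBDs} to the bulk, matching $O(\log(1/\eps))$ power sums $\sum p_i^j$. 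The bulk bucket has at most $O(1/\eps^3)$ items (since each contributes $\Omega(\eps)$ variance), which is what yields the $(1/\eps)^{\log^2(1/\eps)}$ factor while keeping the $n$-dependence polynomial. Your proposal does not contain any analogue of these PBD-specific limit theorems, and a Gaussian approximation alone cannot substitute for them.
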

\red{Our theorem make significant use of results on (central) limit theorems for sums of Bernoulli random variables\footnote{They are similar in flavor but technically very different from say the well-known Berry-Ess\'{e}en theorem} (aka Poisson binomial distributions)} which have recently been a subject of investigation in computational learning theory~\cite{DDS12stoc, DDOST13} and algorithmic game theory~\cite{DP09, DP08}. In particular, such limit theorems (approximately) characterize the distribution of sums in terms of their low-order moments. Combining this with standard dynamic programming techniques gives us the algorithm. It is useful  to mention here that while the specific probabilistic techniques we use here are new (for this line of work), dynamic programming as an algorithmic tool has been a staple in several papers in this area~\cite{GI99, BGK11, LiY13}. 

\subsubsection{Approximation scheme for $k$-supported random variables}

Our second result is an $(\epsilon,0)$ approximation algorithm when $\{\bX_i\}$ are independent random variables, all supported on a common set $A = \{a_1, \ldots, a_k\}$. More formally, given any set $A$ of size $k$, let $\mathcal{D}_{A}$ be the set of random variables supported on $A$. Then, our result is the following. 
\begin{theorem}~\label{thm:Bernoulli2}
There is an $(\epsilon,0)$ approximation algorithm for stochastic knapsack instances of type $(\mathcal{D}_A, \mathbb{Q}^+)$ running in time $(nk/\epsilon)^{O(k \log k + \log(1/\epsilon))^{k+1}}$ for any set $A$ of size $k$.  
\end{theorem}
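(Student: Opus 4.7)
My approach is to generalize the Bernoulli ($k=2$) plan behind \tref{Bernoulli1} to general $k$, replacing sums of Bernoullis with Poisson Multinomial Distributions (PMDs). For each item $I_i$ and each $j\in[k]$, write $p_{i,j}=\Pr[\bX_i=a_j]$ and form the indicator vector $\chi_i=(\mathbf{1}[\bX_i=a_1],\ldots,\mathbf{1}[\bX_i=a_k])\in\{0,1\}^k$; exactly one coordinate of $\chi_i$ is $1$. For any $S\subseteq[n]$, the scalar $\sum_{i\in S}\bX_i=\sum_{j=1}^{k}a_j N_j(S)$ where $\mathbf{N}(S)=\sum_{i\in S}\chi_i$ is a PMD with parameter vectors $\{(p_{i,1},\ldots,p_{i,k})\}_{i\in S}$. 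So $\Pr[\sum_{i\in S}\bX_i>C]$ is a linear functional of that PMD and the whole problem reduces to picking a subset whose PMD maximizes profit subject to that linear-functional constraint.

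The second ingredient is a moment-matching / central-limit type theorem for PMDs, of the sort developed in the recent PMD-learning literature (Daskalakis--Kamath--Tzamos and Diakonikolas--Kane--Stewart), which says that two PMDs whose cross-moments up to total degree $d=O(k\log k+\log(1/\epsilon))$ agree to within a suitably small tolerance are $\epsilon$-close in total variation distance, and hence induce overflow probabilities differing by $O(\epsilon)$. Equipped with this, I would run a dynamic program whose state is the tuple of discretized cross-moments of total degree at most $d$ for the subset selected so far. There are $\binom{d+k}{k}=O(d^k)$ such cross-monomials, each discretized to $(n/\epsilon)^{O(1)}$ buckets, giving $(n/\epsilon)^{O(d^k)}$ states; scanning items in order with the usual include/exclude branching fills this DP in polynomial time per state. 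At the end, for each reachable moment signature I would construct a canonical witness PMD with those moments (for instance via a sparse proper cover of PMDs) and evaluate its overflow probability in $\poly(n,k,1/\epsilon)$ time, finally returning the maximum-profit signature whose overflow is at most $p+\epsilon$.

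The main obstacle is that such a PMD CLT only holds on the anti-concentrated part of the distribution: items whose mean vector $\mu_i=\E[\chi_i]$ has a coordinate very close to $1$ are essentially deterministic and cannot be absorbed into a moment-matching argument. I plan to handle these via the $k$-dimensional analogue of the critical-index machinery of Servedio: sort items by a suitable ``influence'' and argue that any relevant prefix contains at most $\poly(k,\log(1/\epsilon))$ ``heavy'' items, then enumerate over all choices of which heavy items lie in $S$ (this is the $n^{O(k\log(1/\epsilon))}$-type factor) and run the moment DP on the remaining, well-spread items. Composing the heavy-item enumeration with the moment-state DP gives the claimed running time $(nk/\epsilon)^{O(k\log k+\log(1/\epsilon))^{k+1}}$. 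The delicate step I expect to be hardest is proving the multivariate CLT with explicit $k$- and $\epsilon$-dependence sharp enough to produce exactly this exponent, which likely requires adapting the PMD structural results rather than using them as a black box, and carefully interleaving the moment-matching tolerance with the bucket discretization in the DP.
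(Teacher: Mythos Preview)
Your high-level plan (reduce to PMDs, apply a moment-matching theorem of degree $d=O(k\log k+\log(1/\epsilon))$, and dynamic-program over discretized cross-moments) is essentially the paper's approach. However, you have misdiagnosed the obstacle and proposed the wrong fix for it.

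The Daskalakis--Kamath--Tzamos moment-matching theorem does \emph{not} fail on ``nearly deterministic'' items; what it requires is that within the collection of CRVs being compared, the probability vectors do not vary much coordinatewise (specifically, $|\max_i \Pr[\bX_i=a_j]-\min_i \Pr[\bX_i=a_j]|\le 1/(4ek^3)$) and that all items share a common dominant coordinate. An item with $p_{i,1}=0.99$ is perfectly fine as long as it sits alongside other items with $p_{\cdot,1}\approx 0.99$. The paper therefore handles this by \emph{bucketing}, not by critical index: it rounds each $p_{i,j}$ to a multiple of $\epsilon/(4nk)$, partitions $[n]$ into $k\cdot s^k$ buckets (with $s=\lceil 4ek^3\rceil$) according to the argmax coordinate and the $s$-quantized probability vector, and runs a separate moment DP inside each bucket. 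The DKT theorem is then invoked as a black box within each bucket, and the bucket errors are summed. No critical-index machinery, no heavy-item enumeration, and no adaptation of the PMD structural results is needed here; critical index is the tool for \tref{hyper-c}, not this theorem. Your proposed heavy-item enumeration would not work as stated anyway: there can be $\Theta(n)$ items all with the same ``heavy'' profile, and you cannot enumerate subsets of them.

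A second, smaller issue: you plan to ``construct a canonical witness PMD'' from a moment signature and evaluate its overflow probability. The paper avoids this entirely: the DP (\textsf{Pseudo-knapsack}) returns an actual subset $S\subseteq[n]$ achieving each target moment vector, and one simply estimates $\Pr[\sum_{j\in S}\bX_j>C]$ by sampling. This sidesteps the (nontrivial) problem of reconstructing a PMD from moments.
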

As with Theorem~\ref{thm:Bernoulli1}, this algorithm also makes use of very recent moment-matching theorems for so-called Poisson multinomial distributions~\cite{daskalakis2015structure} coupled with standard dynamic programming techniques. Note that  after a suitable translation, any set $|A|$ of size $2$ can be assumed to be $\{0,1\}$. Thus, Theorem~\ref{thm:Bernoulli1} is a special case of Theorem~\ref{thm:Bernoulli2} with a faster running time.

\subsubsection{Approximation scheme for hypercontractive random variables}
Our next result is for a much broader, albeit incomparable class of random variables namely hypercontractive random variables. This is a very widely studied class of random variables in Boolean function analysis (see O'Donnell's book~\cite{ODonnell:2014}). We begin with some brief motivation and definitions. Let us begin with the definition of central moments. 
\begin{definition}~\label{def:moments}
For a real-valued random variable $\bX$ and for $j>1$, we define $\mu_j(\bX) = \mathbf{E}[|\bY|^j]$ where $\bY$ is the random variable $\bY = \bX - \mathbf{E}[\bX]$. In other words, for $j>1$, $\mu_j(\bX)$ is the $j^{th}$ absolute central moment of $\bX$. 
\end{definition}
Note that  $\mu_2(\bX)$ is simply the variance of $\bX$. Now, by Jensen's inequality, it easily follows that for any $j \ge 2$, we have $\mu_j(\bX) \ge (\mu_2(\bX))^{j/2}$. 
Essentially, a real-valued random variable is  said to be hypercontractive if the inequality holds in the opposite direction (with appropriate constants). More formally, we define the notion of $(c,2,4)$ hypercontractivity below. 
\begin{definition}~\label{def:hyper}
A real-valued random variable  $\bX$ is said to be $(c,2,4)$-hypercontractive if 
$\mu_4(\bX) \le c^4 \cdot \mu_2^2(\bX)$.   
\end{definition}
In the language of statistics, this is equivalent to stating that the kurtosis of $\bX$ is at most $c^4$.
The notion of $(c,2,4)$ hypercontractivity is the quantitative analogue of the existence of fourth moment of $\bX$ (provided the second moment exists). 
We refer to $c$ as the hypercontractivity constant for $\bX$. 
As we have said before, most well-known families of random variables such as Poisson, Gaussian, Laplace and exponential random variables are $(O(1),2,4)$ hypercontractive. For the convenience of the reader, in Appendix~\ref{app:hyper}, we list some common families of random variables which are  $(c,2,4)$ hypercontractive along with the (corresponding) explicit value of $c$.

On the other hand, there are real-valued random variables which are not $(c,2,4)$ hypercontractive for any $c$. For example, consider the random variable $\bX$ supported on $[-1,1]$ where the density of $\bX$ is given by $\bX(x) = |x|^{-\frac13}$. While $\mu_2(\bX) =6$, $\mu_4(\bX)$ is unbounded and is thus, not $(c,2,4)$ hypercontractive for any $c$. \red{We mention  that the definition of hypercontractivity we use here is a  weakening of the more standard notion of hypercontractivity from  analysis~\cite{Krakowiak1988, wolff2007hypercontractivity, MOO10}.  The latter definition  is nicer from an analysts' point of view but we prefer the definition here for two reasons: (a) 
it is  more intuitive to understand. (b) For our application, this definition is easier to work with and in fact, given a random variable $\bX$, it is easier to check our condition of hypercontractivity. 
}

For now, we state our main result for stochastic knapsack when $\{\bX_i\}$ are $(c,2,4)$ hypercontractive random variables. Let $\mathcal{D}_c$ be the class of non-negative $(c,2,4)$ hypercontractive random variables. Our main result is the following. 
\begin{theorem}~\label{thm:hyper-c}
There is an $(\epsilon, \epsilon)$ approximation algorithm for stochastic knapsack instances of type $(\mathcal{D}_C, \mathbb{Q}^+)$ running in time $n^{\tilde{O}(c^4/\epsilon^2)}$. 
\end{theorem}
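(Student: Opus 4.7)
The plan is to adapt the critical index machinery of Servedio to the stochastic knapsack setting, using a quantitative central limit theorem for sums of $(c,2,4)$-hypercontractive random variables as the main probabilistic engine and combining it with enumeration plus dynamic programming.

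First I would establish a Berry--Ess\'een-type statement: if $\bY_1,\ldots,\bY_m$ are independent $(c,2,4)$-hypercontractive with $\sigma_i^2 = \Var(\bY_i)$ and $\sigma^2 = \sum_i \sigma_i^2$, then the Kolmogorov distance $\dK\bigl(\sum_i \bY_i,\ N(\mu,\sigma^2)\bigr) = O\!\left(c^2 \cdot \max_i \sigma_i/\sigma\right)$. This follows from the standard Berry--Ess\'een argument, since Cauchy--Schwarz combined with hypercontractivity gives $\mu_3(\bY_i) \le \sqrt{\mu_2(\bY_i)\mu_4(\bY_i)} \le c^2 \mu_2(\bY_i)^{3/2}$, providing the bounded third absolute moments that Berry--Ess\'een requires.

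Next I perform a critical-index-style head/tail split. For any candidate solution $S$, call an item $i\in S$ \emph{heavy} if $\sigma_i^2 \ge (\epsilon^2/c^4)\cdot \sigma_S^2$, and \emph{light} otherwise. Since heavy items contribute at least $(\epsilon^2/c^4)\sigma_S^2$ each but the total is only $\sigma_S^2$, the number of heavy items is at most $c^4/\epsilon^2$. By the first step, the sum of light items is within Kolmogorov distance $\epsilon$ of a Gaussian with matching mean and variance, so up to $\epsilon$ additive error, feasibility only depends on the exact distribution of the heavy sum together with $(\mu,\sigma^2)$ of the light sum.

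The algorithm then proceeds as follows. (a) Guess the scale $V \approx \sigma_S^2$ to within a constant factor, giving $O(\log(n/\epsilon))$ possibilities after discretization. (b) Form the pool of potential heavy items $\mathcal{H}_V = \{i \in [n] : \sigma_i^2 \ge (\epsilon^2/c^4) V\}$ and enumerate all subsets $T \subseteq \mathcal{H}_V$ of size at most $c^4/\epsilon^2$; this yields $n^{\tilde O(c^4/\epsilon^2)}$ guesses. (c) For each guess $(V,T)$, run a dynamic program over $[n]\setminus \mathcal{H}_V$ whose state is a discretized triple $(\hat\mu,\hat\sigma^2,\hat v)$ (cumulative mean, variance, and profit of the selected light items), rounded to $\mathrm{poly}(n,1/\epsilon)$ buckets so the table is polynomial. (d) For each terminal state, approximately check $\Pr[\sum_{i\in T}\bX_i + \bZ > C] \le p+\epsilon$, where $\bZ \sim N(\hat\mu,\hat\sigma^2)$: draw $\mathrm{poly}(1/\epsilon)$ samples $y$ from $\sum_{i\in T}\bX_i$ and average the closed-form Gaussian tails $\Pr[\bZ > C - y]$; Chernoff gives additive $\epsilon$ accuracy. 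Among all feasible triples, keep the one of maximum (rounded) profit.

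Finally, I would do the error accounting: the CLT step costs $\epsilon$ in overflow probability (because $|S\cap\mathcal{H}_V|\le c^4/\epsilon^2$ forces $\max_{i\in L(S)}\sigma_i/\sigma_{L(S)} = O(\epsilon/c^2)$); rounding $(\hat\mu,\hat\sigma^2)$ and $\hat v$ costs $O(\epsilon)$ in profit; and Monte Carlo contributes another $\epsilon$ to the overflow probability. Summing and rescaling $\epsilon$ by a constant gives an $(\epsilon,\epsilon)$-approximation. The main obstacle, and the step demanding the most care, is showing that the Kolmogorov-distance guarantee from Step~1 translates into a \emph{uniform} additive error on the overflow probability when the light sum is convolved with the exact (non-Gaussian) heavy sum at an arbitrary threshold $C$, and that the $(\hat\mu,\hat\sigma^2)$ discretization does not distort this predicate by more than the budgeted $\epsilon$; this requires a Lipschitz-type estimate on the Gaussian tail as a function of $(\mu,\sigma^2)$ and controlling the quantiles at which the convolution is evaluated.
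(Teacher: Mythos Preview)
Your high-level architecture (Berry--Ess\'een for hypercontractive summands, enumerate a ``head'' of bounded size, dynamic program over the ``tail'' tracking discretized mean and variance) is the same as the paper's, but your head/tail split has a genuine gap.

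You define an item $i\in S$ to be heavy when $\sigma_i^2 \ge (\epsilon^2/c^4)\,\sigma_S^2$ and then claim the light sum satisfies $\max_{i\in L(S)} \sigma_i/\sigma_{L(S)} = O(\epsilon/c^2)$. This inference is false: your threshold bounds $\sigma_i^2$ relative to $\sigma_S^2$, not relative to $\sigma_{L(S)}^2$. If a handful of heavy items carry, say, $99\%$ of the total variance, then $\sigma_{L(S)}^2 \approx 0.01\,\sigma_S^2$ while a light item may still have $\sigma_i^2$ just below $(\epsilon^2/c^4)\sigma_S^2$, giving $\sigma_i^2/\sigma_{L(S)}^2 \approx 100\,\epsilon^2/c^4$; Berry--Ess\'een then gives you nothing useful. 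Matching $(\hat\mu,\hat\sigma^2)$ of a non-Gaussian tail does not control its overflow probability, so the DP's feasibility check breaks.

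This is precisely why the paper uses the \emph{sequential} critical index (Definition~\ref{def:critical-index}): you repeatedly peel off the largest-variance item until the ratio $\mu_2(\bX_{j_K})/\sum_{\ell\ge K}\mu_2(\bX_{j_\ell})$ first drops below $\epsilon^2/c^4$, and this may take as many as $L=(c^4/\epsilon^2)\log(1/\epsilon)$ steps, not $c^4/\epsilon^2$. Two cases then arise. If the critical index $K<L$, the tail is regular by construction and Berry--Ess\'een applies (Lemma~\ref{lem:etype-small}); this is essentially your argument, done correctly. If $K\ge L$, the tail may \emph{never} look Gaussian, but the variance has geometrically decayed by a factor $(1-\epsilon^2/c^4)^{L}\le\epsilon^2$, so the tail is nearly a point mass; feasibility is then controlled not by a CLT but by \emph{anticoncentration} of the head $\sum_{k<L}\bX_{j_k}$, established via Lemma~\ref{lem:hypercontractive} and the Kolmogorov--Rogozin inequality (Proposition~\ref{prop:anti} and Lemma~\ref{lem:etype-large}). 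Your proposal is missing this second case entirely, and without it the scheme is incomplete.
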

Thus for $(c,2,4)$ hypercontractive random variables, our approximation algorithm relaxes both the profit as well as the overflow probability. Crucially, our algorithm does not relax the capacity constraint of the knapsack. We now highlight an important corollary of this theorem. Namely, let us say a finitely supported distribution $\bX$ is $\alpha$-balanced if $\min_{x: \bX(x) \not = 0} \bX(x) = \alpha$ i.e., the probability of any support point is at least $\alpha$. In Proposition~\ref{prop:hyper-finite}, we prove that any $\alpha$-balanced distribution is $(\alpha^{-1/4}, 2,4)$-hypercontractive. As a corollary of Theorem~\ref{thm:hyper-c}, we get an approximation scheme for stochastic knapsack when $\{\bX_i\}$ are $\alpha$-balanced which runs in time $n^{O((\alpha \epsilon^2)^{-1})}$. Note that every finitely supported distribution is $\alpha$-balanced for some fixed $\alpha>0$. Thus, this implies the following corollary. 
\begin{corollary}~\label{corr:finite}
There is an $(\epsilon,\epsilon)$ approximation algorithm for stochastic knapsack instances where the random variables $\{\bX_i\}$ are finitely supported. Further, the running time is  $n^{O((\alpha \epsilon^2)^{-1})}$ where $\alpha$ is defined as $\alpha = \mathop{\min}_{i, x: \bX_i(x) \not =0} \bX_i(x)$. 
\end{corollary}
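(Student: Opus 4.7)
The plan is to reduce the corollary directly to Theorem~\ref{thm:hyper-c} by establishing the hypercontractivity claim stated in the paragraph preceding it, namely that an $\alpha$-balanced distribution is $(\alpha^{-1/4},2,4)$-hypercontractive. Once this is in hand, the corollary follows by simply plugging $c = \alpha^{-1/4}$ into Theorem~\ref{thm:hyper-c} and observing $\tilde{O}(c^4/\epsilon^2) = \tilde{O}((\alpha\epsilon^2)^{-1})$.

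The main step is therefore the proof of Proposition~\ref{prop:hyper-finite}. Fix a finitely supported random variable $\bX$ taking values $x_1,\dots,x_k$ with probabilities $p_1,\dots,p_k$, where by assumption $p_i \ge \alpha$ whenever $p_i > 0$. Set $y_i = x_i - \mathbf{E}[\bX]$ so that $\mu_2(\bX) = \sum_i p_i y_i^2$ and $\mu_4(\bX) = \sum_i p_i y_i^4$. The key observation is a two-line bound: on the one hand,
\[
\mu_4(\bX) \;=\; \sum_i p_i y_i^2 \cdot y_i^2 \;\le\; \bigl(\max_j y_j^2\bigr) \cdot \sum_i p_i y_i^2 \;=\; \bigl(\max_j y_j^2\bigr)\cdot \mu_2(\bX),
\]
and on the other hand, for every index $j$ with $p_j>0$ we have $\mu_2(\bX) \ge p_j y_j^2 \ge \alpha\, y_j^2$, so $\max_j y_j^2 \le \mu_2(\bX)/\alpha$. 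Combining the two yields $\mu_4(\bX) \le \alpha^{-1}\mu_2(\bX)^2 = (\alpha^{-1/4})^4\, \mu_2(\bX)^2$, which is precisely $(\alpha^{-1/4},2,4)$-hypercontractivity in the sense of Definition~\ref{def:hyper}.

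Given the proposition, the corollary is almost immediate. For any instance in which every $\bX_i$ is finitely supported, define $\alpha \eqdef \min_{i,x: \bX_i(x)\neq 0} \bX_i(x)$; this minimum is positive because each support set is finite and the problem instance has only $n$ items. By the proposition, each $\bX_i \in \mathcal{D}_c$ for $c = \alpha^{-1/4}$, so the hypothesis of Theorem~\ref{thm:hyper-c} is met. Invoking that theorem gives an $(\epsilon,\epsilon)$-approximation in time $n^{\tilde{O}(c^4/\epsilon^2)} = n^{\tilde{O}(1/(\alpha\epsilon^2))}$, as claimed.

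The only place that requires some care is the hypercontractivity computation itself; the subsequent reduction to Theorem~\ref{thm:hyper-c} is entirely mechanical. Note also that the bound is essentially tight: a two-point distribution that places mass $\alpha$ at one value and $1-\alpha$ at another has $\mu_4/\mu_2^2 = \Theta(1/\alpha)$ as $\alpha\to 0$, so the dependence on $\alpha$ in the final running time cannot be substantially improved through this reduction.
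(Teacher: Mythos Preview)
Your proposal is correct and follows the same overall route as the paper: establish that an $\alpha$-balanced distribution is $(\alpha^{-1/4},2,4)$-hypercontractive (the paper's Proposition~\ref{prop:hyper-finite}) and then invoke Theorem~\ref{thm:hyper-c}. The only difference is in the one-line inequality used to prove the proposition: the paper bounds $\sum_i \beta_i y_i^4 \le (\sum_i \sqrt{\beta_i}\, y_i^2)^2 \le \alpha^{-1}(\sum_i \beta_i y_i^2)^2$, whereas you pull out $\max_j y_j^2$ and bound it by $\mu_2(\bX)/\alpha$; both arguments are equally elementary and arrive at the identical bound $\mu_4 \le \alpha^{-1}\mu_2^2$.
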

This should be compared to the results of \cite{GI99, ChekuriK05} (which we discuss shortly) where they obtained  a polynomial time approximation scheme for the case when each $\{\bX_i\}$ is supported on $0$ and another point (this point can depend on $i$). They call such random variables ``Bernoulli-type" random variables. 
On one hand, these papers obtain a fully polynomial time approximation scheme in this setting  whereas our running time depends on the ``balanced-ness" parameter of the random variables. 
On the other, the algorithm in \cite{GI99, ChekuriK05}   relaxes the capacity constraint, whereas ours does not and in fact, ours yields an efficient approximation scheme for any constant sized support. \red{Further, our algorithms apply to a much broader class of random variables, and are not tailored towards Bernoulli-type random variables.} 

Theorem~\ref{thm:hyper-c} follows by a reduction to the following theorem which obtains an $(\epsilon,0)$ approximation when the profit of each item is a polynomially bounded integer. Let $\mathbb{Z}_M^+$ be the set of non-negative integers bounded by $M$. We prove the following theorem.  
\begin{theorem}~\label{thm:hyper-main-bounded} 
There is an $(\epsilon,0)$ approximation algorithm for stochastic knapsack instances of type $(\mathcal{D}_C, \mathbb{Z}_M^+)$ running in time $n^{\tilde{O}(c^4/\epsilon^2)} \cdot \mathsf{poly}(M)$.  
\end{theorem}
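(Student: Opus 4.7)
\emph{Proof plan.} My strategy is to reduce the overflow-probability test to a Gaussian surrogate---justified by a Berry-Ess\'een argument that leverages hypercontractivity---and then to solve a dynamic program over a grid of surrogate means, variances, and accumulated integer profits. By Cauchy-Schwarz and $(c,2,4)$-hypercontractivity we have $\mu_3(\bX_i)\le\sqrt{\mu_2(\bX_i)\,\mu_4(\bX_i)}\le c^2\,\mu_2(\bX_i)^{3/2}$, so the classical Berry-Ess\'een bound implies that for any independent collection $T$ whose largest individual variance is at most $\tau^2\sigma_T^2$ (with $\sigma_T^2:=\sum_{j\in T}\mu_2(\bX_j)$), the CDF of $\sum_{j\in T}\bX_j$ agrees with that of $\mathcal N(\mu_T,\sigma_T^2)$ up to $O(c^2\tau)$. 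Choosing $\tau=\widetilde\Theta(\epsilon/c^2)$ makes this discrepancy $O(\epsilon)$ and is what ultimately drives the parameter dependence in the theorem.

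\emph{Critical-index enumeration.} Let $S^\star$ be the (unknown) optimum, and call $i$ \emph{heavy} in $S^\star$ if $\mu_2(\bX_i)>\tau^2\Var(\sum_{j\in S^\star}\bX_j)$. Summing variances inside $S^\star$ forces $|H^\star|\le 1/\tau^2=\widetilde{O}(c^4/\epsilon^2)$. The algorithm enumerates all subsets $H\subseteq[n]$ of this size---$n^{\widetilde{O}(c^4/\epsilon^2)}$ candidates---and within each also guesses a $(1+\epsilon)$-multiplicative approximation $V$ to $\Var(\sum_{j\in S^\star}\bX_j)$. After light preprocessing that bounds the dynamic range of relevant $V$ by $\poly(n/\epsilon)$, the number of $V$-guesses is only $\poly(\log(n/\epsilon))$. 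Given $(H,V)$, only items outside $H$ with $\mu_2(\bX_i)\le\tau^2 V$ are eligible for the tail $T$, and Berry-Ess\'een on $T$ reduces the overflow test to evaluating a Gaussian CDF at $C$.

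\emph{Tail DP.} The DP processes the eligible items in any fixed order and maintains a state $(v,\widetilde\mu,\widetilde{\sigma^2})$ recording the accumulated integer profit $v\in\{0,\ldots,nM\}$, the accumulated mean rounded to an additive grid of width $\Theta(\epsilon\sqrt{V}/n)$, and the accumulated variance rounded to a grid of width $\Theta(\epsilon V/n)$. Across at most $n$ items the rounding errors are $O(\epsilon\sqrt{V})$ in mean and $O(\epsilon V)$ in variance; since the CDF of $\mathcal N(\cdot,\Theta(V))$ at $C$ is $O(1/\sqrt{V})$-Lipschitz in its mean and $O(1/V)$-Lipschitz in its variance parameter, the tested Gaussian overflow probability differs from the true Gaussian probability by only $O(\epsilon)$. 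At termination we accept a state iff $\Pr[\mathcal N(\mu_H+\widetilde\mu,\sigma^2_H+\widetilde{\sigma^2})>C]\le p+O(\epsilon)$ and return the best-profit accepted $H\cup T$ over all guesses. The DP table has $\poly(n,M,1/\epsilon)$ entries per $(H,V)$, so the total runtime is $n^{\widetilde{O}(c^4/\epsilon^2)}\cdot\poly(M)$.

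\emph{Main obstacle.} The main difficulty is the small-variance regime: the Gaussian surrogate is not meaningful when $V$ is tiny compared to the slack $C-\mu_{S^\star}$, and the mean-grid width $\Theta(\epsilon\sqrt{V}/n)$ degenerates. The plan is to dispatch $V\le\epsilon^{O(1)}C^2$ separately by invoking Chebyshev, which shows $\sum_{j\in S^\star}\bX_j$ concentrates within $O(\epsilon C)$ of its mean; the problem then reduces to a deterministic knapsack with sizes $\E[\bX_i]$ rounded to multiples of $\Theta(\epsilon C/n)$, which admits the classical pseudo-polynomial FPTAS in $\poly(n/\epsilon,M)$ time. A secondary subtlety is that the tail-eligibility threshold $\tau^2 V$ itself depends on $V$, so the head enumeration and tail DP must sit inside the outer loop over $V$; this only incurs a $\poly(\log(n/\epsilon))$ overhead and does not affect the final exponent.
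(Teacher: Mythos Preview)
There is a real gap in the heavy/light decomposition. You declare $i$ heavy when $\mu_2(\bX_i)>\tau^2 V$ with $V$ the \emph{total} variance of $S^\star$, but the Berry--Ess\'een error on the tail $T$ is $c^2\max_{i\in T}\sqrt{\mu_2(\bX_i)}/\sigma_T$ with $\sigma_T^2=\sum_{i\in T}\mu_2(\bX_i)$ the \emph{tail} variance, and $\sigma_T^2$ can be far smaller than $V$. Concretely, if the variances in $S^\star$ decay geometrically as $V/2,\,V/4,\,V/8,\ldots$, then only $O(\log(1/\tau))$ items are heavy, $\sigma_T^2=\Theta(\tau^2 V)$, the largest tail item still has variance $\Theta(\tau^2 V)$, and the Berry--Ess\'een error on $T$ is $\Theta(c^2)$ rather than $O(\epsilon)$. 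Hence two tails with matching rounded mean and variance need not have close overflow probabilities, and the DP loses its guarantee. Your small-$V$ Chebyshev fallback does not cover this case, since the total $V$ here is arbitrary.

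What the paper does instead is peel items one at a time in decreasing order of variance until the largest remaining one is at most $(\epsilon^2/c^4)$ times the \emph{remaining} variance (the $\epsilon$-critical index of Definition~\ref{def:critical-index}). If this happens within $L=(c^4/\epsilon^2)\log(1/\epsilon)$ steps, the tail genuinely satisfies the Berry--Ess\'een hypothesis relative to $\sigma_T^2$ and your Gaussian argument goes through (Lemma~\ref{lem:etype-small}). If not, an entirely different mechanism is needed (Lemma~\ref{lem:etype-large}, Proposition~\ref{prop:anti}): with $L$ head items each carrying at least an $\epsilon^2/c^4$ fraction of the then-current variance, the Kolmogorov--Rogozin inequality (Lemma~\ref{lem:Kolmogorov}) makes the head anti-concentrated at an intermediate scale $\sigma_K$, while the tail variance has decayed geometrically below $\epsilon^2\sigma_K^2$, so Chebyshev lets one replace the tail by its expectation at cost $O(\epsilon)$ in CDF distance. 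This anticoncentration-of-the-head step is the missing ingredient in your proposal. A smaller issue: your acceptance test $\Pr[\mathcal N(\mu_H+\widetilde\mu,\sigma^2_H+\widetilde{\sigma^2})>C]$ treats the enumerated head as Gaussian, which it is not; the paper instead checks feasibility by sampling (Proposition~\ref{prop:verify}) and invokes the Gaussian comparison only in the analysis to relate the returned $S$ to $S_{\mathrm{opt}}$.
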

The reduction from Theorem~\ref{thm:hyper-c} to Theorem~\ref{thm:hyper-main-bounded} is essentially the standard reduction that yields a polynomial time approximation scheme for (the standard) knapsack by reducing to knapsack with polynomially bounded weights.
Thus, our focus will essentially be on proving Theorem~\ref{thm:hyper-main-bounded}. We now give a brief description of prior work followed by a high level overview of our techniques. 

\subsection*{Prior work} 
Motivated by the problem of  allocating bandwidth to bursty connections, Kleinberg, Rabani and Tardos~\cite{KRT97} were the first to study the stochastic knapsack problem. They proved several incomparable results for the case when $\{\bX_i\}$ are Bernoulli type random variables\footnote{Recall that $\bX_i$ is said to be Bernoulli-type if its support  size is at most $2$ and one of the points in the support is $0$.}. In particular they obtained a $\log(1/p)$ approximation without relaxing either the capacity constraint or the overflow probability (where $p$ is the overflow probability). They also obtained a $O(1/\epsilon)$ approximation by either relaxing the overflow probability to $p^{1-\epsilon}$ or the capacity constraint by a factor of $(1+\epsilon)$. 
Soon thereafter, Goel and Indyk~\cite{GI99} studied this problem for Poisson, exponential and Bernoulli-type random variables and obtained a PTAS for the first two and a quasi-PTAS for the last one (this was improved subsequently to a PTAS 
  by Chekuri and Khanna~\cite{ChekuriK05}). The main caveat of their result was that in addition to relaxing the overflow probability, for both exponential and Bernoulli-type distributions, they relaxed the capacity constraint \red{by a multiplicative factor of $(1+\epsilon)$} as well. 
  
  Subsequently, there were several papers which explored other models of stochastic knapsack, particularly, the power of adaptive strategies in this context~\cite{DeanGV08, BGK11, gupta2011approximation, dean2005adaptivity, LiY13}. In terms of progress on the fixed set version (considered in this paper), Goyal and Ravi~\cite{Goyal:2010} obtained a PTAS when the item sizes are Gaussian.   Finally, Bhalgat, Goel and Khanna~\cite{BGK11} obtained a PTAS which works for any random variable but relaxes all the three parameters, namely  the capacity constraint, optimal value and overflow probability by a factor of $(1+\epsilon)$. The running time in \cite{BGK11} was $n^{O_\epsilon(1)}$ where $O_{\epsilon}(1)$ is doubly exponential in $\epsilon$. This was improved to a singly exponential in $\epsilon$ by Li and Yuan~\cite{LiY13} (using different techniques). To summarize, the results of \cite{BGK11, LiY13} essentially settle the case of stochastic knapsack if one is willing to relax the capacity constraint. However, without relaxing the capacity constraint, we knew  of approximation schemes in precisely two cases: When the item sizes are Gaussian~\cite{Goyal:2010} or when they are Poisson~\cite{GI99}.
  \red{In fact, prior to this work the best known algorithm that does not violate capacity constraints achieved a $O(\log(1/p))$ approximation to the objective even when the item sizes are Bernoulli random variables.}
  
  \subsection*{Overview of our techniques}  
  
  \textbf{Proof overview of Theorem~\ref{thm:Bernoulli1} and Theorem~\ref{thm:Bernoulli2}:}  At a very high level, there are two main technical ideas in this paper. We start with the first  idea, which is used to prove Theorem~\ref{thm:Bernoulli1} and Theorem~\ref{thm:Bernoulli2} and is significantly easier to explain. 
  The main plan is to exploit limit theorems for sums of independent random variables (of the type appearing in Theorem~\ref{thm:Bernoulli1} and Theorem~\ref{thm:Bernoulli2}). 
  In a nutshell, these limit theorems approximately characterize the distribution of the sum in terms of its low order moments. This characterization is then used to convert the stochastic knapsack problem into a deterministic multidimensional knapsack problem. However, we know of pseudopolynomial time algorithms for the latter which translates into an approximation scheme for  stochastic knapsack.  In fact, the idea of converting stochastic knapsack into multidimensional deterministic knapsack (via different means) can be traced back to the work of Goel and Indyk~\cite{GI99}.
  \red{The novel aspect of our work here is the use of ideas and tools from limit theorems to perform this conversion.} 
  
  To explain the idea in a little more detail, let us first focus on Theorem~\ref{thm:Bernoulli1} (i.e.,~Bernoulli sized items). Consider a stochastic knapsack instance (as in Definition~\ref{def:knapsack}) where the items are $\{(\bX_\ell, v_\ell)\}_{\ell=1}^n$, capacity is $C$ and the overflow probability is $p$. Also, assume that all the probabilities in $\{\bX_{\ell}\}$ are rounded to the nearest multiple of $\epsilon/ (4n)$. This can be accomplished by losing at most an additive $\epsilon/4$ in the overflow probability of any subset of $[n]$.
  Now, assume that  $S_{opt} \subseteq [n]$ is the optimal solution to this problem. Consider the random variable $\bZ_{S_{opt}} = \sum_{\ell \in S_{opt}} \bX_\ell$. Our algorithm is split into two cases: (i) when  $\mathsf{Var} (\bZ_{S_{opt}}) \ge 1/\epsilon^2$ and (ii) when  $\mathsf{Var} (\bZ_{S_{opt}}) < 1/\epsilon^2$.  
In the first case, our algorithm finds a set $S \subseteq [n]$ such that for $\bZ_{S} = \sum_{\ell \in S} \bX_\ell$, its first two moments  are the same as 
$\bZ_{S_{opt}}$ and $\sum_{\ell \in S} v_\ell \ge \sum_{\ell \in S_{opt}} v_\ell$ (this can be accomplished by dynamic programming). 
Note that while we do not know the values of the first two moments of $\bZ_{S_{opt}}$, there are only $\mathsf{poly}(n/\epsilon)$ possibilities for these as all the probabilities in $\{\bX_\ell\}$ are integral multiples of $\epsilon/(4n)$. Thus, we can exhaustively try out all possibilities and find out a set $S$ for each possibility. The key fact that we use here is a so-called discrete central limit theorem for sums of Bernoulli random variables (Lemma~\ref{lem:CLT1}): 
Namely, if the first two moments of $\bZ_{S_{opt}}$ and $\bZ_S$ are the same (and the variance is at least $1/\epsilon^2$), then they are $\epsilon$-close to each other in total variation distance. Thus, the overflow probability of $\bZ_S$ is at most $\epsilon$ more than $\bZ_{S_{opt}}$. This finishes the first case. 

The algorithm in the second case is quite similar to the first case but here we find a set $S$ such that the first $O(\log(1/\epsilon))$ moments of $\bZ_S$ match those of $\bZ_{S_{opt}}$ (instead of just the first two moments as was done in case (1)). The key fact on which we rely here is a recent so-called ``moment matching theorem" of Daskalakis and Papadimtriou~\cite{DP09} (Lemma~\ref{lem:moment-matching}) which essentially says that matching $O(\log(1/\epsilon))$ moments implies $\epsilon$-closeness in total variation distance between $\bZ_S$ and $\bZ_{S_{opt}}$ (we are glossing over an additional technical condition required to apply this theorem and indeed our algorithm is also somewhat more involved). In fact, naively applying Lemma~\ref{lem:moment-matching} results in a running time of $(n/\epsilon)^{\log^2(1/\epsilon)}$ in Case (2). To instead get a running time of $\mathsf{poly}(n) \cdot (1/\epsilon)^{\log^2(1/\epsilon)}$ (as claimed in Theorem~\ref{thm:Bernoulli1}), some additional complication is required and one instead has to apply a so-called Poisson approximation theorem~\cite{LeCam:60} in tandem with the moment matching theorem of \cite{DP09}. 

We do not discuss the proof for Theorem~\ref{thm:Bernoulli2} here but at a high level, it also relies on a ``moment-matching theorem" similar to case (ii) of Theorem~\ref{thm:Bernoulli1}. In particular, we use a very recent ``moment-matching theorem" for so-called Poisson multinomial distributions (PMDs) due to Daskalakis, Kamath and Tzamos~\cite{daskalakis2015structure}. The actual theorem statement is somewhat more complicated, so we refrain from discussing it here any further.  
We also mention that Li and Yuan~\cite{LiY13} had used a similarly flavored idea for stochastic knapsack: Namely, they used a so called compound Poisson approximation~\cite{LeCam:60} to convert stochastic knapsack into (deterministic) multidimensional   knapsack. While their method of approximation is quite general and in fact applies to any random variable, their guarantee is  weaker  and in fact, they relax the capacity constraint even when all the sizes $\{\bX_i\}$ are Bernoulli random variables.

\textbf{Proof overview of Theorem~\ref{thm:hyper-c}:}  As we have said, the proof of Theorem~\ref{thm:hyper-c} essentially reduces to proving Theorem~\ref{thm:hyper-main-bounded} i.e., where the profits are non-negative integers bounded by $M$. So, let us focus on the proof of  Theorem~\ref{thm:hyper-main-bounded}. 
\red{The main idea here is a new connection between the stochastic knapsack problem and the structural analysis of halfspaces.}
We begin by recalling that halfspaces are Boolean functions $f: \mathbb{R}^{n} \rightarrow \{0,1\}$  which are of the form $f(x) = \mathsf{sign}(\sum_{i=1}^n w_i x_i -\theta)$ where all of $w_1, \ldots, w_n$ and $\theta \in \mathbb{R}$.\footnote{the sign function outputs $1$ iff its argument is positive.} To understand the connection between halfspaces and stochastic knapsack,  let us consider an instance of the stochastic knapsack problem (Definition~\ref{def:knapsack}) of type $(\mathcal{D}_c, \mathbb{Z}_M)$. In other words, the items are given by $\{(\bX_\ell, v_\ell)\}$, the knapsack capacity is $C$ 
and the overflow probability is $p$ and the item sizes $\{\bX_\ell\}$ are now $(c,2,4)$ hypercontractive random variables. Now, consider any set $S \subseteq [n]$ which is feasible i.e.,~$\Pr[\sum_{\ell \in S} \bX_i >C] \le p$. This is equivalent to saying that 
the halfspace $f_S$ defined as
$f_S(\bX_1, \ldots, \bX_n) = \mathsf{sign}(\sum_{\ell \in S} \bX_\ell -C)$ is $1$ with probability at most $p$.
Ostensibly, these halfspaces  appear to be simple as all the weights $w_1, \ldots, w_n \in \{0,1\}$. However, this
simplicity is superficial as we allow $\{\bX_\ell\}$ to be arbitrary $(c,2,4)$ hypercontractive random variables and  in fact, if $\bX_\ell$ is $(c,2,4)$ hypercontractive, so is $w \cdot \bX_\ell$ for any $w \in \mathbb{R}$. 

The high level idea in the proof  of Theorem~\ref{thm:hyper-main-bounded}
is to exploit the so-called ``structure versus randomness" phenomenon for halfspaces which was introduced in the  influential work of Servedio~\cite{Servedio:07cc} and has subsequently played a crucial role in the recent developments in the complexity theoretic analysis of halfspaces~\cite{Servedio:07cc, DGJ+:10, de2014nearly, MZstoc10, MORS:10} (we explain this phenomenon a little later). Results in this line of work have \red{mostly} looked at halfspaces of the form $g(\bX_1, \ldots, \bX_n) = \mathsf{sign} (\sum_{i \in S} \bX_i)$ where $S \subseteq [n]$ and each $\bX_i$ is a so-called ``balanced Bernoulli type" random variable i.e., random variables of the form $w_i \cdot \bZ_i$ where $\bZ_i$ is a  Bernoulli random variable such that $\Pr[\bZ_i=0]$ is bounded away from $0$ and $1$ by a positive constant. In fact, most of \red{the work in complexity theory} considers the case when $\Pr[\bZ_i=0] = \Pr[\bZ_i=1]=1/2$.  Starting with the observation that balanced Bernoulli type random variables are $(O(1),2,4)$ hypercontractive, we generalize a significant fraction of the machinery from \cite{Servedio:07cc} to  arbitrary $(c,2,4)$ hypercontractive random variables.
\red{Indeed, we believe that a key conceptual contribution of this work is to realize the connection between stochastic optimization (specifically stochastic knapsack) and the ``structure versus randomness" paradigm of \cite{Servedio:07cc} for halfspaces on hypercontractive random variables. Finally, we also mention that in the context of constructing pseudorandom generators, Gopalan \emph{et al.}~\cite{GOWZ10} also extended the machinery of Servedio~\cite{Servedio:07cc} 
to hypercontractive random variables. However, they work with the stronger notion of hypercontractivity~\cite{wolff2007hypercontractivity, Krakowiak1988} alluded to earlier. While   there is some parallel between our extension of the machinery of \cite{Servedio:07cc} and that of Gopalan \emph{et al.}, it is not clear if their extension can be adapted to our setting in a black box manner.  Finally, we would like to emphasize that the main thrust of this paper  is not on Boolean function analysis but more on how it can serve as an effective tool in stochastic design problems. 

}

We now briefly explain the structure versus randomness paradigm in the context of stochastic knapsack problem as well as how it is algorithmically useful. Let us assume that  $S_{opt} = \{{j_1
}, \ldots, {j_R}\} \subseteq [n]$ is the optimal solution.  Assume that $\Var(\bX_{j_1}) \ge \ldots \ge \Var(\bX_{j_R})$. There are now two possibilities: (i) The first is that $\Var(\bX_{j_1})$ is small compared to the total sum of the variances $\sum_{\ell \in S_{opt}} 
\Var(\bX_{j_1})$. In this case, the Berry-Ess\'{e}en theorem~(Theorem~\ref{corr:BE}) implies that the distribution $\bZ_{S_{opt}} = \sum_{\ell \in S_{opt}} \bX_\ell$ (approximately) follows a Gaussian distribution. We remark that in order to get non-trivial error bounds from the Berry-Ess\'{e}en theorem, we need 
that the random variables $\{\bX_\ell\}$ are $(c,2,4)$ hypercontractive. Now, observe that a Gaussian is completely characterized by its mean and its variance (i.e., its first two moments). Using an idea similar to case (i) of Theorem~\ref{thm:Bernoulli1}, we can use dynamic programming to find another set $S \subseteq [n]$ such that $\bZ_S = \sum_{\ell \in S} \bX_\ell$ has the same first and second moments as $\bZ_{S_{opt}}$ and such that $\sum_{\ell \in S} v_\ell \ge  \sum_{\ell \in S_{opt}} v_\ell$. By applying the Berry-Ess\'{e}en theorem, we obtain that the overflow probability of $\bZ_{S}$ is at most $\epsilon$ more than that of $\bZ_{S_{opt}}$ which completes the proof of this case. This description here is significantly simplified and glosses over some key technical difficulties (which is the reason we get an $(\epsilon, \epsilon)$ approximation as opposed to an $(\epsilon,0)$ approximation in Theorem~\ref{thm:hyper-c}). 

The other possibility is if $\bX_{j_1}$ constitutes a significant fraction of the variance of $\bZ_{S_{opt}}$. In that case, the random variable $\bZ_{S_{opt},2}$ 
defined as $\sum_{\ell \in S_{opt} \setminus j_1} \bX_\ell$ has a noticeably smaller variance  than $\bZ_{S_{opt}}$ and in essence, ``we have made progress". We can now recursively look at the random variable $\bZ_{S_{opt},2}$ and apply the same argument as before. 
Intuitively, such a process can only continue for a  bounded number of steps because in each step, we ``cut-off a sizeable fraction of the variance". In particular, we show that after $L = \tilde{O}(c^4/\epsilon^2)$ such steps, the random variable $\bZ_{S_{opt}, L}$ essentially behaves like a constant.
This argument can be formalized by the notion of critical index (Definition~\ref{def:critical-index}) 
and is an extension of the eponymous notion from \cite{Servedio:07cc}.   
Roughly speaking, the critical index is   the smallest integer $K$  such that  $\bZ_{S_{opt},K}$ behaves like a Gaussian random variable. 
The reason the notion of critical index is algorithmically useful is the following. Define $T = \min \{K,L\}$. Since $T$ is upper bounded by a constant, the algorithm can guess the indices $\{j_1, \ldots, j_T\}$. On the other hand, the random variable $\bZ_{S_{opt},T}$ either behaves like a constant (if $K \ge L$) or like a Gaussian random variable (if $K<L$). Both these cases can be handled using dynamic programming techniques discussed earlier.

 \red{At a thematic level, the  strategy follows 
the usual ``critical-index" machinery of \cite{Servedio:07cc}. However,  simultaneously extending this machinery to arbitrary $(c,2,4)$ hypercontractive random variables as well as adapting 
it in the context of stochastic knapsack poses several challenges (which are difficult to explain at this level of detail). Also, we introduce some new technical tools such as the Kolmogorov-Rogozin inequality~(Lemma~\ref{lem:Kolmogorov}) etc. which do not seem to have been explicitly used before in this line of work and can potentially be useful elsewhere. }


  Finally, we mention that the critical-index machinery was also used by Daskalakis \emph{et al.}~\cite{DDDMS:14} in the context of stochastic optimization; in particular, to obtain an approximation scheme for so-called fault tolerant distributed storage. Very briefly, given balanced Bernoulli random variables\footnote{i.e., the probability of being $0$ is bounded away from both $0$ and $1$.} $\bY_1, \ldots, \bY_n$ and a threshold $C$, they seek to find a vector $w \in [0,1]^n$ such that (i) $\sum_{i=1}^n w_i=1$ and (ii) $\Pr[\sum_{i=1}^n w_i \cdot \bY_i > C]$ is maximized. 
\red{While there is some ostensible similarity between their problem and ours, there are fundamental differences: namely, their solution space is the $n$-dimensional polytope and indeed, a significant use of the critical machinery in \cite{DDDMS:14} is to argue that there is an approximately optimal solution with a ``nice", so-called  ``anti-concentrated" solution. In contrast, our solution space is combinatorial (namely subsets of $[n]$) and we use the critical index machinery to characterize the probabilistic behavior of $\sum_{S \subseteq [n]} \bX_S$ for all $S \subseteq [n]$. 
Finally, we also emphasize that \cite{DDDMS:14} only dealt with 
sums of balanced Bernoulli type random variables whereas we have to tackle sums of independent $(c,2,4)$ hypercontractive random variables thus creating additional complications. 

To sum up, a wealth of sophisticated and powerful results have been developed in probability theory and the complexity theoretic study of halfspaces that have direct relevance to the linear forms in random variables that are at the heart of the stochastic knapsack problem.  We view the transfer of these ideas and techniques from complexity theory and probability to stochastic optimization as a conceptual contribution of this work, and we hope that more connections will be uncovered between these previously rather disjoint fields.

}

\section{Some basics of probability theory}


In this section, we list some probabilistic preliminaries which will be useful throughout the paper.


\subsubsection*{Distance between distributions}
We will use two (well-known) notions of distances between real-valued random variables which we recall below. 
\begin{definition}
For real-valued random variables $\bX$ and $\bY$, 
$$
\dcdf(\bX,\bY) = \sup_{t \in \mathbb{R}} \big|\Pr[\bX \le t]- \Pr[\bY \le t]\big|, 
$$
$$
\dtv(\bX,\bY) = \sup_{A \subseteq \mathbb{R}} \big| \Pr[\bX \in A] - \Pr[\bY \in A] \big|.
$$
Here the supremum $A$ is taken over any measurable subset of $\mathbb{R}$. It is easy to see that $\dtv(\bX,\bY)$ (up to a factor of $2$) is the same as the $\ell_1$ distance between the random variables $\bX$ and $\bY$. 
\end{definition}

\subsubsection{Anti-concentration and smoothness of random variables}

The notion of anti-concentration of random variables is going to play an important role in the proof of Theorem~\ref{thm:hyper-c}. We quantify the notion of anti-concentration of a real-valued random variable by the so-called L\'{e}vy concentration function (defined below).
\begin{definition}~\label{def:Levy}
For a real-valued random variable $\bX$ and $t >0$, we define $Q_{\bX}(t)$ as
$Q_{\bX}(t) = \sup_{a \in \mathbb{R}} \Pr[a \le {\bX} \le a+t]$. 
\end{definition}
Note that $Q_{\bX}(t)$ is  an upper bound on the mass
that $\bX$ puts in any interval of size $t$. A useful intuition for $Q_{\bX}(t)$ is that it is a measure of smoothness of the random variable $t$. 
We now record a very simple but useful fact about the function $Q_{\bX}(t)$, namely that it decreases upon convolution.
\begin{fact}~\label{fact:convolve}
Let ${\bX}$ and $\bY$ be independent random variables. Then, for $t  >0$, $Q_{\bX+\bY}(t) \le Q_{\bX}(t)$.
\end{fact}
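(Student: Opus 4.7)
The plan is to prove the inequality by conditioning on the value of $\bY$ and pushing the supremum inside the integral. Fix $a \in \mathbb{R}$ and $t > 0$. By the law of total probability and the independence of $\bX$ and $\bY$, I can write
\[
\Pr[a \le \bX + \bY \le a+t] = \int_{\mathbb{R}} \Pr[a - y \le \bX \le a + t - y] \, d\mu_{\bY}(y),
\]
where $\mu_{\bY}$ is the distribution of $\bY$. The conditioning is legal precisely because $\bX$ and $\bY$ are independent, so that the conditional distribution of $\bX$ given $\bY = y$ coincides with the marginal distribution of $\bX$.

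Next, for every fixed $y \in \mathbb{R}$, the quantity $\Pr[a - y \le \bX \le a + t - y]$ is the $\bX$-mass of an interval of length exactly $t$. By the very definition of the L\'evy concentration function (Definition~\ref{def:Levy}), this is at most $Q_{\bX}(t)$, uniformly in $y$. Substituting this pointwise bound into the integral and using that $\mu_{\bY}$ is a probability measure yields
\[
\Pr[a \le \bX + \bY \le a+t] \le \int_{\mathbb{R}} Q_{\bX}(t) \, d\mu_{\bY}(y) = Q_{\bX}(t).
\]

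Finally, since this bound holds for every $a \in \mathbb{R}$, taking the supremum over $a$ on the left-hand side gives $Q_{\bX+\bY}(t) \le Q_{\bX}(t)$, as required. There is essentially no obstacle here; the only minor subtlety is making sure the conditioning step is justified, but this is immediate from independence. The symmetric bound $Q_{\bX+\bY}(t) \le Q_{\bY}(t)$ follows by the same argument with the roles of $\bX$ and $\bY$ interchanged, giving the intuitive statement that convolution can only smooth out a distribution.
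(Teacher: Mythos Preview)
Your proof is correct and is the standard argument for this elementary fact. The paper states Fact~\ref{fact:convolve} without proof, so there is no approach to compare against; your conditioning-on-$\bY$ argument is exactly what one would expect here.
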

The next lemma shows that hypercontractive random variables have non-trivial bounds on $Q_X( \cdot )$. 
\begin{lemma}~\label{lem:hypercontractive}
Let  $\bX$ be a  $(c,2,4)$-hypercontractive random variable with $\mu_2(\bX) = \sigma^2$. Then, for $t = \sigma/2$ and $\delta = \frac{9}{128 \cdot (c+2)^4}
$, $Q_{\bX}(t) \le 1-\delta$. \end{lemma}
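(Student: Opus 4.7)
The approach is a second-moment/Paley--Zygmund argument run by contradiction: I will assume some interval $I = [a, a+\sigma/2]$ captures more than $1-\delta$ of the mass of $\bX$, and derive a contradiction with the hypercontractivity hypothesis. Let $c = a+\sigma/4$ be the midpoint of $I$, set $\bY = \bX - c$, and write $\eta = \Pr[\bX \notin I]$; the goal is to show $\eta \ge \delta$. The starting point is that $\E[\bY^2] = \Var(\bX) + (\E[\bX] - c)^2 \ge \sigma^2$, whereas on the event $\{\bX \in I\}$ we have $\bY^2 \le \sigma^2/16$. Subtracting immediately yields the key lower bound
$$
\E\bigl[\bY^2 \, \mathbf{1}_{\bX \notin I}\bigr] \;=\; \E[\bY^2] - \E\bigl[\bY^2\,\mathbf{1}_{\bX \in I}\bigr] \;\ge\; \sigma^2 - \sigma^2/16 \;=\; 15\sigma^2/16.
$$

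Next I will apply Cauchy--Schwarz to the left-hand side, getting $15\sigma^2/16 \le \sqrt{\E[\bY^4]} \cdot \sqrt{\eta}$. To upper bound $\E[\bY^4]$, let $m = \E[\bX]$ and $\alpha = |m-c|$, and decompose $\bY = (\bX-m) + (m-c)$. Minkowski's inequality in $L^4$ together with hypercontractivity gives
$$
\E[\bY^4]^{1/4} \le \mu_4(\bX)^{1/4} + |m-c| \le c\sigma + \alpha.
$$
Combining the two displays, $\sqrt{\eta} \ge \tfrac{15}{16(c+\alpha/\sigma)^2}$. Provided I can show $\alpha < 2\sigma$, this gives $\eta \ge (15/16)^2/(c+2)^4 = 225/(256(c+2)^4)$, which is larger than $\delta = 9/(128(c+2)^4) = 18/(256(c+2)^4)$, so the proof is complete.

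The one place that needs care is ruling out large $\alpha$: hypercontractivity controls $\mu_4$ only relative to the mean $m$, not relative to the midpoint $c$ of the chosen interval. The cleanest fix will be a short case split. If $\alpha \ge 2\sigma$, then $I$ lies at distance at least $7\sigma/4$ from $m$, and Chebyshev gives $\Pr[\bX \in I] \le \sigma^2/(7\sigma/4)^2 = 16/49$, so $\eta \ge 33/49 \gg \delta$ and we are already done. Otherwise $\alpha < 2\sigma$ and the Minkowski bound above yields the claim. The main obstacle is really just this bookkeeping: the Paley--Zygmund step and the hypercontractive fourth-moment bound are standard, but tracking the shift $\alpha$ through Minkowski is what forces the case split and the $(c+2)^4$ (rather than $c^4$) in the final denominator.
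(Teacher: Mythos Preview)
Your proof is correct; the Paley--Zygmund computation with the Chebyshev case split works and in fact overshoots the target constant by a comfortable margin ($225/256$ versus the required $18/256$). One presentational point: you use the symbol $c$ both for the hypercontractivity constant and for the midpoint of the interval $I$; rename the midpoint (say $c_0$) before the write-up is compiled, or the displayed inequality $\E[\bY^4]^{1/4}\le c\sigma+\alpha$ becomes ambiguous.

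The paper's argument is genuinely different. Rather than work with $\bX$ centered at an unknown point of the interval, the paper \emph{symmetrizes}: it sets $\bZ=\bX-\bX'$ for an independent copy $\bX'$, observes that $Q_{\bX}(t)>1-\delta$ forces $\Pr[|\bZ|\le t]>1-2\delta$, verifies that $\bZ$ is $(c+2,2,4)$-hypercontractive, and then runs the second-versus-fourth-moment argument on $\bZ$ (conditioning on $|\bZ|>t$ and using Jensen). The advantage of symmetrization is that $\bZ$ is automatically centered at $0$, so the ``where is the mean relative to the interval'' issue never arises and no case split or Chebyshev step is needed; the factor $(c+2)^4$ enters instead through the hypercontractivity constant of $\bZ$. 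Your approach avoids the symmetrization machinery and is arguably more elementary, at the cost of the extra case analysis; it also yields a slightly sharper constant.
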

\begin{proof}
We begin with a simplification. Namely, let $\bZ = \bX-\bX'$ where $\bX'$ is an i.i.d. copy of $\bX$. Note that $\mathbf{E}[\bZ]=0$, $\mu_2(\bZ)  = 2 \mu_2(\bX)$ and $\mu_4(\bZ) = 2 \mu_4(\bX)  + 6 \mu_2^2(\bX)$.  It easily follows $\bZ$ is $(c+2, 2,4)$-hypercontractive.  Now, towards a contradiction assume that $Q_{\bX}(t) > 1-\delta$. 
Then, it follows that $\Pr[|\bZ| \le t] > 1-2\delta$.   
Let us define $\kappa = \Pr[|\bZ|>t]$. Also, 
let $\bY$ be the random variable obtained by conditioning $\bZ$ on the event $|\bZ|>t$. Also, let $p(\cdot)$ be the density of $\bZ$. 
\begin{eqnarray*}
\sigma^2 = \int_{x \in \mathbb{R}} x^2 p(x) dx  &=& \int_{|x|\le t} x^2 p(x) dx + \int_{|x| >t} x^2 p(x) dx  \\
&\leq& t^2 \cdot (1-\kappa) +  \int_{|x| >t} x^2 p(x) dx \\
&=& t^2 \cdot (1-\kappa) +  \kappa \cdot \mathbf{E}[Y^2]  
\end{eqnarray*} 
As a consequence, 
we have \begin{equation}\label{eq:by-1}
\mathbf{E}[\bY^2] \ge \frac{1}{\kappa} \cdot (\sigma^2 - t^2(1-\kappa)).
\end{equation}


Likewise, it is clear that
\begin{equation}~\label{eq:fourth-1}
\mathbf{E}[\bY^4] \le \frac{1}{\kappa} \cdot \mathbf{E}[\bZ^4] \le \frac{(c+2)^4}{\kappa} \cdot (\mathbf{E}[\bZ^2])^2 =\frac{4 \cdot (c+2)^4}{\kappa} \cdot  \sigma^4
\end{equation}
Applying Jensen's inequality on (\ref{eq:by-1}) and (\ref{eq:fourth-1}), we get 
$$
\frac{4 \cdot (c+2)^4}{\kappa} \cdot  \sigma^4 \ge \frac{1}{\kappa^2} \cdot (\sigma^2 - t^2(1-\kappa))^2
$$
Plugging in $t = \sigma/2$, we get 
$$
\kappa \ge \frac{9}{64 \cdot (c+2)^4} \ \textrm{and} \ \delta  \ge \frac{9}{128 \cdot (c+2)^4}.
$$

\end{proof}
The following well-known inequality, known as the Kolmogorov-Rogozin inequality~\cite{Kol:58aip, Rog:61tpa} states that adding independent random variables improves anti-concentration. 
\begin{lemma}[Kolmogorov-Rogozin inequality]~\label{lem:Kolmogorov}
Let $\bX_1, \bX_2, \ldots, \bX_n$ be independent random variables  and let $\bZ = \bX_1 + \ldots + \bX_n$. Then, for $t>0$ and $0<t_i\le t$ (for $i=1, \ldots, n$), we have
\[
Q_{\bZ}(t) \le \frac{100 \cdot t}{\sqrt{\sum_{i=1}^n  t_i^2 \cdot (1-Q_{\bX_i}(t_i))}}
\]
\end{lemma}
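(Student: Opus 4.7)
The plan is to prove this classical inequality via characteristic function techniques, following the route pioneered by Esseen and Kesten. Let $\phi_i(\xi) = \E[e^{i\xi \bX_i}]$ denote the characteristic function of $\bX_i$, and set $\phi(\xi) = \prod_{i=1}^n \phi_i(\xi)$, the characteristic function of $\bZ$.

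First I would invoke the classical smoothing inequality of Esseen, which bounds the concentration function of any real-valued random variable $\bW$ by an integral of its characteristic function: there is an absolute constant $A$ such that
$$Q_{\bW}(t) \le A \cdot t \int_{-1/t}^{1/t} |\phi_{\bW}(\xi)| \, d\xi.$$
(This is a standard consequence of convolving $\bW$ with a uniform or triangular kernel on an interval of length $t$ and applying Parseval/Fourier inversion.) The value of the constant $A$ propagates into the final factor $100$.

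Next, and this is the crux of the argument, I would establish the pointwise bound
$$|\phi_i(\xi)|^2 \;\le\; 1 - c \cdot \xi^2 t_i^2 \cdot \bigl(1 - Q_{\bX_i}(t_i)\bigr) \qquad \text{for } |\xi| \le 1/t_i,$$
for an absolute constant $c > 0$. The key observation is that $|\phi_i(\xi)|^2$ is the characteristic function of the symmetrized variable $\bX_i - \bX_i'$, so
$$1 - |\phi_i(\xi)|^2 \;=\; \E\bigl[1 - \cos(\xi(\bX_i - \bX_i'))\bigr].$$
Splitting the expectation according to whether $|\bX_i - \bX_i'| > t_i$ or not, combined with the elementary inequality $1 - \cos x \ge \tfrac{2}{\pi^2} x^2$ for $|x| \le \pi$ (which applies in the bad region once $|\xi| \le 1/t_i$), reduces the bound to controlling $\Pr[|\bX_i - \bX_i'| > t_i]$. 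A standard symmetrization argument shows that the latter probability is $\ge 1 - Q_{\bX_i}(t_i)$ (up to a constant factor, using Fact~\ref{fact:convolve} style reasoning on $\bX_i - \bX_i'$), which yields the claim.

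Given this pointwise bound, since $t_i \le t$ the range $|\xi| \le 1/t$ of the Esseen integral lies inside $|\xi| \le 1/t_i$ for every $i$, so
$$|\phi(\xi)|^2 \;=\; \prod_{i=1}^n |\phi_i(\xi)|^2 \;\le\; \prod_{i=1}^n \bigl(1 - c\, \xi^2 t_i^2 (1 - Q_{\bX_i}(t_i))\bigr) \;\le\; \exp\!\bigl(-c\, \xi^2 \sigma^2\bigr),$$
where $\sigma^2 := \sum_{i=1}^n t_i^2 (1 - Q_{\bX_i}(t_i))$. Substituting back into Esseen's smoothing inequality and evaluating the resulting Gaussian tail integral,
$$Q_{\bZ}(t) \;\le\; A\cdot t \int_{-1/t}^{1/t} e^{-c \xi^2 \sigma^2/2}\, d\xi \;\le\; A\cdot t \cdot \frac{\sqrt{2\pi/c}}{\sigma},$$
produces a bound of the form $O(t/\sigma)$, and keeping careful track of $A$ and $c$ lets one land at the advertised constant $100$.

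The main obstacle is the characteristic-function lower bound in the second step, namely the translation of the geometric/combinatorial quantity $1 - Q_{\bX_i}(t_i)$ into an analytic estimate on $|\phi_i(\xi)|^2$ through symmetrization. The rest is essentially bookkeeping: expanding the product of characteristic functions, using $1-x \le e^{-x}$, and evaluating a Gaussian-type integral. A minor subtlety is that the statement only requires $t_i \le t$ (not $t_i = t$), and this is precisely what allows $|\xi| \le 1/t \le 1/t_i$ in the Esseen integral, making the pointwise bound applicable term by term.
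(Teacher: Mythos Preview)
The paper does not prove this lemma; it is quoted as a classical result (with references to Kolmogorov and Rogozin) and used as a black box, so there is no in-paper argument to compare against.

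Your high-level plan---Esseen's smoothing inequality, a pointwise bound on each $|\phi_i(\xi)|^2$ via symmetrization, then a Gaussian integral---is indeed the skeleton of Esseen's analytic proof of this inequality. However, the crucial pointwise bound you assert,
\[
|\phi_i(\xi)|^2 \;\le\; 1 - c\,\xi^2 t_i^2\bigl(1 - Q_{\bX_i}(t_i)\bigr)\qquad\text{for }|\xi|\le 1/t_i,
\]
is false as stated. Take $\bX_i$ uniform on $\{0,2\pi\}$ and $t_i=1$: then $Q_{\bX_i}(1)=\tfrac12$, yet $\phi_i(1)=\tfrac12(1+e^{2\pi i})=1$, so at $\xi=1=1/t_i$ the left side equals $1$ while your right side is $1-c/2<1$. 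The culprit is the periodicity of cosine: on the event $|\bX_i-\bX_i'|>t_i$ the product $\xi(\bX_i-\bX_i')$ can land near any multiple of $2\pi$, so $1-\cos(\cdot)$ has no positive lower bound there. Your parenthetical ``which applies in the bad region once $|\xi|\le 1/t_i$'' is exactly the step that fails; the inequality $1-\cos x\ge \tfrac{2}{\pi^2}x^2$ only covers the region $|Y_i|\le t_i$, which yields $\E[Y_i^2\mathbf{1}_{|Y_i|\le t_i}]$ rather than $t_i^2\Pr[|Y_i|>t_i]$, and these are not comparable in general.

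The classical proofs circumvent this. Rogozin's original argument is combinatorial (a rearrangement inequality) and avoids Fourier analysis entirely. Esseen's analytic route does not rely on your pointwise bound; the key is that while $1-\cos(\xi y)$ has no uniform lower bound in $\xi$, its \emph{average} $\frac{1}{a}\int_0^a(1-\cos(\xi y))\,d\xi=1-\tfrac{\sin(ay)}{ay}\ge c\min((ay)^2,1)$ does, and Esseen organizes the smoothing step (choice of kernel in the inversion formula) so that this averaged estimate, rather than a pointwise one, can be fed through the product structure of $|\phi|$. See Petrov, \emph{Sums of Independent Random Variables}, Chapter~III, for a complete exposition.
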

\subsubsection{Berry-Ess\'{e}en theorem and other central limit theorems}
Quantitative versions of the central limit theorem will be a key ingredient in nearly all the theorems. We begin with the Berry-Ess\'{e}en theorem~\cite{Feller} which implies convergence in cdf distance. Let $\mathcal{N}(\mu, \sigma^2)$ denote the Gaussian with mean $\mu$ and variance $\sigma^2$. 
\begin{theorem}~\label{thm:BE}\emph{(Berry-Ess\'{e}en theorem)} Let $\bX_1, \ldots, \bX_n$ be independent random variables and let $\bZ= \bX_1  + \ldots + \bX_n$, $\mu =\mathbf{E}[\bZ]$ and $\sigma^2= \mathsf{Var}(\bZ)$. Then, 
$$
\dcdf (\bZ, \mathcal{N}(\mathbf{E}[\bZ], \mathsf{Var}(\bZ)) \leq \frac{1}{\sigma} \cdot \max_{1 \le i \le n} \frac{\mu_3(\bX_i)}{\mu_2(\bX_i)}.
$$
Note that by Lyupanov's inequality~\cite{Garling}, for any random variable $X$, $\mu_3(\bX) \le \sqrt{\mu_2(\bX) \cdot \mu_4(\bX)}$. Thus, 
$$
\dcdf (\bZ, \mathcal{N}(\mathbf{E}[\bZ], \mathsf{Var}(\bZ)) \leq \frac{1}{\sigma} \cdot \max_{1 \le i \le n} \frac{\sqrt{\mu_4(\bX_i)}}{\sqrt{\mu_2(\bX_i)}}. 
$$
\end{theorem}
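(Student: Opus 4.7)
The plan is to derive the stated inequality directly from the standard non-i.i.d.\ version of the Berry--Ess\'een theorem as it appears in Feller. That version says: for independent random variables $\bX_1,\ldots,\bX_n$ with finite third moments, setting $\bZ=\bX_1+\cdots+\bX_n$, $\mu = \mathbf{E}[\bZ]$, and $\sigma^2=\mathsf{Var}(\bZ)=\sum_i\mu_2(\bX_i)$, one has
$$\dcdf\bigl(\bZ,\mathcal{N}(\mu,\sigma^2)\bigr) \;\le\; C\cdot \frac{\sum_{i=1}^n \mu_3(\bX_i)}{\sigma^3},$$
for an absolute constant $C$. (The specific constant is immaterial for any application in this paper; it is absorbed into the $\le$ as in many standard textbook renditions.) The first step is therefore simply to quote this bound.

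The second step is a one-line algebraic manipulation that replaces the sum in the numerator by a maximum. Writing each third moment as $\mu_3(\bX_i) = \bigl(\mu_3(\bX_i)/\mu_2(\bX_i)\bigr)\cdot \mu_2(\bX_i)$ and pulling the maximum out of the sum gives
$$\sum_{i=1}^n \mu_3(\bX_i) \;\le\; \max_{1\le i\le n}\frac{\mu_3(\bX_i)}{\mu_2(\bX_i)} \cdot \sum_{i=1}^n \mu_2(\bX_i) \;=\; \max_{1\le i\le n}\frac{\mu_3(\bX_i)}{\mu_2(\bX_i)} \cdot \sigma^2,$$
and dividing through by $\sigma^3$ yields exactly the first displayed inequality of the theorem. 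This rewriting is useful because in the critical-index argument the maximum of $\mu_3/\mu_2$ over the remaining indices can be controlled directly, whereas the raw sum of third moments cannot.

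For the second displayed inequality, I would invoke Lyapunov's inequality (as stated in Garling), which asserts that the map $r\mapsto \mu_r(\bX)^{1/r}$ is non-decreasing. Taking $r=3$ sandwiched between $r=2$ and $r=4$ yields $\mu_3(\bX)^2 \le \mu_2(\bX)\,\mu_4(\bX)$, i.e.\ $\mu_3(\bX)/\mu_2(\bX) \le \sqrt{\mu_4(\bX)/\mu_2(\bX)}$. Substituting this bound termwise inside the max gives the second form of the inequality, which is the form actually needed when one wants to exploit the $(c,2,4)$-hypercontractivity assumption of Definition~\ref{def:hyper}. Since the whole argument is a citation plus two elementary manipulations, I do not anticipate any genuine obstacle; the only subtlety is choosing a reference for non-i.i.d.\ Berry--Ess\'een whose constant and normalization match the convention used here.
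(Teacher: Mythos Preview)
Your proposal is correct and matches the paper's treatment: the paper does not prove this theorem at all but simply quotes it from Feller, with the Lyapunov consequence folded into the statement itself. Your derivation---citing the standard non-i.i.d.\ Berry--Ess\'een bound, rewriting $\sum_i \mu_3(\bX_i)$ via $\mu_3/\mu_2 \cdot \mu_2$ to extract the maximum, and then applying Lyapunov---is exactly the elementary manipulation implicit in the paper's presentation, and in fact supplies more detail than the paper does.
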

As a consequence, we have the following corollary which is applicable to $(c,2,4)$ hypercontractive random variables. 
\begin{corollary}~\label{corr:BE}
Let $\bX_1, \ldots, \bX_n$ be independent $(c,2,4)$ hypercontractive random variables and let $\bZ = \bX_1 + \ldots + \bX_n$, $\mu = \mathbf{E}[\bZ]$ and $\sigma^2  = \mathsf{Var}(\bZ)$. Then, 
$$
\dcdf (\bZ, \mathcal{N}(\mathbf{E}[\bZ], \mathsf{Var}(\bZ)) \leq \max_i \frac{c^2 \cdot \sqrt{\mu_2(\bX_i)}}{\sigma}.
$$
\end{corollary}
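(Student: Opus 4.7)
The plan is to derive this corollary as an immediate consequence of Theorem~\ref{thm:BE} by substituting the hypercontractivity bound into the stated inequality. The second form of the Berry-Ess\'{e}en bound given in the theorem is
$$
\dcdf(\bZ, \mathcal{N}(\mu, \sigma^2)) \le \frac{1}{\sigma} \cdot \max_{1 \le i \le n} \frac{\sqrt{\mu_4(\bX_i)}}{\sqrt{\mu_2(\bX_i)}},
$$
so the only task is to control the ratio $\sqrt{\mu_4(\bX_i)}/\sqrt{\mu_2(\bX_i)}$ in terms of $\mu_2(\bX_i)$ when $\bX_i$ is $(c,2,4)$-hypercontractive.

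First I would invoke \dref{hyper}: since each $\bX_i$ is $(c,2,4)$-hypercontractive, we have $\mu_4(\bX_i) \le c^4 \cdot \mu_2^2(\bX_i)$, and taking square roots gives $\sqrt{\mu_4(\bX_i)} \le c^2 \cdot \mu_2(\bX_i)$. Dividing through by $\sqrt{\mu_2(\bX_i)}$ then yields
$$
\frac{\sqrt{\mu_4(\bX_i)}}{\sqrt{\mu_2(\bX_i)}} \le c^2 \cdot \sqrt{\mu_2(\bX_i)}.
$$
Plugging this bound termwise into the Berry-Ess\'{e}en inequality above and maximizing over $i \in [n]$ yields exactly the claimed bound $\max_i c^2 \sqrt{\mu_2(\bX_i)}/\sigma$.

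There is no genuine obstacle here; the corollary is essentially a one-line rewriting of Theorem~\ref{thm:BE} tailored to the hypercontractive setting. The only thing worth double-checking is that the hypercontractivity hypothesis is applied to each summand individually (which is legitimate since the maximum in Berry-Ess\'{e}en is taken coordinatewise) and that no independence or identical-distribution assumptions beyond those already in Theorem~\ref{thm:BE} are needed.
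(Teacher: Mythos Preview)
Your proposal is correct and matches the paper's intent exactly: the paper presents the corollary simply ``as a consequence'' of Theorem~\ref{thm:BE} without writing out any further argument, and your one-line substitution of the hypercontractivity bound $\mu_4(\bX_i)\le c^4\mu_2^2(\bX_i)$ into the second displayed form of the Berry--Ess\'een inequality is precisely the intended derivation.
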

The next limit theorems are applicable only to sums of Bernoulli random variables but provide a stronger convergence guarantee, namely in total variation distance. The translated Poisson distribution $\mathsf{TP}(\mu, \sigma^2)$ is the Poisson distribution 
$\mathsf{Poi}(\lambda)$ translated by $\lfloor \mu  - \sigma^2 \rfloor$ and $\lambda = \mu - \lfloor \mu  - \sigma^2 \rfloor$. Note that $\mathsf{Poi}(\lambda)$ is the Poisson with mean $\lambda$.

\begin{lemma}~\label{lem:CLT1}~\emph{Translated Poisson approximation~\cite{Rollin:07}} Let $\bX_1, \ldots, \bX_n$ be independent Bernoulli random variables and let $\bZ = \sum_{i=1}^n \bX_i$. Let $\mu = \mathbf{E}[\bZ]$ and $\sigma^2 =\mathsf{Var}(\bZ)$. Then, 
\[
\dtv\big( \bZ, \mathsf{TP}(\mu, \sigma^2)) \le \frac{\sigma +2}{\sigma^2}. 
\]
\end{lemma}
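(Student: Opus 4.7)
The plan is to use Stein's method (the Chen--Stein approach) for Poisson approximation, adapted to the translated Poisson via a variance-matching bookkeeping trick.

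First, I would set $s := \lfloor \mu - \sigma^2 \rfloor$ and $\lambda := \mu - s$, so that $\lambda \in [\sigma^2, \sigma^2 + 1)$ and $\mathsf{TP}(\mu,\sigma^2) = s + \mathsf{Poi}(\lambda)$ by definition. Since total variation distance is translation-invariant, $\dtv(\bZ, \mathsf{TP}(\mu,\sigma^2)) = \dtv(\bZ - s, \mathsf{Poi}(\lambda))$, and it suffices to show this is at most $(\sigma+2)/\sigma^2$. Next, I would invoke the Stein equation for $\mathsf{Poi}(\lambda)$: for any $A \subseteq \mathbb{Z}_{\ge 0}$ there is a bounded solution $f = f_A$ of $\lambda f(k+1) - k f(k) = \mathbf{1}_A(k) - \Pr[\mathsf{Poi}(\lambda) \in A]$, with $\|\Delta f\|_\infty \le 1/\sqrt{\lambda} \le 1/\sigma$ (a classical bound of Barbour--Eagleson), where $\Delta f(k) := f(k+1) - f(k)$. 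This reduces the total variation to bounding
\[
\sup_A \bigl| \E\bigl[ \lambda f_A(\bZ - s + 1) - (\bZ - s)\, f_A(\bZ - s) \bigr] \bigr|.
\]

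The third step is to expand the expectation using the Bernoulli structure. Writing $\bZ_i := \bZ - \bX_i$ and $p_i := \Pr[\bX_i = 1]$, the identity $\bX_i\, f(\bZ - s) = \bX_i\, f(\bZ_i + 1 - s)$ (valid since $\bX_i \in \{0,1\}$) combined with the independence of $\bX_i$ and $\bZ_i$ and the relation $\sum p_i = \mu = \lambda + s$ gives, after telescoping,
\[
\E\bigl[ \lambda f(\bZ - s + 1) - (\bZ - s)\, f(\bZ - s) \bigr] = \sum_i p_i^2 \, \E[\Delta f(\bZ_i + 1 - s)] \;-\; s \cdot \E[\Delta f(\bZ - s)].
\]
Now the crucial cancellation appears: since $\sigma^2 = \sum_i p_i(1 - p_i)$, we have $\sum_i p_i^2 = \mu - \sigma^2$, whereas $s = \lfloor \mu - \sigma^2 \rfloor$, so the two coefficients differ by less than $1$ in absolute value. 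Rewriting the difference as $(\sum_i p_i^2 - s)\, \E[\Delta f(\bZ - s)] + \sum_i p_i^2 \, \E[\Delta f(\bZ_i + 1 - s) - \Delta f(\bZ - s)]$, the first piece is bounded by $1 \cdot \|\Delta f\|_\infty \le 1/\sigma$, contributing the leading $\sigma/\sigma^2$ term; the second piece is controlled by the second-difference bound $\|\Delta^2 f\|_\infty = O(1/\lambda) = O(1/\sigma^2)$ applied to $|\bZ - \bZ_i| \le 1$, which together with $\sum_i p_i^2 \le \lambda + 1$ yields the $2/\sigma^2$ term.

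\textbf{Main obstacle.} The sharp bound (rather than the weaker $O(\sum p_i^2 / \sqrt{\lambda})$ from a naive Chen--Stein estimate) relies entirely on the near-cancellation in the key identity above, which in turn is forced by the choice of $\lambda$ and $s$ matching the variance. Justifying the second-order smoothness estimate $\|\Delta^2 f\|_\infty = O(1/\lambda)$ for the Stein solution --- and cleanly carrying it through the above decomposition --- is the most delicate part of the argument; this is essentially the content of the translated-Poisson Stein framework of R\"ollin~\cite{Rollin:07} (building on Barbour--Xia) and I would quote it rather than rederive it.
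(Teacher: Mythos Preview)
The paper does not prove this lemma at all; it is quoted as a black box from R\"ollin~\cite{Rollin:07} and used only as an input to the algorithm in Claim~\ref{clm:SK-large}. So there is no proof in the paper to compare your proposal against.

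Your sketch is in the right spirit---Stein's method with the variance-matching translation is indeed how R\"ollin proceeds---and the key identity
\[
\E\bigl[\lambda f(\bZ-s+1)-(\bZ-s)f(\bZ-s)\bigr]=\sum_i p_i^2\,\E[\Delta f(\bZ_i+1-s)]-s\,\E[\Delta f(\bZ-s)]
\]
is correct. However, your handling of the second-order piece has a concrete gap. The claim $\sum_i p_i^2\le\lambda+1$ is false in general: since $\sum_i p_i^2=\mu-\sigma^2$ while $\lambda<\sigma^2+1$, the two quantities can differ wildly (e.g.\ $n$ Bernoullis with $p_i=0.9$ give $\sum p_i^2=0.81n$ but $\lambda\approx 0.09n$). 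Moreover, even granting that inequality, $(\lambda+1)\cdot O(1/\lambda)=O(1)$, not the $O(1/\sigma^2)$ you need. A more careful accounting---using the extra factor $(1-p_i)$ that actually appears in the second term, together with the refined smoothness estimates for the translated-Poisson Stein solution and a separate handling of the event $\bZ<s$---is what R\"ollin's framework supplies. Your instinct to quote the reference for that part is the right call; just be aware that the heuristic you wrote for the second term does not go through as stated.
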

\begin{lemma}~\label{lem:CLT2}~\emph{Poisson approximation~\cite{BHJ:92}} Let $\bX_1, \ldots, \bX_n$ be independent Bernoulli random variables and for $1 \le j \le n$, 
$\mathbf{E}[\bX_j] =p_j$. For $\bZ = \sum_{i=1}^n \bX_i$ and $\mu = \sum_{j=1}^n p_i$, 
$$
\dtv \big(\bZ, \mathsf{Poi}(\mu) \big) \le \frac{\sum_{i=1}^n p_i^2}{\sum_{i=1}^n p_i}
$$

\end{lemma}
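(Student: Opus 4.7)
The plan is to prove the bound via the Chen--Stein method, which is the standard route to obtaining the $\sum p_i^2/\sum p_i$ factor (as opposed to the easier coupling argument of Le Cam that only yields $\sum p_i^2$). The starting point is the Stein characterization of the Poisson distribution: a $\mathbb{Z}_{\ge 0}$-valued random variable $\bW$ is distributed as $\mathsf{Poi}(\mu)$ if and only if $\mathbf{E}[\mu g(\bW+1) - \bW\, g(\bW)] = 0$ for every bounded $g:\mathbb{Z}_{\ge 0}\to\mathbb{R}$. For any test set $A\subseteq \mathbb{Z}_{\ge 0}$, one then solves Stein's equation
\[
\mu\, g_A(k+1) - k\, g_A(k) \;=\; \mathbf{1}_A(k) - \Pr[\mathsf{Poi}(\mu)\in A]
\]
for $g_A$, and the total variation distance between $\bZ$ and $\mathsf{Poi}(\mu)$ is controlled by $\sup_A \bigl|\mathbf{E}[\mu\, g_A(\bZ+1) - \bZ\, g_A(\bZ)]\bigr|$.

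The next step is the standard Poisson--Stein bound on the solution, namely $\|\Delta g_A\|_\infty \le (1-e^{-\mu})/\mu \le 1/\mu$, where $\Delta g_A(k) := g_A(k+1) - g_A(k)$; this is proved by writing $g_A$ explicitly in terms of the Poisson mass function and estimating differences. Granted this bound on $\Delta g_A$, I would exploit the independence of the $\bX_i$'s as follows. Write $\bZ_{-i} := \bZ - \bX_i$; since $\bX_i$ is independent of $\bZ_{-i}$ and Bernoulli with mean $p_i$,
\[
\mathbf{E}[\bZ\, g_A(\bZ)] \;=\; \sum_i \mathbf{E}[\bX_i\, g_A(\bZ_{-i}+\bX_i)] \;=\; \sum_i p_i\, \mathbf{E}[g_A(\bZ_{-i}+1)],
\]
while $\mathbf{E}[\mu\, g_A(\bZ+1)] = \sum_i p_i \mathbf{E}[g_A(\bZ+1)]$. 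Conditioning on $\bX_i$ in the difference, the $\bX_i=0$ term vanishes and the $\bX_i=1$ term contributes $p_i\,\mathbf{E}[\Delta g_A(\bZ_{-i}+1)]$, yielding
\[
\mathbf{E}[\mu\, g_A(\bZ+1) - \bZ\, g_A(\bZ)] \;=\; \sum_i p_i^2\, \mathbf{E}[\Delta g_A(\bZ_{-i}+1)].
\]
Plugging in $|\Delta g_A| \le 1/\mu$ and taking a supremum over $A$ gives $\dtv(\bZ, \mathsf{Poi}(\mu)) \le \sum_i p_i^2 / \mu$, which is exactly the claim.

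The main technical obstacle is the sharp bound $\|\Delta g_A\|_\infty \le (1-e^{-\mu})/\mu$: the naive estimate only yields $\|\Delta g_A\|_\infty \le 1$, which would recover Le Cam's weaker bound $\sum p_i^2$ rather than $\sum p_i^2/\sum p_i$. Obtaining the factor $1/\mu$ requires exploiting the explicit form $g_A(k) = \frac{(k-1)!}{\mu^k}\,e^{\mu} \bigl(\Pr[\mathsf{Poi}(\mu)\in A\cap\{0,\dots,k-1\}] - \Pr[\mathsf{Poi}(\mu)\in A]\Pr[\mathsf{Poi}(\mu)\le k-1]\bigr)$ and a careful monotonicity/telescoping argument on its first differences. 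The remaining manipulations (conditioning on $\bX_i$, summing over $i$, and the final supremum over $A$) are routine given this sharp Stein-factor estimate.
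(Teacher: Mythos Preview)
The paper does not prove this lemma; it is stated as a known result with a citation to \cite{BHJ:92} (Barbour, Holst, and Janson). Your Chen--Stein argument is correct and is precisely the approach developed in that reference: the identity
\[
\mathbf{E}\bigl[\mu\, g_A(\bZ+1) - \bZ\, g_A(\bZ)\bigr] \;=\; \sum_{i} p_i^2\, \mathbf{E}\bigl[\Delta g_A(\bZ_{-i}+1)\bigr]
\]
combined with the sharp Stein-factor bound $\|\Delta g_A\|_\infty \le (1-e^{-\mu})/\mu$ is exactly how the extra factor $1/\mu$ (over Le Cam's coupling bound $\sum_i p_i^2$) is obtained there. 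So your proposal matches the standard proof that the paper is citing.
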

\begin{lemma}~\label{lem:moment-matching}~\emph{Moment matching theorem for PBDs~\cite{Roos:00, Daskalakis2015}} 
Let $\{\bX_i\}_{i=1}^n$ and  $\{\bY_i\}_{i=1}^n$ be two families of independent Bernoulli random variables such that for all $1 \le i \le n$, $\mathbf{E}[\bX_i]=p_i, \ \mathbf{E}[\bY_i]=q_i \le 1/2$. Let $\bZ_{\bX} = \sum_{i=1}^n \bX_i$, $\bZ_{\bY}  = \sum_{i=1}^n \bY_i$ and for $1 \le j \le T$,  $\sum_{i \in [n]} p_i^j = \sum_{i \in [n]} q_i^j$. 
Then, $\dtv(\bZ_{\bX}, \bZ_{\bY})$ is bounded by 
$$
\dtv(\bZ_{\bX}, \bZ_{\bY}) \leq 13 \cdot  (T+1)^{1/4} \cdot 2^{-(T+1)/2}. 
$$
For sufficiently large $T$, the right hand side is upper bounded by $2^{-T/3}$. 
\end{lemma}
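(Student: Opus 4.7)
The plan is to work with probability generating functions (PGFs), which factorize cleanly for sums of independent Bernoullis, and to convert agreement of PGFs near $s=1$ into an $\ell_1$ bound on the coefficient sequence via complex-analytic techniques.

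\textbf{Step 1: Set up PGFs.} Define $G_\bX(s) = \E[s^{\bZ_\bX}]$ and $G_\bY(s) = \E[s^{\bZ_\bY}]$. Independence gives
$$G_\bX(s) = \prod_{i=1}^n \bigl(1 + p_i(s-1)\bigr), \qquad G_\bY(s) = \prod_{i=1}^n \bigl(1 + q_i(s-1)\bigr).$$
Writing $u = s-1$, the hypothesis $q_i \le 1/2$ guarantees that for $|u| \le 1$ each factor $1+q_i u$ avoids $0$, so the principal logarithm expands as an absolutely convergent series
$$\log G_\bY(1+u) = -\sum_{j\ge 1}\frac{(-1)^j}{j}\Bigl(\sum_i q_i^j\Bigr)\,u^j,$$
and an analogous formal identity holds for $\log G_\bX(1+u)$ (after handling any $p_i > 1/2$ by the symmetrization $\bX_i \mapsto 1-\bX_i$ with a corresponding shift of the sum, or by the intermediate Poisson comparison of Lemma~\ref{lem:CLT2}).

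\textbf{Step 2: Moment matching gives vanishing to order $T+1$.} The matching condition $\sum_i p_i^j = \sum_i q_i^j$ for $1 \le j \le T$ means that the first $T$ Taylor coefficients of $\log G_\bX(1+u)$ and $\log G_\bY(1+u)$ coincide. Hence
$$\log G_\bX(1+u) - \log G_\bY(1+u) = u^{T+1}\,H(u)$$
for a function $H$ analytic on a disk of radius bounded away from $0$, with explicit coefficients in terms of the higher power sums $\sum_i p_i^{T+k}, \sum_i q_i^{T+k}$. Exponentiating and using $G_\bX(1)=G_\bY(1)=1$ yields $G_\bX(1+u) - G_\bY(1+u) = u^{T+1}\,\widetilde H(u)$ with $\widetilde H$ analytic and uniformly bounded on $|u|\le 1$.

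\textbf{Step 3: From PGFs to total variation.} The point probability differences $\Delta_k := \Pr[\bZ_\bX=k] - \Pr[\bZ_\bY=k]$ are the Taylor coefficients of $G_\bX(s)-G_\bY(s)$ at $s=0$. Cauchy's integral formula on a circle $|s-1|=r$ with $r\in(0,1)$ then yields $|\Delta_k| \lesssim r^{T+1-k}\max_{|u|=r}|\widetilde H(u)|$, and summing $2\,\dtv(\bZ_\bX,\bZ_\bY) = \sum_k|\Delta_k|$ with a judicious choice $r\approx 1/\sqrt 2$ balances the $r^{T+1}$ decay against the growth $r^{-k}$ of inverse weights, producing the $2^{-(T+1)/2}$ rate. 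The $(T+1)^{1/4}$ polynomial correction then emerges from a saddle-point estimate on the geometric sum, and the final absolute constant $13$ arises from a sharper realization of this bound as a Krawtchouk-polynomial expansion of both PBDs around a common symmetric binomial $B(n,1/2)$, as in Roos~\cite{Roos:00}, whose orthogonality relations let one compute the tail norm of the expansion exactly.

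\textbf{Main obstacle.} The conceptual architecture above is straightforward, but pinning down the sharp constants $13$ and $(T+1)^{1/4}$ is the main technical difficulty: a generic contour bound gives the correct $2^{-(T+1)/2}$ exponential rate with weaker polynomial prefactors. Obtaining the stated bound requires the finer Krawtchouk-expansion route, whose tail estimate crucially uses the hypothesis $q_i \le 1/2$ to control the ratio between consecutive Krawtchouk coefficients. The final simplification to $2^{-T/3}$ for large $T$ is immediate by absorbing the polynomial prefactor into the exponential.
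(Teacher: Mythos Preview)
The paper does not prove this lemma: it is quoted verbatim as a known result from \cite{Roos:00, Daskalakis2015} and used as a black box in the proof of Claim~\ref{clm:SK-small}. So there is no ``paper's own proof'' to compare against; I can only assess your sketch on its merits.

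Your Steps 1 and 2 are correct and are exactly the starting point of Roos's argument: the factorization of the PGF and the log-expansion show that matching the power sums $\sum_i p_i^j$ for $j\le T$ forces $G_\bX(s)-G_\bY(s)$ to vanish to order $T+1$ at $s=1$. (Incidentally, the statement as written in the paper should be read as requiring $p_i\le 1/2$ as well, which is how it is applied in Case~$i=3$ of Claim~\ref{clm:SK-small}; your aside about symmetrizing $p_i>1/2$ is unnecessary.)

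Step 3 as written has a genuine gap. The point masses $\Delta_k=\Pr[\bZ_\bX=k]-\Pr[\bZ_\bY=k]$ are the Taylor coefficients of $F(s)=G_\bX(s)-G_\bY(s)$ at $s=0$, not at $s=1$. Cauchy's formula on a circle $|s-1|=r$ does not give you those coefficients; and if you instead integrate on a circle $|s|=R$ centered at the origin, the factor $(s-1)^{T+1}$ contributes $(R+1)^{T+1}$, which \emph{grows} with $T$ rather than decays. So the displayed estimate $|\Delta_k|\lesssim r^{T+1-k}$ does not follow from the contour you describe, and the claimed balancing at $r\approx 1/\sqrt{2}$ is not justified. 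You seem to be aware of this, since at the end of Step~3 you fall back on ``a sharper realization via Krawtchouk expansion as in Roos'' --- but that is precisely the whole proof, not a refinement of constants. Roos's actual argument expands both PBDs in the Krawtchouk basis around $\mathrm{Bin}(n,1/2)$; the $k$-th Krawtchouk coefficient depends only on the first $k$ power sums, so moment matching up to $T$ kills the first $T$ terms, and the $\ell_1$ tail of the remaining Krawtchouk series is bounded directly (this is where the hypothesis $p_i,q_i\le 1/2$ enters, to control coefficient ratios). Your contour sketch is not a substitute for that computation.
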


We next define Poisson multinomial distributions and state a moment matching theorem for these distributions. 
\begin{definition}
A random variable $\bX$ supported on $\{e_1, \ldots, e_k\}$ (where $e_i$ is the standard unit vector in the $i^{th}$ direction) is said to be a $k$-categorical random variable (CRV). A $(n,k)$ Poisson-multinomial distribution is obtained by adding $n$ independent random variables $\bX_1, \ldots, \bX_n$ where each $\bX_i$ is a $k$-CRV. 
\end{definition}

The following moment matching theorem was proven by Daskalakis, Kamath and Tzamos~\cite{daskalakis2015structure}.  To do this, for  positive integers $w$ and $k$, define $V_k(w) = \{ v \in \mathbb{Z}^k: v_i \ge 0 \ \wedge \sum_{i} v_i \le w\}$. 
\begin{theorem}~\label{thm:moment-matching-PMD}
Let $\{\bX_1, \ldots, \bX_n\}$  and $\{\bY_1, \ldots, \bY_{m}\}$ be independent $k$-CRVs. They satisfy the following properties:
\begin{itemize}
\item For all $1 \le j \le k$, $|\max_i \Pr[\bX_i=e_j] - \min_i \Pr[\bX_i=e_j]| \le \frac{1}{4ek^3}$ and $|\max_i \Pr[\bY_i=e_j] - \min_i \Pr[\bY_i=e_j]| \le \frac{1}{4ek^3}$. 
\item There exists $j_0 \in [k]$ such that $\sum_{i=1}^n  \Pr[\bX_i=e_{j_0}] \ge \frac{n}{k}$ and  $\sum_{i=1}^m  \Pr[\bY_i=e_{j_0}] \ge \frac{m}{k}$. 
\item For all $\alpha \in V_k(w)$, let
$$
\sum_{i=1}^n \prod_{j=1}^k \Pr[\bX_i=e_j]^{\alpha_j} = \sum_{i=1}^m \prod_{j=1}^k \Pr[\bY_i=e_j]^{\alpha_j}. 
$$
\end{itemize}
Then, $\dtv(\sum_{i=1}^n \bX_i, \sum_{i=1}^m \bY_i)  \le 2^{-w+1}$. 
\end{theorem}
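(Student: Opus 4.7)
The plan is to generalize the Fourier-analytic proof of the one-dimensional PBD moment-matching theorem \lref{moment-matching} (due to Roos) to the multivariate PMD setting. Let $\bZ_X = \sum_{i=1}^n \bX_i$ and $\bZ_Y = \sum_{i=1}^m \bY_i$, and set $p_{i,j} = \Pr[\bX_i = e_j]$, $q_{i,j} = \Pr[\bY_i = e_j]$. Since each $\bX_i$ takes values in $\{e_1,\dots,e_k\}$ and $\sum_j p_{i,j}=1$, the characteristic function on the torus $[-\pi,\pi]^k$ factors as
$$\phi_X(\theta) = \prod_{i=1}^n \Bigl(1 + \sum_{j=1}^k p_{i,j}\,(e^{\sqrt{-1}\,\theta_j}-1)\Bigr),$$
and analogously $\phi_Y$. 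The strategy is to expand $\phi_X$ and $\phi_Y$ in the variables $z_j := e^{\sqrt{-1}\theta_j}-1$, use the power-sum matching hypothesis to cancel all low-order terms, and then Fourier-invert to a total-variation bound.

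The central algebraic step is to write $\phi_X(\theta)$ as a series in the monomials $\prod_j z_j^{\alpha_j}$. Multinomially expanding each factor $1+u_i$ with $u_i := \sum_j p_{i,j}z_j$ and taking logs yields
$$\log\phi_X(\theta) \;=\; \sum_{\ell\ge 1}\frac{(-1)^{\ell+1}}{\ell}\sum_{|\alpha|=\ell}\binom{\ell}{\alpha}\, M^X_\alpha \prod_{j=1}^k z_j^{\alpha_j},\qquad M^X_\alpha \;:=\; \sum_i \prod_j p_{i,j}^{\alpha_j},$$
so the coefficients of $\log\phi_X$ are weighted sums in exactly the power sums $M^X_\alpha$ appearing in hypothesis (3); equivalently, by multivariate Newton identities, $\phi_X$ itself admits a monomial expansion whose coefficients of total degree $\le w$ are determined by $\{M^X_\alpha : |\alpha|\le w\}$. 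Hypothesis (3) then forces $\phi_X-\phi_Y$ to be a tail supported on monomials of total degree exceeding $w$. Using $|z_j|\le 2$ together with the near-uniformity hypothesis (1) (which lets one pivot around a common reference $\bar p_j$ so that each residual perturbation contributes only $O(1/(ek^2))$ to $u_i$) keeps the relevant geometric series convergent and yields a pointwise estimate of the form $|\phi_X(\theta)-\phi_Y(\theta)|=O(2^{-w})$.

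To pass from the Fourier estimate to a TV bound, I would use the Parseval/Plancherel identity on the torus together with a Cauchy--Schwarz step, which bounds $\dtv(\bZ_X,\bZ_Y)$ by an $L^2$ integral of $\phi_X-\phi_Y$ weighted against the effective support of $\bZ_X$. Here hypothesis (2) is critical: combined with (1) it forces $\Pr[\bX_i=e_{j_0}]\ge \Omega(1/k)$ for \emph{every} $i$, so the marginal of $\bZ_X$ in the $e_{j_0}$-direction has $\Omega(n/k^2)$ variance. By a Kolmogorov--Rogozin-type estimate (\lref{Kolmogorov}) this forces rapid decay of $|\phi_X(\theta)|$ away from $\theta=0$, absorbing the $\mathrm{poly}(n)$ factor that Cauchy--Schwarz would otherwise introduce and producing the claimed $2^{-w+1}$ TV bound.

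The main obstacle I anticipate is this last step: a naive application of Parseval alone gives a bound polynomial in $n$, and extracting the correct $2^{-w+1}$ requires the three quantitative pieces---the Taylor tail, the multinomial combinatorics controlled by the $1/(4ek^3)$ slack in hypothesis (1), and the characteristic-function decay provided by hypothesis (2)---to align sharply. The specific constants $1/(4ek^3)$ and $n/k$ in the hypotheses are presumably calibrated to make this orchestration succeed, and I expect that much of the technical work in \cite{daskalakis2015structure} lies exactly in carrying out this calibration cleanly.
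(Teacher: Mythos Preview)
The paper does not prove Theorem~\ref{thm:moment-matching-PMD}; it is quoted verbatim as a black-box result from Daskalakis, Kamath and Tzamos~\cite{daskalakis2015structure} (see the sentence immediately preceding the theorem statement). There is therefore no ``paper's own proof'' to compare your proposal against.

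That said, your sketch is broadly in the spirit of the actual argument in \cite{daskalakis2015structure}, which does proceed via a Roos-style characteristic-function expansion in the variables $z_j = e^{i\theta_j}-1$, with the power-sum hypothesis killing the low-degree terms and a careful tail analysis handling the rest. A couple of points where your outline is imprecise: (i) the passage from an $L^\infty$ (or $L^2$) Fourier bound to a TV bound is not handled in \cite{daskalakis2015structure} by Kolmogorov--Rogozin plus Cauchy--Schwarz as you suggest, but rather by a direct bound on the $L^1$ norm of the difference of probability generating functions, and the decay input comes from the factor $|1+\sum_j p_{i,j} z_j|$ being bounded away from $1$ on most of the torus --- this is where hypotheses (1) and (2) enter, and the specific constant $1/(4ek^3)$ is indeed calibrated for this; (ii) your claim that the log-expansion has coefficients that are ``weighted sums in exactly the power sums $M_\alpha^X$'' is correct in spirit but the actual combinatorics (getting from matched power sums to matched coefficients of $\phi_X$ itself up to degree $w$) requires the multivariate Newton identities to be invoked carefully, since the coefficients of $\phi_X$ are elementary-symmetric-like polynomials in the $p_{i,\cdot}$, not power sums. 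You correctly flag the final orchestration as the hard part, and indeed it is.
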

An easy corollary of this is the following. 
\begin{corollary}~\label{corr:PMD-moment-matching}
Let $A = \{a_1, \ldots, a_k\}$ and let $\{\bX_1, \ldots, \bX_n\}$ and $\{\bY_1, \ldots, \bY_{m}\}$ be independent random variables supported on $A$. Assume that they satisfy: 
\begin{itemize}
\item For all $1 \le j \le k$, $|\max_i \Pr[\bX_i=a_j] - \min_i \Pr[\bX_i=a_j]| \le \frac{1}{4ek^3}$ and $|\max_i \Pr[\bY_i=a_j] - \min_i \Pr[\bY_i=a_j]| \le \frac{1}{4ek^3}$. 
\item There exists $j_0 \in [k]$ such that for all $1 \le i \le n$, $\Pr[\bX_i = a_{j_0}] = \max_{j \in [k]} \Pr[\bX_i = a_j]$ and $1 \le i \le m$, $\Pr[\bY_i = a_{j_0}] = \max_{j \in [k]} \Pr[\bY_i = a_j]$. 
\item For all $\alpha \in V_k(w)$, let
$$
\sum_{i=1}^n \prod_{j=1}^k \Pr[\bX_i=a_j]^{\alpha_j} = \sum_{i=1}^m \prod_{j=1}^k \Pr[\bY_i=a_j]^{\alpha_j}. 
$$
\end{itemize}
Then, $\dtv(\sum_{i=1}^n \bX_i, \sum_{i=1}^m \bY_i)  \le 2^{-w+1}$.
\end{corollary}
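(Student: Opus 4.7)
The plan is to reduce Corollary~\ref{corr:PMD-moment-matching} to Theorem~\ref{thm:moment-matching-PMD} via a canonical ``one-hot'' encoding of the support $A=\{a_1,\dots,a_k\}$, followed by the data-processing inequality for total variation distance.

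\textbf{Step 1: Encoding.} For each $i$, define a $k$-CRV $\bX'_i$ on $\{e_1,\dots,e_k\}$ by setting $\Pr[\bX'_i = e_j] = \Pr[\bX_i = a_j]$ for $j \in [k]$, and analogously define $\bY'_i$. By construction, there is a deterministic coordinate map $\phi:\mathbb{Z}_{\ge 0}^k \to \mathbb{R}$ given by $\phi(v) = \sum_{j=1}^k v_j a_j$ such that $\phi(\sum_i \bX'_i)$ has the same distribution as $\sum_i \bX_i$ (and similarly for $\bY$). This is the heart of the reduction.

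\textbf{Step 2: Verify the three hypotheses of Theorem~\ref{thm:moment-matching-PMD} for the encoded CRVs.} Hypothesis (i) transfers verbatim from the first bullet of the corollary. For hypothesis (ii), use the second bullet: for the distinguished index $j_0$, each $\Pr[\bX_i=a_{j_0}]$ is the \emph{maximum} among $\Pr[\bX_i=a_j]$, so it is at least $1/k$; summing over $i$ gives $\sum_{i=1}^n \Pr[\bX'_i = e_{j_0}] \ge n/k$, and analogously for $\bY'_i$. Hypothesis (iii) (moment matching over $V_k(w)$) transfers verbatim from the third bullet.

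\textbf{Step 3: Apply the theorem and push through $\phi$.} Theorem~\ref{thm:moment-matching-PMD} yields
\[
\dtv\Bigl(\textstyle\sum_{i=1}^n \bX'_i, \; \sum_{i=1}^m \bY'_i\Bigr) \le 2^{-w+1}.
\]
Since total variation distance is non-increasing under any measurable (deterministic) mapping of the underlying random variable, applying $\phi$ coordinatewise gives
\[
\dtv\Bigl(\textstyle\sum_{i=1}^n \bX_i, \; \sum_{i=1}^m \bY_i\Bigr) = \dtv\Bigl(\phi(\textstyle\sum_i \bX'_i), \; \phi(\sum_i \bY'_i)\Bigr) \le \dtv\Bigl(\textstyle\sum_i \bX'_i, \; \sum_i \bY'_i\Bigr) \le 2^{-w+1},
\]
which is the desired bound.

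\textbf{Anticipated difficulty.} There is no substantive obstacle; the entire content of the proof is the observation that the hypotheses of Corollary~\ref{corr:PMD-moment-matching} are precisely tailored so that the one-hot encoding meets the hypotheses of Theorem~\ref{thm:moment-matching-PMD}. The only step requiring a (trivial) argument rather than a direct transfer is hypothesis (ii), which follows from the pigeonhole observation that the maximum of $k$ probabilities summing to $1$ is at least $1/k$.
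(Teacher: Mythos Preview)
Your proposal is correct and follows essentially the same approach as the paper: encode each $A$-valued variable as a $k$-CRV, verify that the corollary's hypotheses imply those of Theorem~\ref{thm:moment-matching-PMD}, and then push the total variation bound through the linear map $\phi(v)=\langle \overline{a},v\rangle$. You are in fact slightly more explicit than the paper about why hypothesis~(ii) holds (the pigeonhole observation that the maximum probability is at least $1/k$), whereas the paper simply asserts that the conditions transfer.
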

\begin{proof}
Corresponding to each $\bX_i$, define the $k$-CRV $\tilde{\bX}_i$ as follows: For every $1 \le i \le n$ and $1 \le j \le k$, $\Pr[\tilde{\bX}_i = e_j] = \Pr[\bX_i = a_j]$. Likewise, for each $\bY_i$, define the $k$-CRV 
$\tilde{\bY}_i$ as $\Pr[\tilde{\bY}_i = e_j] = \Pr[\bY_i = a_j]$. Note that the three conditions of Corollary~\ref{corr:PMD-moment-matching} imply the three conditions required to apply Theorem~\ref{thm:moment-matching-PMD} for $\{\tilde{\bX}_i\}$ and $\{\tilde{\bY}_i\}$.
Applying Theorem~\ref{thm:moment-matching-PMD}, we obtain $\dtv(\sum_{i=1}^n \tilde{\bX}_i, \sum_{i=1}^m \tilde{\bY}_i) \le 2^{-w+1}$. Finally, note that $\sum_{i=1}^n \bX_i= \langle \overline{a}, \sum_{i=1}^n \tilde{\bX}_i \rangle$ and $\sum_{i=1}^n \bY_i= \langle \overline{a}, \sum_{i=1}^n \tilde{\bY}_i \rangle$ where $\overline{a} = (a_1, \ldots, a_k)$. This proves the corollary. 
\end{proof}

\section{Proof of Theorem~\ref{thm:Bernoulli1}}
We first make the following simple observation (which will be useful in the proof of Theorem~\ref{thm:hyper-main-bounded} as well).  
\begin{proposition}~\label{prop:verify}
Given (efficiently samplable)  random variables $\bX_1, \ldots, \bX_n$, capacity $C$, a subset $S \subseteq [n]$ and an error parameter $\epsilon>0$, there is a randomized $\mathsf{poly}(n/\epsilon)$ time algorithm which computes $\Pr[\sum_{j \in S} \bX_i>C]$ to additive error $\pm \epsilon$. 
\end{proposition}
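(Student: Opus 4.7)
The plan is to use a straightforward Monte Carlo estimator together with a Chernoff/Hoeffding concentration bound. Since each $\bX_j$ is efficiently samplable, one can in polynomial time draw a single sample from the distribution of $\bZ_S := \sum_{j \in S} \bX_j$ by independently sampling each $\bX_j$ for $j \in S$ and summing. Let $p := \Pr[\bZ_S > C]$ denote the quantity we wish to estimate.

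The algorithm I would propose is as follows. Set $N := \lceil C_0 \cdot \epsilon^{-2} \cdot \log(n/\epsilon) \rceil$ for a suitable absolute constant $C_0$. Draw $N$ independent samples $\bZ_S^{(1)}, \ldots, \bZ_S^{(N)}$ of $\bZ_S$ as described above, and define the indicator random variables $\bY_i := \mathbf{1}[\bZ_S^{(i)} > C]$. Output the empirical mean $\wh{p} := \frac{1}{N} \sum_{i=1}^N \bY_i$.

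To analyze correctness, observe that $\mathbf{E}[\bY_i] = p$ and each $\bY_i \in \{0,1\}$, so by Hoeffding's inequality,
\[
\Pr\bigl[ |\wh{p} - p| > \epsilon \bigr] \le 2 \exp(-2 N \epsilon^2).
\]
Choosing $C_0$ sufficiently large ensures this failure probability is at most, say, $1/n^{10}$ (or any desired inverse polynomial), so with high probability $\wh{p}$ is within $\pm \epsilon$ of $p$. The running time is $N$ times the per-sample cost, each of which is $\mathsf{poly}(n)$ under the assumption of efficient samplability; the total is therefore $\mathsf{poly}(n/\epsilon)$, as required.

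There is no real obstacle here: the statement asserts an additive approximation guarantee and the random variables $\{\bX_j\}$ are given as efficiently samplable, so the generic Monte Carlo scheme applies verbatim. The only subtlety worth noting is that the proposition is stated for arbitrary non-negative $\bX_j$ (not just those coming from the hypercontractive or finite-support classes), which is fine since Hoeffding's bound only uses the fact that the indicators $\bY_i$ take values in $\{0,1\}$ and is completely oblivious to the underlying distribution of $\bZ_S$.
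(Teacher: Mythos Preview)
Your proposal is correct and is precisely the ``trivial sampling algorithm'' the paper has in mind; the paper's own proof is a single sentence invoking exactly this Monte Carlo/Hoeffding argument without spelling out the details. You have simply made explicit what the paper leaves implicit.
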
 
\begin{proof}
This is the consequence of a trivial sampling algorithm. \end{proof}
We will use this proposition in a simple way. Namely, for any $S \subseteq [n]$, we use the notation 
$\Pr[\sum_{j \in S} \bX_i>C]\le_\epsilon q$ to denote that a $\pm \epsilon$ additive approximation to $\Pr[\sum_{j \in S} \bX_i>C]$ is bounded by $q$. Here are a few key observations about this relation. 
\begin{enumerate}
\item[(i)] If $\Pr[\sum_{j \in S} \bX_i>C]\le_\epsilon q$, then $\Pr[\sum_{j \in S} \bX_i>C] \le q+\epsilon$. 
\item[(ii)] If $\Pr[\sum_{j \in S} \bX_i>C] \le q-\epsilon$, then  $\Pr[\sum_{j \in S} \bX_i>C]\le_\epsilon q$. 
\item[(iii)] There is a randomized algorithm to check this relation in polynomial time. While the randomized algorithm has a non-zero probability of failure, it can be made inverse exponentially small in $n$ by increasing the running time by a factor of $O(n)$. Thus, for simplicity, we will assume that the relation is computed with probability $1$. 
\end{enumerate}
We will divide the proof of Theorem~\ref{thm:Bernoulli1} into two claims. First of all, given any stochastic knapsack instance  $\{(\bX_i,v_i)\}_{i=1}^n$ of type $(\mathcal{D}_B, \mathbb{Q}^+)$, capacity $C$, overflow probability $p$, error parameter $\epsilon>0$ and profit value $V$, define $\feas_{p,V}$ as 
$$
\feas_{p,V} = \{S \subseteq [n]: \Pr[\sum_{j \in S}\bX_j>C]\le p  \ \textrm{and} \ \sum_{j \in S} v_j =V\}.
$$
Let $V_{opt}$ be the maximum $V$ such that $\feas_{p,V}$ is non-empty. The algorithm in Theorem~\ref{thm:Bernoulli1} is a combination of two algorithms: The first one succeeds if $\mathsf{Var}(\sum_{j \in S} \bX_j)$ is large where $S$ is the target set in $\feas_{p,V_{opt}}$ and the second one succeeds if $(\sum_{j \in S} \bX_j)$ is small.  
 {Also, from now on, we will assume that $\epsilon>0$ is smaller than any explicitly specified constant occuring in our proofs. }
\begin{claim}~\label{clm:SK-large}
There is an algorithm \textsf{SK-Bernoulli-Large} with the following guarantee: Given a stochastic knapsack instance $\{(\bX_i,v_i)\}_{i=1}^n$ of type $(\mathcal{D}_B, \mathbb{Z}_M^+)$, capacity $C$, overflow probability $p$, error parameter $\epsilon>0$, it outputs a set $S^{\ast}$ with the following guarantee: For a 
profit value $V$, define the set $\feas_{p,V,1}$ as
$$
\feas_{p,V,1} = \{S \subseteq [n]: \Pr[\sum_{j \in S}\bX_j>C]\le p, \ \mathsf{Var}(\sum_{j \in S} \bX_j) \ge 1/\epsilon^2 \ \textrm{and} \ \sum_{j \in S} v_j =V\}. 
$$
Let $V_{opt,1}$ be the maximum value such that $\feas_{p,V,1}$ is non-empty. Then, $S^\ast \in \feas_{p+4 \cdot \epsilon}$ and $\sum_{j \in S^\ast}v_j \ge V_{opt,1}$. The running time of the algorithm is $\mathsf{poly}(n,1/\epsilon)$. 
\end{claim}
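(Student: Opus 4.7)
The plan for Claim~\ref{clm:SK-large} is to (i) round the Bernoulli probabilities to a coarse grid, (ii) dynamic-program over subsets with matching first two moments of the rounded variables, and (iii) invoke the translated Poisson approximation (Lemma~\ref{lem:CLT1}) to conclude that in the high-variance regime, matching the first two moments forces matching overflow probabilities up to additive $O(\epsilon)$.

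First I would round each $p_j := \Pr[\bX_j = 1]$ to the nearest multiple $\tilde{p}_j$ of $\epsilon/(4n)$, and let $\tilde{\bX}_j$ denote the corresponding Bernoulli variable. A standard coupling yields $\dtv\bigl(\sum_{j \in S}\bX_j, \sum_{j \in S}\tilde{\bX}_j\bigr)\le \epsilon/4$ for every $S \subseteq [n]$, so overflow probabilities of every subset are preserved to additive $\epsilon/4$. The payoff from rounding is that $\mu_S := \sum_{j\in S}\tilde{p}_j$ and $\sigma_S^2 := \sum_{j\in S}\tilde{p}_j(1-\tilde{p}_j)$ now take only $\poly(n/\epsilon)$ distinct values each, so the set of realizable pairs $(\mu_S, \sigma_S^2)$ is polynomial.

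Next, for each realizable pair $(\mu,\sigma^2)$, I would run a standard three-dimensional DP (state: item index, running sum of $\tilde{p}_j$, running sum of $\tilde{p}_j(1-\tilde{p}_j)$) to compute the maximum-profit subset $S$ with $(\mu_S,\sigma_S^2) = (\mu,\sigma^2)$; the entire enumeration runs in $\poly(n,1/\epsilon)$ time. For each candidate $S$ produced, I would invoke Proposition~\ref{prop:verify} to check whether $\Pr\bigl[\sum_{j\in S}\bX_j > C\bigr] \le_{\epsilon} p + 3\epsilon$, and output the highest-profit $S^\ast$ that passes this test.

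For correctness, fix $S_{\mathrm{opt}}$ witnessing $V_{\mathrm{opt},1}$ and let $(\mu^\ast,\sigma^{\ast 2}) = (\mu_{S_{\mathrm{opt}}},\sigma^2_{S_{\mathrm{opt}}})$. Since $\Var(\sum_{j \in S_{\mathrm{opt}}}\bX_j) \ge 1/\epsilon^2$ and rounding perturbs the variance only by $O(\epsilon)$, one has $\sigma^{\ast 2} \ge 1/(2\epsilon^2)$ for small enough $\epsilon$. The DP on $(\mu^\ast,\sigma^{\ast 2})$ returns some $S'$ with $\sum_{j \in S'} v_j \ge V_{\mathrm{opt},1}$ (since $S_{\mathrm{opt}}$ itself is a valid candidate), and Lemma~\ref{lem:CLT1} places both $\sum_{j \in S_{\mathrm{opt}}}\tilde{\bX}_j$ and $\sum_{j \in S'}\tilde{\bX}_j$ within total variation $(\sigma^\ast + 2)/\sigma^{\ast 2} = O(\epsilon)$ of the same distribution $\mathsf{TP}(\mu^\ast,\sigma^{\ast 2})$, hence within $O(\epsilon)$ of each other. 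The main obstacle, and the reason the argument requires some care, is the arithmetic of the four distinct $\epsilon$-slacks in play (probability rounding, the translated Poisson error on $S_{\mathrm{opt}}$, the translated Poisson error on the DP output, and the additive error of Proposition~\ref{prop:verify}); these must be balanced so that the legitimate $S'$ reliably passes the verification test while any subset whose true overflow exceeds $p + 4\epsilon$ reliably fails, yielding both $\sum_{j \in S^\ast} v_j \ge V_{\mathrm{opt},1}$ and $S^\ast \in \feas_{p+4\epsilon}$.
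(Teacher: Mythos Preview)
Your proposal is correct and essentially identical to the paper's proof: both round the Bernoulli parameters to multiples of $\epsilon/(4n)$, dynamic-program over the resulting polynomially many pairs of first two moments, apply the translated Poisson approximation (Lemma~\ref{lem:CLT1}) in the $\sigma^2 \ge \Omega(1/\epsilon^2)$ regime, and then verify by sampling. The only cosmetic difference is that the paper parameterizes the DP state by $(\sum q_\ell,\sum q_\ell^2)$ rather than $(\sum \tilde p_j,\sum \tilde p_j(1-\tilde p_j))$, which is an equivalent choice since the two pairs are related by an invertible linear map.
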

\begin{claim}~\label{clm:SK-small}
There is an algorithm \textsf{SK-Bernoulli-Small} with the following guarantee: Given a stochastic knapsack instance $\{(\bX_i,v_i)\}_{i=1}^n$ of type $(\mathcal{D}_B, \mathbb{Z}_M^+)$, capacity $C$, overflow probability $p$, error parameter $\epsilon>0$, it outputs a set $S^{\ast}$ with the following guarantee: For a profit 
$V$, define the set $\feas_{p,V,2}$ as
$$
\feas_{p,V,2} = \{S \subseteq [n]: \Pr[\sum_{j \in S}\bX_j>C]\le p, \ \mathsf{Var}(\sum_{j \in S} \bX_j) \le 1/\epsilon^2 \ \textrm{and} \ \sum_{j \in S} v_j =V\}. 
$$
Let $V_{opt,2}$ be the maximum value such that $\feas_{p,V,2}$ is non-empty. Then, $S^\ast \in \feas_{p+4 \cdot \epsilon}$ and $\sum_{j \in S^\ast}v_j \ge V_{opt,2}$. The running time of the algorithm is $\mathsf{poly}(n,(1/\epsilon)^{\log^2(1/\epsilon)})$. 
\end{claim}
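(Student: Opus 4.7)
My plan for Claim~\ref{clm:SK-small} is to combine the moment-matching Lemma~\ref{lem:moment-matching} with the Poisson approximation Lemma~\ref{lem:CLT2} inside a dynamic programming framework, along the lines sketched for case (ii) of the Theorem~\ref{thm:Bernoulli1} overview. Writing $p_i := \mathbf{E}[\bX_i]$, the goal is to produce a set $S^{\ast}$ with profit at least $V_{opt,2}$ whose Bernoulli sum $\bZ_{S^{\ast}} = \sum_{i \in S^{\ast}} \bX_i$ is within TV distance $O(\epsilon)$ of $\bZ_{S_{opt}}$, where $S_{opt} \in \feas_{p,V_{opt,2},2}$; together with Proposition~\ref{prop:verify} this yields the claimed $4\epsilon$ slack on overflow probability.

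First I would round each $p_i$ to a grid of width $\eta = \epsilon/\poly(n)$, incurring at most $\epsilon$ additive error in overflow probability by a union bound. Next I would partition items into three classes: \emph{tiny} ($p_i \le \epsilon$), \emph{near-one} ($p_i \ge 1-\epsilon$), and \emph{moderate} ($\epsilon < p_i < 1-\epsilon$). Since $\Var(\bZ_{S_{opt}}) \le 1/\epsilon^2$ and each moderate item contributes variance at least $\epsilon(1-\epsilon)$, at most $O(1/\epsilon^3)$ moderate items lie in $S_{opt}$; by contrast, tiny and near-one items can be numerous.

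For the tiny class, Lemma~\ref{lem:CLT2} implies any subset-sum of tiny Bernoullis is within TV distance $\epsilon$ of a Poisson with mean $\mu_t = \sum p_i$. Symmetrically, each near-one item's complement $\bY_i = 1-\bX_i$ has expectation at most $\epsilon$, so the subset-sum of these complements is within TV distance $\epsilon$ of a Poisson with mean $\mu_n = \sum (1-p_i)$, and the contribution of near-one items to $\sum_{i \in S} \bX_i$ equals their count minus this complemented sum. Hence the tiny and near-one classes are summarized by the scalar pair $(\mu_t, \mu_n)$ together with the number of near-one items selected. For the moderate class, Lemma~\ref{lem:moment-matching} says matching the first $T = O(\log(1/\epsilon))$ power sums $\sum p_i^j$ with those of $S_{opt}$'s moderate part brings the two PBDs to within TV distance $\epsilon$, so this class is summarized by its $T$-dimensional partial moment vector.

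The overall algorithm is a DP over items whose state is (current item index, number of near-one items selected, discretized $\mu_t$, discretized $\mu_n$, discretized $T$-dim moment vector over selected moderate items), storing the maximum profit attainable at that state. Because at most $O(1/\epsilon^3)$ moderate items can ever enter $S_{opt}$, the moment-vector state space is bounded by $(1/\epsilon)^{O(T^2)} = (1/\epsilon)^{O(\log^2(1/\epsilon))}$, while all scalar fields have $\poly(n,1/\epsilon)$ many values, giving a total DP runtime of $\poly(n)\cdot(1/\epsilon)^{O(\log^2(1/\epsilon))}$. After the DP, I would enumerate over reachable state profiles, reconstruct the corresponding candidate set $S$, verify its overflow probability using Proposition~\ref{prop:verify}, and output the highest-profit candidate whose verified overflow is at most $p+4\epsilon$. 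The main obstacle is the careful error accounting: the $4\epsilon$ slack must simultaneously absorb probability rounding, Poisson-approximation error on the tiny and near-one classes, moment-matching error on the moderate class, sampling error from Proposition~\ref{prop:verify}, and moment-vector discretization, while $\eta$, $T$, and the three class thresholds are tuned so the moment DP remains within the stated size.
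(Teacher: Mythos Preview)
Your approach is essentially the paper's, and the high-level plan (partition by $p_i$, Poisson approximation for the extreme classes via Lemma~\ref{lem:CLT2}, moment matching via Lemma~\ref{lem:moment-matching} for the middle class, then dynamic programming over the resulting summary statistics) is correct. Two technical points, however, would need to be addressed.

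First, Lemma~\ref{lem:moment-matching} as stated requires all Bernoulli parameters to be at most $1/2$, whereas your single ``moderate'' class spans $(\epsilon,1-\epsilon)$ and so cannot invoke the lemma directly. The paper fixes this by splitting the moderate range at $1/2$ into two subclasses $B_3$ and $B_4$, handling $B_4$ via complements (tracking the item count together with the power sums of $1-q_\ell$), exactly parallel to your near-one treatment.

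Second, your running-time justification does not go through. With a single grid $\eta=\epsilon/\poly(n)$, the $j$-th power sum over moderate items is an integer multiple of $\eta^j$, so the number of possible values of that coordinate scales like $(\poly(n)/\epsilon)^{O(j)}$; the product over $j=1,\dots,T$ is $(n/\epsilon)^{O(T^2)}$, not $\poly(n)\cdot(1/\epsilon)^{O(T^2)}$. The bound $|S_{opt}\cap\text{moderate}|=O(1/\epsilon^3)$ does not by itself shrink this state space, and discretizing the moment vector after the fact is incompatible with Lemma~\ref{lem:moment-matching}, which needs \emph{exact} equality of the power sums. The fix (which the paper uses) is to round the moderate $p_\ell$'s themselves to an $n$-independent grid of width $\Theta(\epsilon^4)$: this costs at most $O(1/\epsilon^3)\cdot\Theta(\epsilon^4)=O(\epsilon)$ in total variation precisely because only $O(1/\epsilon^3)$ moderate items can participate, and now the power-sum state space is genuinely $(1/\epsilon)^{O(T^2)}$ with no dependence on $n$.
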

Note that Theorem~\ref{thm:Bernoulli1} follows easily as a combination of Claim~\ref{clm:SK-large} and Claim~\ref{clm:SK-small}.  Let $V_{opt}$ be the maximum value for which $\feas_{p,V}$ is non-empty.  For $C_0=8$, run \textsf{SK-Bernoulli-Large} and \textsf{SK-Bernoulli-Small} with error parameter $\epsilon/C_0$. 
Let the output sets be $S^{\ast,\ell}$ and $S^{\ast, s}$ respectively. We discard $S \in  \{S^{\ast,\ell}, S^{\ast, s}\}$ if $\Pr[\sum_{j \in S} \bX_j >C] \le_{\epsilon/4
} p +\frac{3\epsilon}{4}$. We are guaranteed that both 
$S^{\ast,\ell}$ and $S^{\ast, s}$ are not discarded. We now output the set $S \in \{S^{\ast,\ell}, S^{\ast, s}\}$ which maximizes $\sum_{j \in S} v_j$. 
This finishes the proof of Theorem~\ref{thm:Bernoulli1}.

\begin{proofof}{Claim~\ref{clm:SK-large}}
For $\ell \in [n]$, let us define $p_\ell = \mathbf{E}[\bX_\ell]$ and $q_\ell$ be $p_\ell$ rounded to the nearest multiple of $\epsilon/(4n)$. For $ \ell \in [n]$, let $\{\bY_\ell\}_{\ell=1}^n$ be independent Bernoulli random variables such that $\mathbf{E}[\bY_\ell]=q_\ell$.
Define the set $A_1$ and $A_2$ as: 
\[
A_1 = \bigg\{ j \cdot \frac{\epsilon}{4n} : j \in \mathbb{N} \ \textrm{and} \ j \cdot \frac{\epsilon}{4n} \le n \bigg\}, \ 
A_2 = \bigg\{ j \cdot \frac{\epsilon^2}{16n^2} : j \in \mathbb{N} \ \textrm{and} \ j \cdot \frac{\epsilon^2}{16n^2} \le n \bigg\}. 
\]
 For $1 \le \ell \le n$, define item $J_\ell$ with size  $(q_\ell, q_\ell^2)$ and profit  $v_\ell$. We are now ready to define the algorithm \textsf{SK-Bernoulli-Large}. 
\begin{enumerate}
\item Let $V_{\max} =0$ and $S_{\max} =\phi$. 
\item For $(x,y) \in A_1 \times A_2$, 
\item \hspace{10pt} Run \textsf{Pseudo-knapsack} with items $\{J_\ell\}_{\ell=1}^n$, target size $(x,y)$ and quantization is $(\epsilon/4n, \epsilon^2/16n^2)$. 
\item  \hspace{10pt} If the output is $S \subseteq [n]$ and $\Pr[\sum_{\ell \in S} \bX_j >C] \le_{\epsilon/2} p+3.5 \epsilon$ and $\sum_{\ell \in S} v_\ell >V_{\max}$, $S_{\max} \leftarrow S$. 
\item Output $S_{\max}$. 
\end{enumerate}
The running time is computed as follows: Every invocation of \textsf{Pseudo-knapsack} takes time $\mathsf{poly}(n,1/\epsilon)$. Since the cardinality of $A_1 \times A_2$ is $\mathsf{poly}(n/\epsilon)$, the total running time is 
 $\mathsf{poly}(n,1/\epsilon)$. 
 
 To prove correctness, it suffices to show that there exists $(x_0, y_0) \in A_1 \times A_2$ such that the output of \textsf{Pseudo-knapsack} with target $(x_0,y_0)$  returns set $S$ such that {$\Pr[\sum_{j \in S} \bX_j>C] \le p+3\epsilon$ } and
$\sum_{j \in S} v_j \ge V_{opt,1}$.  
To show this, let $V = V_{opt,1}$ and let $S_{opt} \subseteq [n]$ be such that $S_{opt} \in \feas_{p,V,1}$ and $ V = \sum_{\ell \in  S_{opt}}v_\ell$. Then, it follows that $\Pr[\sum_{\ell \in S_{opt}} \bY_\ell >C] \le p+\epsilon/4$. Now, note that by construction $\sum_{\ell \in S_{opt}} q_\ell \in A_1$ and $\sum_{\ell \in S_{opt}} q_\ell^2 \in A_2$. Let $x_0=\sum_{\ell \in S_{opt}} q_\ell$ and $y_0=\sum_{\ell \in S_{opt}} q_\ell^2$. Then, the routine \textsf{Pseudo-knapsack} returns set $S$ such that 
$\sum_{\ell \in S} q_\ell = x_0$, $\sum_{\ell \in S} q_\ell^2 = y_0$ and $\sum_{\ell \in S} v_\ell \ge V$. 
This implies that 
\begin{equation}~\label{eq:a11}
\mathbf{E}\big[\sum_{\ell \in S} \bY_\ell \big] = \sum_{\ell \in S} q_\ell = \sum_{\ell \in S_{opt}} q_\ell = \mathbf{E}\big[\sum_{\ell \in S_{opt}} \bY_\ell \big] . 
\end{equation}
\begin{equation}~\label{eq:a21}
\mathsf{Var}\bigg(\sum_{\ell \in S} \bY_\ell\bigg)  = \sum_{\ell \in S} q_\ell - q_\ell^2 =\sum_{\ell \in S_{opt}} q_\ell - q_\ell^2 =\mathsf{Var}\bigg(\sum_{\ell \in S_{opt}} \bY_\ell\bigg)
\end{equation}
Further, note that 
\[
\mathsf{Var}\bigg(\sum_{\ell \in S_{opt}} \bY_\ell\bigg)=\mathsf{Var}\bigg(\sum_{\ell \in S} \bY_\ell\bigg)  \ge \mathsf{Var}\bigg(\sum_{\ell \in S} \bX_\ell\bigg) - \frac{\epsilon}{4n} \cdot n \ge \frac{1}{\epsilon^2} - \epsilon \ge \frac{0.9}{ \epsilon^2}. 
\]
The last inequality relies on assuming that $\epsilon>0$ is sufficiently small. Combining Lemma~\ref{lem:CLT1} and (\ref{eq:a11}), (\ref{eq:a21}), we obtain 
\[
\dtv \big(\sum_{\ell \in S_{opt}} \bY_\ell,\sum_{\ell \in S} \bY_\ell \big) \le \frac{2\epsilon}{0.9}.
\]
Now, observe that
\[
\dtv\big(\sum_{\ell \in S_{opt}} \bY_\ell,\sum_{\ell \in S_{opt}} \bX_\ell \big) \le \frac{\epsilon}{4}, \quad \dtv\big(\sum_{\ell \in S} \bY_\ell,\sum_{\ell \in S} \bX_\ell \big) \le \frac{\epsilon}{4}
\]
This implies that 
\[
\dtv \big( \sum_{\ell \in S} \bX_\ell, \sum_{\ell \in S_{opt}} \bX_\ell\big) < 3 \epsilon.  
\]
As $\Pr[\sum_{j \in S_{opt}}\bX_j>C] \le p$, we have $\Pr[\sum_{j \in S_{}}\bX_j>C] \le p + 3 \epsilon$. This concludes the proof.  
\end{proofof}

\begin{proofof}{Claim~\ref{clm:SK-small}}
Let us begin by defining a partition of $[n]$ into four sets $B_1, B_2, B_3, B_4$ as follows: 
$$
B_1 = \big\{\ell \in [n]: p_\ell \le \frac{\epsilon}{100}\big\}, \quad B_2 = \big\{\ell \in [n]: p_\ell \ge 1- \frac{\epsilon}{100}\}, 
$$
$$
B_3 = \big\{\ell \in [n]: \frac{\epsilon}{100} \le p_\ell \le 1/2\big\}, \quad 
B_4 = \big\{\ell \in [n]: 1/2 \le p_\ell \le 1- \frac{\epsilon}{100}\big\}. 
$$
For $1 \le \ell \le n$, we define $q_\ell $ as follows: For elements in $B_1$ and $B_2$, $q_\ell$ is $p_\ell$ rounded to the nearest multiple of $\frac{\epsilon}{4n}$. For elements in $B_3$ and $B_4$, $q_\ell$ is $p_\ell$ rounded to the nearest multiple of $\epsilon^4/1000$.
First, let us define the set $A_1$ (similar to the proof of Claim~\ref{clm:SK-large}) as
\[
A_1 = \bigg\{ j \cdot \frac{\epsilon}{4n} : j \in \mathbb{N} \ \textrm{and} \ j \cdot \frac{\epsilon}{4n} \le n \bigg\}. 
\] 
Next, let us define $T_0 = 4 \log (1/\epsilon)$ and for $T \in \{1,2, \ldots, T_0\}$, we define the set $\widetilde{A}_T$  as
\[
\widetilde{A}_T = \bigg\{ j \cdot \frac{\epsilon^{4T}}{1000^T}: j \in \mathbb{N} \ \textrm{and} \ j \cdot \frac{\epsilon^{4T}}{1000^T} \le \frac{1}{\epsilon^2}\bigg\}.
\]
Next, for $\ell \in [n]$, we construct items $J_\ell$ whose sizes are defined as follows. 
$$
J_\ell = \begin{cases} (q_\ell) &\textrm{if} \ \ell \in B_1 
\\(1,1-q_\ell) &\textrm{if} \ \ell \in B_2 \\
(q_\ell, q_\ell^2, \ldots, q_\ell^{T_0}) &\textrm{if} \ \ell \in B_3 \\
(1,(1-q_\ell), (1-q_\ell)^2, \ldots, (1-q_\ell)^{T_0})&\textrm{if} \ \ell \in B_4 \\
\end{cases}
$$
Further, for all $\ell \in [n]$, the profit of $J_\ell$ is defined to be $v_\ell$. 
 We are now ready to describe \text{SK-Bernoulli-Small}. 
\begin{enumerate}
\item Set $V_{\max} =0$ and $S_{\max} = \phi$. 
\item  For $x_1 \in A_1$, $x_2 \in [n] \times A_1$, $x_3 \in \tilde{A}_1 \times \ldots \times \tilde{A}_{T_0}$ and $x_4 \in [n] \times \tilde{A}_1 \times \ldots \times \tilde{A}_{T_0}$, 
\item \hspace*{5pt}Run \textsf{Pseudo-knapsack} with items $\{J_\ell\}_{\ell \in B_1}$, target $x_1$ and quantization $(\epsilon/(4n))$. 
\item  \hspace*{5pt}Run \textsf{Pseudo-knapsack} with items $\{J_\ell\}_{\ell \in B_2}$, target $x_2$ and quantization $(1,\epsilon/(4n))$. 
\item \hspace*{5pt} Run \textsf{Pseudo-knapsack} with items $\{J_\ell\}_{\ell \in B_3}$, target $x_3$ and quantization $(\epsilon^4/1000,\ldots, \epsilon^{4T_0}/1000^{T_0})$. 
\item \hspace*{5pt} Run \textsf{Pseudo-knapsack} with items $\{J_\ell\}_{\ell \in B_4}$, target $x_4$ and quantization $(1,\epsilon^4/1000,\ldots, \epsilon^{4T_0}/1000^{T_0})$. 
\item \hspace*{5pt} Let the outputs of Steps 3, 4, 5, 6 be $S_1, S_2, S_3, S_4$. Let $S = S_1 \cup S_2 \cup S_3 \cup S_4$.
\item  \hspace*{5pt} If $\Pr[\sum_{j \in S} \bX_j >C] \le_{\epsilon/2} p +3.5 \epsilon$ and $\sum_{j \in S} v_j \ge V_{\max}$, set $S_{\max} \leftarrow S$ and $V_{\max} =   \sum_{j \in S} v_j$. 
\item Output the set $S_{\max}$. 
\end{enumerate}
Note that the total number of choices for  $(x_1, x_2)$ is $\mathsf{poly}(n/\epsilon)$,  $(x_3,x_4)$ is $n \cdot (1/\epsilon)^{O(T_0^2)}$. Further, for a fixed choice of $(x_1, x_2, x_3, x_4)$, Theorem~\ref{thm:multidimensional} implies that the running time of Steps 3, 4, 5, 6 is upper bounded by $\mathsf{poly}(n, (1/\epsilon)^{T_0^2})$. As $T_0 = O(\log(1/\epsilon))$, this implies our upper bound on the running time.

As in Claim~\ref{clm:SK-large}, it suffices to show that if for $V = V_{opt,2}$, $\feas_{p,V,2}$ is non-empty, 
then there exists $(x_1, x_2, x_3, x_4)$ such that if the algorithm returns $S_1, S_2, S_3, S_4$, then $S= S_1 \cup S_2 \cup S_3 \cup S_4$,  $\Pr[\sum_{j \in S} \bX_j >C] \le p + 3 \epsilon$ and $\sum_{j \in S} v_j \ge V$. To show this, let $S_{opt} \in \feas_{p,V,2}$. For $1 \le i \le 4$, $S_{opt,i} = S_{opt} \cap B_i$ and $V_{opt,2,i} = \sum_{\ell \in S_{opt,i}} v_\ell$. Let us define $x_i= \sum_{\ell \in S_{opt,i}} J_\ell$. Then, steps 3-6 of the algorithm return sets $S_1, S_2, S_3, S_4$ such that for all $1 \le i \le 4$, $\sum_{\ell \in S_i} J_\ell = x_i$ and $\sum_{\ell \in S_i} v_\ell \ge V_{opt,2,i}$. We now claim that  for all $1 \le i \le 4$, \begin{equation}~\label{eq:diff-3}\dtv(\sum_{\ell \in S_i} \bY_\ell , \sum_{\ell \in S_{opt,i}} \bY_\ell) \le \epsilon/4 \end{equation}
\\ \textbf{Case $i=1$:} We apply Lemma~\ref{lem:CLT2} to obtain 
\[
\dtv\bigg(\sum_{\ell \in S_1} \bY_\ell, \mathsf{Poi}\bigg(\sum_{\ell \in S_1} \mathbf{E}[\bY_\ell]\bigg) \bigg) \le \max_{\ell \in S_1} q_\ell \ \textrm{and} \ \dtv\bigg(\sum_{\ell \in S_{opt,1}} \bY_\ell, \mathsf{Poi}\bigg(\sum_{\ell \in S_{opt,1}} \mathbf{E}[\bY_\ell]\bigg)\bigg) \le \max_{\ell \in S_{opt,1}} q_\ell
\]
As $q_\ell$ is obtained by rounding $p_\ell$ to the nearest multiple of $\epsilon/(4n)$, hence applying the definition of $B_1$, we obtain that $\max_{\ell \in B_1} q_\ell \le \epsilon/100$. Additionally, by guarantee of \textsf{Pseudo-knapsack} (in Step 3 of the algorithm), we have $\sum_{\ell \in S_{opt,1}} \mathbf{E}[\bY_\ell]= \sum_{\ell \in S_{1}} \mathbf{E}[\bY_\ell]$. This implies
$$
\dtv(\sum_{\ell \in S_{opt,1}} \bY_\ell, \sum_{\ell \in S_{1}} \bY_\ell) \le \frac{\epsilon}{50}, 
$$
thus proving (\ref{eq:diff-3}) for $i=1$. \\
\textbf{Case $i=2$: } For $\ell \in B_2$, define $\bZ_\ell = 1-\bY_\ell$. Now, applying the same argument as $i=1$, one obtains  
$$
\dtv\bigg(\sum_{\ell \in S_{opt,2}} \bZ_\ell, \sum_{\ell \in S_{2}} \bZ_\ell\bigg) \le \frac{\epsilon}{50}. 
$$
Furthermore, by guarantee of \textsf{Pseudo-knapsack} (in Step 4 of the algorithm), we have 
$|S_2| = |S_{opt,2}|$. 
Combining this with the above equation, we obtain  (\ref{eq:diff-3}) for $i=2$. \\
\textbf{Case $i=3$:} By the guarantees of the \textsf{Pseudo-knapsack} (in Step 5 of the algorithm), it follows that for every $j \le T_0$, 
$$
\sum_{\ell \in S_{3}} q_\ell^j = \sum_{\ell \in S_{opt,3}} q_\ell^j. 
$$
Using Lemma~\ref{lem:moment-matching}, it follows that
\[
\dtv\bigg(\sum_{\ell \in S_{opt,3}} \bY_\ell, \sum_{\ell \in S_{3}} \bY_\ell\bigg) \le 13 \cdot (T_0+1)^{\frac14} \cdot 2^{-(T_0+1)/2}. 
\]
Plugging in $T_0 = 4\log(1/\epsilon)$  and assuming $\epsilon$ is sufficiently small, we obtain (\ref{eq:diff-3}) for $i=3$. \\
\textbf{Case $i=4$:} For $\ell \in B_4$, define $\bZ_\ell = 1-\bY_\ell$. Applying the same argument as the case $i=3$, 
we obtain 
$$
\dtv\bigg(\sum_{\ell \in S_{opt,4}} \bZ_\ell, \sum_{\ell \in S_{4}} \bZ_\ell\bigg) \le \frac{\epsilon}{50}. 
$$
However, by guarantee of \textsf{Pseudo-knapsack} (in Step 6 of the algorithm), $|S_{opt,4}| = |S_4|$. Combining this, we obtain (\ref{eq:diff-3}) for $i=4$.
This finishes the proof of (\ref{eq:diff-3}). Next, we claim that for $1 \le i \le 4$, 
\begin{equation}~\label{eq:diff-4}
\dtv \bigg( \sum_{\ell \in S_{opt,i}} \bX_\ell, \sum_{\ell \in S_{opt,i}} \bY_\ell\bigg) \le \frac{\epsilon}{4} \  \textrm{and} \ \dtv \bigg( \sum_{\ell \in S_{i}} \bX_\ell, \sum_{\ell \in S_{i}} \bY_\ell\bigg) \le \frac{\epsilon}{4}. 
\end{equation}
We will only prove the first inequality, the proof of the second one is exactly the same. For $i=1,2$, (\ref{eq:diff-4}) follows from the fact that $\dtv(\bX_\ell, \bY_\ell) \le \frac{\epsilon}{4n}$ (for $i \in B_1, B_2$) and $|S_{opt,1}| + |S_{opt,2}| \le n$. For $i=3,4$, we claim that $|S_{opt,3}|, |S_{opt,4}| \le 100/\epsilon^3$. To see this, note that  $\sum_{\ell \in S_{opt,3}} q_\ell \le 1/\epsilon^2$ and on the other hand, for all $\ell \in B_3$, $q_\ell \ge \epsilon/100$. This implies $|S_{opt,3}| \le 100/\epsilon^3$. The proof for $|S_{opt,4}|$ is analogous. However, for $\ell \in B_3, B_4$, $\dtv(\bX_\ell, \bY_\ell) \le \frac{\epsilon^4}{1000}$. Thus, $\dtv(\sum_{\ell \in S_{opt,i}} \bX_\ell, \sum_{\ell \in S_{opt,i}} \bY_\ell) \le \frac{\epsilon}{10}$ for $i \in \{3,4\}$. This proves (\ref{eq:diff-4}). Combining (\ref{eq:diff-3}) and (\ref{eq:diff-4}), we obtain that 
\begin{eqnarray*}
\dtv\bigg(\sum_{\ell \in S} \bX_\ell,  \sum_{\ell \in S_{opt}} \bX_\ell \bigg) &\le& \sum_{i=1}^4 \dtv\bigg(\sum_{\ell \in S_i} \bX_\ell,  \sum_{\ell \in S_{opt,i}} \bX_\ell \bigg) \\
&\le& \sum_{i=1}^4 \dtv\bigg(\sum_{\ell \in S_i} \bY_\ell,  \sum_{\ell \in S_{opt,i}} \bY_\ell \bigg) + \sum_{i=1}^4 \dtv\bigg(\sum_{\ell \in S_i} \bX_\ell,  \sum_{\ell \in S_{i}} \bY_\ell \bigg) \\ &+& \sum_{i=1}^4 \dtv\bigg(\sum_{\ell \in S_{opt,i}} \bX_\ell,  \sum_{\ell \in S_{opt,i}} \bY_\ell \bigg). 
\end{eqnarray*}
Applying (\ref{eq:diff-3}) and (\ref{eq:diff-4}), we get all the three terms on the right hand side are bounded by $\epsilon$ and thus $$\dtv(\sum_{\ell \in S} \bX_\ell, \sum_{\ell \in S_{opt}} \bX_\ell) \le 3\epsilon.$$ This proves $\Pr[\sum_{j \in S} \bX_j>C] \le p + 3\epsilon$ which finishes the proof.

\end{proofof}

\section{Proof of Theorem~\ref{thm:Bernoulli2}} 
The proof of this theorem will be quite similar to the proof of Lemma~\ref{clm:SK-small}. We  start with the setup. For every $\ell \in [n]$, define $\bY_\ell$ to be an independent $A$-valued random variables obtained by rounding the probabilities in $\bX_\ell$ to the nearest multiple of $\frac{\epsilon}{4nk}$. Divide the interval $[0,1]$ into $s=\lceil 4ek^3 \rceil$ equal sized intervals; call them $I_1, \ldots, I_s$. We now define $\Phi\colon [n] \rightarrow [k] \times [s]^k$ as follows: (i) $\Phi_1(i) = \arg \max_{j} \Pr[\bX_i=a_j]$ (break  lexicographically if there is a tie).  (ii) For $1 \le j \le k$, $\Phi_{j+1}(i) = t$ if $\Pr[\bY_i= a_j] \in [(t-1)/s, t/s]$. The reason for defining the map $\Phi$ is simple: 
\begin{itemize}
\item For any $\beta \in [k] \times [s]^k$ and $1 \le j \le k$,  $|\max_{i \in \Phi^{-1}(\beta)} \Pr[\bX_i = a_j]- \min_{i \in \Phi^{-1}(\beta)} \Pr[\bX_i = a_j]| \le \frac1s \le \frac{1}{4ek^3}$. 
\item For any $\beta \in [k] \times [s]^k$ and any subset $S \subseteq \Phi^{-1}(\beta)$, $\sum_{\ell \in S} \Pr[\bX_i = \beta_1] \ge \frac{|S|}{k}$. 
\end{itemize}
Thus, this meets the first two conditions of Corollary~\ref{corr:PMD-moment-matching}. 
We now define $w \in \mathbb{Z}$ as $w = \lceil \log (16k s^k/\epsilon) \rceil$. Recall that $V_k(w)$ was defined as $V_k(w) = \{v \in \mathbb{Z}^k: v_i \ge0 \ \wedge \ \sum_i v_i \le w\}$. For every $\alpha \in V_k(w)$, let $A_\alpha$ denote the set defined as:
\[
A_\alpha= \bigg\{ j \cdot \frac{\epsilon^{\Vert \alpha \Vert_1}}{(4nk)^{\Vert \alpha \Vert_1}} : j \cdot \frac{\epsilon^{\Vert \alpha \Vert_1}}{(4nk)^{\Vert \alpha \Vert_1}} \le n \textrm{ and } j \in \mathbb{N}\bigg\}. 
\]
Note that the set $A_\alpha$ only depends on $\Vert \alpha \Vert_1$. We are naming $A_{\alpha}$ using $\alpha$ for notation reasons.

Finally, as in Claim~\ref{clm:SK-large} and Claim~\ref{clm:SK-small}, we will run the routine \textsf{Pseudo-knapsack}. For the routine, we define items 
 $\{J_\ell\}_{\ell \in [n]}$ as follows: Its ``size" is given by a $|V_k(w)|$-dimensional vector indexed by elements of $V_k(w)$. In particular, for $\alpha \in V_k(w)$, the $\alpha^{th}$ coordinate, denoted by $J_{\ell, \alpha}$ is given by $\prod_{j=1}^k \Pr[\bY_\ell=j]^{\alpha_j}$. Observe that crucially, for any subset $S \subseteq [n]$, $\sum_{\ell \in S} J_{\ell, \alpha} \in A_{\alpha}$. Further, we define the profit of $J_{\ell}$ to be $v_\ell$. 
 We are now ready to define the algorithm.  
\begin{enumerate}
\item Set $V_{\max} =0$ and $S_{\max}=\phi$. 
\item For $\{x_{\beta, \alpha} \in A_{\alpha}\}_{\beta \in [k] \times [s]^k, \alpha \in V_k(w) }$, 
\item  \hspace*{8pt} Run \textsf{Pseudo-knapsack} 
with items $\{J_{\ell}\}_{\ell \in \Phi^{-1}(\beta)}$ 
with target vector $x_{\beta}$ of dimension $|V_k(w)|$ 
and  \hspace*{11pt}the $\alpha^{th}$ coordinate is $x_{\beta, \alpha}$. 
\item \hspace*{8pt} The quantization list is a vector of dimension $|V_k(w)|$ whose $\alpha^{th}$ coordinate is $\frac{\epsilon^{\Vert \alpha \Vert_1}}{(4nk)^{\Vert \alpha \Vert_1}}$.  
\item \hspace*{8pt} Let the output of sets be $S_{\beta}$. 
Let $S = \cup_{\beta} S_{\beta}$. 
 \item \hspace*{8pt} If $\Pr[\sum_{\ell \in S} \bX_j >C] \le_{\epsilon/4} p + \frac{3\epsilon}{4}$ and $V_{\max} \leq \sum_{\ell \in S} v_\ell$, then set $S_{\max} \leftarrow S$ and $V_{\max} \leftarrow V$. 
\end{enumerate}
First, we bound the running time of this algorithm. Note that  the size of $A_{\alpha}$ is $ n \cdot \frac{(4nk)^{\Vert \alpha \Vert_1}}{\epsilon^{\Vert \alpha \Vert_1}}$. Further, $\Vert \alpha \Vert_1 \le w = O(\log (1/\epsilon) + k \log k)$. Thus, we have that for all $\alpha \in V_k(w)$, 
\[
|A_{\alpha}| = \bigg( \frac{nk}{\epsilon} \bigg)^{O(k \log k + \log(1/\epsilon))}. 
\]
As the total size of $|V_k(w)| \le w^k = O(  \log(1/\epsilon) +   k \cdot \log k)^k$, by Theorem~\ref{thm:multidimensional}, this bounds the running time to 
\[
 \bigg( \frac{nk}{\epsilon} \bigg)^{O(k \log k + \log(1/\epsilon)) \cdot (\log(1/\epsilon) + k \cdot \log k)^k} =\bigg( \frac{nk}{\epsilon} \bigg)^{O(k \log k + \log(1/\epsilon))^{k+1}}
\]
Recall that for $V \ge 0$, we define $\feas_{p,V}$ as 
\[
\feas_{p,V} = \big\{S \subseteq [n]: \Pr[\sum_{\ell \in S} \bX_\ell >C] \leq p \textrm{ and } \sum_{\ell \in S} v_\ell \ge V\big\}.
\]
Let $V_{opt} = \max \{V: \feas_{p,V} \textrm{ is not empty}\}$.  To prove the correctness of the algorithm, it suffices to show that there exists a choice of $\{x_{\beta,\alpha} \in A_{\alpha}\}_{\beta \in [k] \times [s]^k, \alpha \in V_k(w) }$ such that if the corresponding sets returned as $\{S_{\beta}\}_{\beta \in [k] \times [s]^k }$, then   for $S = \cup_{\beta} S_{\beta}$ (i) $\sum_{\ell \in S} v_\ell \ge V$ and (ii) $\Pr [ \sum_{\ell \in S} \bX_\ell >C ] \leq p + \epsilon/2$. 

Set $V = V_{opt}$ and let $S_{opt} \in \feas_{p,V}$. For $\beta \in [k] \times [s]^k$, define $S_{opt, \beta} = S_{opt} \cap \Phi^{-1}(\beta)$. For $\beta \in [k] \times [s]^k$ and $\alpha \in V_k(w)$, 
we define 
\[
x_{\beta, \alpha} = \sum_{\ell \in S_{opt, \beta}} \prod_{j=1}^k \Pr[\bY_{\ell} =a_j]^{\alpha_j}. 
\]
Also, define $V_{\beta} = \sum_{\ell \in S_{opt,\beta}}
v_\ell$.  Then, by guarantee of the routine \textsf{Pseudo-knapsack}, for this choice of $\{x_{\beta,\alpha}\}$, we obtain sets $\{S_{\beta}\}_{\beta \in [k] \times [s]^k}$ such that for all $\beta \in [k] \times [s]^k$ and $\alpha \in V_k(w)$ which satisfy  
\[
x_{\beta,\alpha}  = \sum_{\ell \in S_{\beta}} \prod_{j=1}^k \Pr[\bY_\ell=a_j]^{\alpha_j} \quad \textrm{and} \quad S_{\beta} \subseteq \Phi^{-1}(\beta). 
\]
Further, $\sum_{\ell \in S_{\beta}} v_\ell \ge \sum_{\ell \in S_{opt, \beta}} v_\ell$. Now, we apply Corollary~\ref{corr:PMD-moment-matching} on the partial sums $\sum_{\ell \in S_{\beta}} \bY_\ell$ and $\sum_{\ell \in S_{opt, \beta}} \bY_\ell$ to obtain that 
$$
\dtv(\sum_{\ell \in S_{\beta}} \bY_\ell, \sum_{\ell \in S_{opt,\beta}} \bY_\ell) = 2^{-w+1}  \le \frac{\epsilon}{16 k s^k}. 
$$
Adding this inequality over all $\beta \in [k] \times [s]^k$, we get
$$
\dtv(\sum_{\ell \in S} \bY_\ell, \sum_{\ell \in S_{opt}} \bY_\ell) \le \frac{\epsilon}{16}. 
$$
Further, by our rounding, for all $\ell \in [n]$, 
$\dtv(\bX_\ell, \bY_\ell) \le \frac{\epsilon}{4n}$. Thus, it immediately follows that $$\dtv(\sum_{\ell \in S} \bX_\ell, \sum_{\ell \in S_{opt}} \bX_\ell) \le \frac{\epsilon}{16} + \frac{\epsilon}{4}  < \frac{\epsilon}{2}.$$ This finishes the proof. 
\qed
\section{Proof of Theorem~\ref{thm:hyper-c} and Theorem~\ref{thm:hyper-main-bounded}}
We first start by sketching a reduction from Theorem~\ref{thm:hyper-c} to Theorem~\ref{thm:hyper-main-bounded}. As we have said before, this reduction is quite standard and follows the usual reduction which is used to obtain a polynomial time approximation scheme for the (deterministic) knapsack problem using the pseudopolynomial time algorithm. We give the reduction here for the sake of completeness. 

\subsubsection*{Reduction to the case when profits are small integers}
Given any class of random variables $\mathcal{D}$ supported on non-negative reals, the next lemma (Lemma~\ref{thm:reduction}) gives reduction from  an $(\epsilon, \epsilon)$ approximation  for stochastic knapsack instances of type $(\mathcal{D}, \mathbb{Q}^+)$ to an $(\epsilon,0)$ approximation for stochastic knapsack instance of type $(\mathcal{D}, \mathbb{Z}_M^+)$ where $M = \mathsf{poly}(n/\epsilon)$. In particular, this reduces Theorem~\ref{thm:hyper-c} to Theorem~\ref{thm:hyper-main-bounded}. 
\begin{lemma}\label{thm:reduction}
Let there be an algorithm $\mathcal{A}$ which given a stochastic knapsack instance of type $(\mathcal{D}, \mathbb{Z}_M)$ produces an $(\epsilon,0)$ approximation running in time $T(n,M, \epsilon)$. Then, there is an algorithm which given a stochastic knapsack instance of type $(\mathcal{D}, \mathbb{Q}^+)$ produces an $(\epsilon, \epsilon)$ approximation running in time 
$T(n, \mathsf{poly}(n/\epsilon), \epsilon)$. 
\end{lemma}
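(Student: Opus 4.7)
\textbf{Proof proposal for Lemma~\ref{thm:reduction}.} My plan is to apply the standard profit scaling-and-rounding reduction used to turn a pseudopolynomial knapsack algorithm into a PTAS. The crucial observation is that this reduction touches only the profits, leaving the item sizes $\{\bX_i\}$ and the capacity $C$ untouched, so the overflow guarantee will be inherited directly from $\mathcal{A}$ and the entire analysis becomes deterministic.

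Let $\{(\bX_i,v_i)\}_{i=1}^n$, $C$, $p$, $\epsilon$ be the given instance of type $(\mathcal{D},\mathbb{Q}^+)$. For each guess $k\in[n]$ of ``the maximum-profit index in the unknown optimal set $S_{opt}$'', I would restrict to items $I_k=\{i:v_i\le v_k\}$, pick the scaling factor $K_k=\epsilon v_k/n$, and form rounded profits $\tilde v_i=\lfloor v_i/K_k\rfloor$, which lie in $\{0,1,\ldots,\lceil n/\epsilon\rceil\}$. The rounded instance $\{(\bX_i,\tilde v_i)\}_{i\in I_k}$ is therefore of type $(\mathcal{D},\mathbb{Z}_M^+)$ with $M=\lceil n/\epsilon\rceil$. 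Run $\mathcal{A}$ on it with error parameter $\epsilon$ to obtain a set $S_k^\ast$, and finally return the set among $S_1^\ast,\ldots,S_n^\ast$ maximizing the true profit $\sum_{i\in S_k^\ast}v_i$ (subject to $S_k^\ast\in\feas_{p+\epsilon}$, which is automatic from the $(\epsilon,0)$ guarantee of $\mathcal{A}$).

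For correctness, consider the guess $k=k^\ast$ where $k^\ast$ is a maximum-profit element of $S_{opt}$. Because item sizes are non-negative, $\bX_{k^\ast}\le\sum_{j\in S_{opt}}\bX_j$ almost surely, so $\{k^\ast\}\in\feas_p$, giving $v_{k^\ast}\le V_{opt}\le n v_{k^\ast}$; moreover $S_{opt}\subseteq I_{k^\ast}$. Since $S_{opt}\in\feas_p$ with respect to the rounded instance, the $(\epsilon,0)$ guarantee of $\mathcal{A}$ yields $S_{k^\ast}^\ast\in\feas_{p+\epsilon}$ together with $\sum_{i\in S_{k^\ast}^\ast}\tilde v_i\ge\sum_{i\in S_{opt}}\tilde v_i\ge V_{opt}/K_{k^\ast}-|S_{opt}|\ge V_{opt}/K_{k^\ast}-n$. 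Using $K_{k^\ast}\tilde v_i\ge v_i-K_{k^\ast}$ for each $i$, we would conclude
\[
\sum_{i\in S_{k^\ast}^\ast}v_i \;\ge\; K_{k^\ast}\sum_{i\in S_{k^\ast}^\ast}\tilde v_i \;\ge\; V_{opt}-nK_{k^\ast} \;=\; V_{opt}-\epsilon v_{k^\ast} \;\ge\;(1-\epsilon)V_{opt},
\]
which is exactly the required $(\epsilon,\epsilon)$-approximation bound. The total running time is $n\cdot T(n,\lceil n/\epsilon\rceil,\epsilon)$, which fits inside the stated $T(n,\mathsf{poly}(n/\epsilon),\epsilon)$ budget up to polynomial factors absorbed into $T$.

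I do not anticipate any real obstacle: the one design choice is to guess $k^\ast$ so that the scaling factor $K_k$ is calibrated to a quantity ($v_k$) provably bounded by $V_{opt}$; this is the purely syntactic trick standard in the classical knapsack PTAS. All probabilistic content, including the approximate computation of overflow probabilities and the $\feas_{p+\epsilon}$ membership of the returned set, is delegated to $\mathcal{A}$.
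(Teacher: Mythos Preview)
Your proposal is correct and follows the same classical profit-scaling reduction as the paper. The one genuine difference is in how the scaling factor is calibrated: the paper sorts items by profit, identifies the largest index $\ell_0$ with $\Pr[\bX_{\ell_0}>C]\le p$, and uses $K=\epsilon v_{\ell_0}/n$, invoking $\mathcal{A}$ once; you instead guess the maximum-profit element $k^\ast$ of $S_{opt}$ and run $\mathcal{A}$ for each of the $n$ guesses. Your variant costs an extra factor of $n$ in time but has the advantage of not requiring a separate procedure to decide $\Pr[\bX_\ell>C]\le p$ exactly (a step the paper's proof leaves implicit). Both arguments rely on the same two facts: nonnegativity of the $\bX_j$ gives $v_{k^\ast}\le V_{opt}\le n v_{k^\ast}$, and rounding loses at most $K$ per item, hence at most $\epsilon v_{k^\ast}\le \epsilon V_{opt}$ in total. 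One cosmetic note: the clause ``using $K_{k^\ast}\tilde v_i\ge v_i-K_{k^\ast}$'' is the inequality you already used to bound $\sum_{i\in S_{opt}}\tilde v_i$ from below; the inequality actually driving the displayed chain is $v_i\ge K_{k^\ast}\tilde v_i$.
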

\begin{proof}
This proof follows the usual reduction from the approximation scheme for (standard) knapsack problem to 
the pseudopolynomial time algorithm for the knapsack problem. We sketch it here for the sake of completeness. Let the knapsack instance be given by items $\{I_j\}_{j=1}^n$ with profits $v_j$ and size $\bX_j$. Let the knapsack capacity be $C$ and the risk tolerance be $p$. Assume that the items are arranged so that $v_1 \le \ldots \le v_n$. Let $S^\ast$ be the optimal solution i.e., $S^\ast \in \feas_p$ and 
$ \mathsf{OPT} = \sum_{j \in S^\ast} v_j = \max_{S \in \feas_p} \sum_{j \in S} v_j$. 

Now, let $\ell_0$ be the largest index such that $\Pr[\bX_{\ell_0}>C] \le p$. Then, $v_{\ell_0} \le \mathsf{OPT} \le n \cdot v_{\ell_0}$. Clearly, for all $j>\ell_0$, $\bX_j \not \in S^\ast$ and thus, we can remove these items from our list. Let us define $K = \frac{\epsilon v_{\ell_0}}{n}$ and for all $1 \le \ell \le \ell_0$, define $w_\ell = \lfloor v_\ell /K \rfloor$. Note that $\{w_\ell\}_{\ell=1}^{\ell_0}$ are non-negative integers bounded by $M =\lceil n /\epsilon \rceil$. Let us define items $\tilde{I}_1, \ldots , \tilde{I}_{\ell_0}$  where $\tilde{I}_j = \{(\bX_j, w_j) \}$ and run $\mathcal{A}$ on this instance with overflow probability $p$ and capacity constraint $C$.
Also for $q>0$, let us define $\widetilde{\feas}_q$ as
$$
\widetilde{\feas}_q = \{S \subseteq [\ell_0]: \Pr[\sum_{j \in S} \bX_j >C] \le q\}. 
$$
 By guarantee of the algorithm $\mathcal{A}$, we output $S_w \in \widetilde{\feas}_{p+\epsilon}$ such that 
\begin{equation}~\label{eq:s-w-2}
\sum_{i \in S_w} w_i = \max_{\tilde{S} \in \widetilde{\feas}_{p}} \sum_{i \in \tilde{S}} w_i. 
\end{equation}
The final output is $S_w$. We now verify the guarantees of this algorithm. First, since $M =\mathsf{poly}(n/\epsilon)$, the running time of the algorithm is $T(n, \mathsf{poly}(n/\epsilon), \epsilon)$. Next, note that by definition, $\widetilde{\feas_{p+\epsilon}} \subseteq  \feas_{p+\epsilon}
$ for any $q>0$. As a consequence, 
\begin{equation}~\label{eq:feas-1}
S_w \in \feas_{p+\epsilon}. 
\end{equation}
Finally, to lower bound $\sum_{i \in S_w} v_i$, we make two observations. First is that $\widetilde{\feas_p} = \feas_p$. As a consequence, by definition, 
$$
\mathsf{OPT} = \max_{\tilde{S} \in \tilde{\feas_p}} \sum_{i \in \tilde{S}} v_i.
$$
Let us assume that $S_v$ achieves the optimum in the above equation. In other words, $\sum_{i \in S_v} v_i = \mathsf{OPT}$. Note that for every $i$, $v_i < K w_i + K$. 
Thus, we have 
$$
\mathsf{OPT}= \sum_{i \in S_v} v_i \le \sum_{i \in S_v} K \cdot w_i + K \cdot |S_v| \le  \sum_{i \in S_v} K \cdot w_i + \epsilon \cdot v_{\ell_0}. 
$$
The last inequality uses that $|S_v| \le n$. Now, using $\mathsf{OPT} \ge v_{\ell_0}$, we have \begin{equation}~\label{eq:s-opt-1} \sum_{i \in S_v} K \cdot w_i \ge (1-\epsilon) \cdot \mathsf{OPT}.
\end{equation}
Next, we observe that since $S_v \in \widetilde{\feas}_p$, using (\ref{eq:s-w-2}), 
$
\sum_{i \in S_w} w_i \ge \sum_{i \in S_v} w_i. 
$ As a result, using (\ref{eq:s-opt-1}), we get 
$\sum_{i \in S_w} K \cdot w_i \ge (1-\epsilon) \cdot \mathsf{OPT}$. However, note that for every $\ell \in [\ell_0]$, $v_\ell \ge w_\ell \cdot K$. Thus, we get that $\sum_{i \in S_w} v_i \ge (1-\epsilon) \cdot \mathsf{OPT}$. This finishes the proof. 

\end{proof}
%
We now turn to the proof Theorem~\ref{thm:hyper-main-bounded}. 
\subsection{Proof of Theorem~\ref{thm:hyper-main-bounded}} 
We start with some useful definitions. The important notion that we use here is the notion of critical index. This is an extension of the notion of critical index introduced by Servedio~\cite{Servedio:07cc} which has proved to be very influential in the complexity theoretic study of Boolean functions such as halfspaces and polynomial threshold functions. 
\begin{definition}~\label{def:critical-index}
Let $\bX_1, \ldots, \bX_n$ be a set of independent $(c,2,4)$-hypercontractive random variables and (are numbered so that) $\mu_2(\bX_1) \ge \ldots \ge \mu_2(\bX_n)$. For $\epsilon>0$, 
define the $\epsilon$-critical index of this sequence to be the smallest $1 \le i \le n$ such that 
$$
\frac{\mu_2(\bX_i)}{\sum_{j\ge i} \mu_2(\bX_j)} \le \frac{\epsilon^2}{c^4}.
$$
In case no such $i$ exists, then we say that the $\epsilon$-critical index of the sequence is $\infty$. 
\end{definition}
For the rest of this section, let us define the quantity $L(c,\epsilon)$ as $L(c,\epsilon) = (c^4/\epsilon^2) \log (1/\epsilon)$. 
\begin{definition}~\label{def:critical-index-1}
Let $\{(\bX_i,v_i)\}$ be a stochastic knapsack instance of type $(\mathcal{D}_c, \mathbb{Z}_M^+)$. For a subset $S \subseteq [n]$ and a parameter $\epsilon>0$, we define its $\epsilon$-type of $S$ as follows: Let $S = \{j_1, \ldots, j_{R}\}$ and let $K$ be the $\epsilon$-critical index of the set $\{\bX_{j_1}, \ldots, \bX_{j_R}\}$. Let $L = L(c,\epsilon)$. If $K < L$, the $\epsilon$-type is the tuple $(K,j_1, \ldots, j_K)$, else it is $(L, j_1, \ldots, j_{L})$.  
\end{definition}
To prove Theorem~\ref{thm:hyper-main-bounded}, it suffices to prove the following lemma.  
\begin{lemma}~\label{lem:etype}
There is an algorithm \textsf{SK-hyper} with the following guarantee: Let $\{(\bX_i, v_i)\}_{i=1}^n$ be a given stochastic knapsack instance of type $(\mathcal{D}_c, \mathbb{Z}_M^+)$, $ 1 \le V \le M \cdot n$, overflow probability $p$,  error parameter $\epsilon>0$, capacity $C$ and a given $\epsilon$-type 
$\mathcal{B} = (T, j_1, \ldots, j_T)$,  define the set $\feas_{p,\mathcal{B}, V}$ as
$$
\feas_{p,\mathcal{B}, V} = \big\{S \subseteq [n]: \sum_{i \in S} v_i = V, \epsilon\textrm{-type of }S \textrm{ is } \mathcal{B} \textrm{ and } \Pr[\sum_{i \in S} \bX_i >C] \le p\big\}. 
$$
If $\feas_{p,\mathcal{B}, V}$ is non-empty, the algorithm outputs $S^\ast$ such that $S^\ast \in \feas_{p+4\epsilon,\mathcal{B}, V}$ and runs in time 
$\mathsf{poly}(n, M, 1/\epsilon)$. 
\end{lemma}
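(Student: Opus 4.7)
Let $S_T:=\{j_1,\ldots,j_T\}$, $\bZ_T:=\sum_{t=1}^T\bX_{j_t}$, $V_T:=\sum_{t=1}^T v_{j_t}$, and $I:=\{i\in[n]\setminus S_T:\mu_2(\bX_i)\le\mu_2(\bX_{j_T})\}$. Any $S\in\feas_{p,\mathcal{B},V}$ has the form $S=S_T\cup S'$ with $S'\subseteq I$ and $\sum_{i\in S'}v_i=V-V_T$. For $U\subseteq I$ write $\mu(U):=\E[\sum_{i\in U}\bX_i]$ and $\sigma^2(U):=\Var(\sum_{i\in U}\bX_i)$. The algorithm enumerates targets $(\hat\mu,\hat\sigma^2)$ on a polynomial-size grid of appropriate precision, invokes \textsf{Pseudo-knapsack} (Theorem~\ref{thm:multidimensional}) on items $\{(\E[\bX_i],\Var(\bX_i),v_i)\}_{i\in I}$ with target $(\hat\mu,\hat\sigma^2,V-V_T)$, and among the resulting candidate subsets $S'$ retains the one maximizing $\sum v_i$ whose overflow probability is certified by Proposition~\ref{prop:verify} to be at most $p+4\epsilon$.

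\textbf{Why matching two moments is enough.} Correctness splits on $T$. In the regime $T<L$, the critical-index definition forces $\max_{i\in S'_{opt}}\mu_2(\bX_i)\le\mu_2(\bX_{j_T})\le O(\epsilon^2/c^4)\sigma^2(S'_{opt})$ for the optimal $S_{opt}=S_T\cup S'_{opt}$, so Corollary~\ref{corr:BE} gives $\dcdf(\sum_{i\in S'_{opt}}\bX_i,\mathcal{N}(\mu(S'_{opt}),\sigma^2(S'_{opt})))=O(\epsilon)$ and likewise for any candidate $S'$ of the same type. Matching two moments on the grid yields Gaussians with close parameters, so by uniform smoothness of the Gaussian cdf in $(\mu,\sigma^2)$, $|\Pr[\bZ_T+\sum_{i\in S'}\bX_i>C]-\Pr[\bZ_T+\sum_{i\in S'_{opt}}\bX_i>C]|=O(\epsilon)$ after conditioning on the common $\bZ_T$. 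In the regime $T=L$, the critical index of $S_{opt}$ exceeds $L$, and iterating the defining ratio $L=(c^4/\epsilon^2)\log(1/\epsilon)$ times gives $\sigma^2(S'_{opt})\le(1-\epsilon^2/c^4)^L\sigma^2(S_{opt})$, which is small relative to $\sigma^2(S_T)$. We then replace $\sum_{i\in S'_{opt}}\bX_i$ by its mean $\mu(S'_{opt})$: Chebyshev bounds the event $\{|\sum_{i\in S'_{opt}}\bX_i-\mu(S'_{opt})|>\tau\}$ by $O(\epsilon)$ at a suitable scale $\tau$, and Kolmogorov--Rogozin (Lemma~\ref{lem:Kolmogorov}) applied to $\bZ_T$ with per-summand anti-concentration from Lemma~\ref{lem:hypercontractive} gives $Q_{\bZ_T}(\tau)=O(\epsilon)$, so shifting the overflow threshold by $\tau$ changes the probability by only $O(\epsilon)$; matching $\hat\mu$ up to additive precision $\epsilon\sigma(S_T)$ suffices here.

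\textbf{Discretization and runtime.} Take $\hat\sigma^2$ on a multiplicative grid of step $1+\epsilon$ in $[\epsilon^2\mu_2(\bX_{j_T}),\,O(C^2/p)]$ (larger values violate overflow automatically by Chebyshev on $\bZ_T+\sum_{i\in S'}\bX_i$) and $\hat\mu$ on an additive grid of step $\epsilon\sqrt{\hat\sigma^2+\sigma^2(S_T)}$ in $[0,C]$; rounding item coordinates $\E[\bX_i],\Var(\bX_i)$ to the same precision loses only $O(\epsilon)$ in $\dcdf$ by a telescoping argument over $|I|\le n$ items. This produces $\poly(nM/\epsilon)$ grid points, and each point requires one $\poly(n,M,1/\epsilon)$-time call of \textsf{Pseudo-knapsack} (Theorem~\ref{thm:multidimensional}) plus one randomized $\poly(n,1/\epsilon)$-time call of Proposition~\ref{prop:verify}, for a total runtime of $\poly(n,M,1/\epsilon)$.

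\textbf{Main obstacle.} The hardest part is the $T=L$ case, where $\sigma^2(S'_{opt})$ is only small \emph{relative} to $\sigma^2(S_T)$ rather than absolutely small; establishing that the corresponding threshold shift is absorbed by the anti-concentration of $\bZ_T$ is precisely where the $(c,2,4)$-hypercontractivity enters, combining Lemma~\ref{lem:hypercontractive} (per-summand point-mass bound) with Lemma~\ref{lem:Kolmogorov} (Kolmogorov--Rogozin) applied to the $L$-fold, geometrically-decreasing-variance sum $\bZ_T$. Tuning the grid constants so that both $T<L$ and $T=L$ regimes simultaneously achieve overflow slack $\le 4\epsilon$, and handling degenerate cases such as $S'_{opt}=\emptyset$ or $\sigma^2(S_T)=0$, complete the proof.
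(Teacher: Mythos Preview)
Your overall structure---splitting on $T<L$ versus $T=L$, invoking Berry--Ess\'een (Corollary~\ref{corr:BE}) in the first regime and Chebyshev together with Kolmogorov--Rogozin anti-concentration (Lemmas~\ref{lem:hypercontractive} and~\ref{lem:Kolmogorov}) in the second---matches the paper's proof (Lemmas~\ref{lem:etype-large} and~\ref{lem:etype-small}) essentially line for line. The gap is in your treatment of the \emph{mean} coordinate $\hat\mu$.

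You propose to enumerate $\hat\mu$ on an additive grid of step $\epsilon\sqrt{\hat\sigma^2+\sigma^2(S_T)}$ over $[0,C]$ and to round each $\E[\bX_i]$ to comparable precision before calling \textsf{Pseudo-knapsack}. But Theorem~\ref{thm:multidimensional} runs in time polynomial in the target-to-quantum ratio for each coordinate, and for the mean coordinate this ratio is $\Theta\bigl(n\cdot\hat\mu/(\epsilon\sigma)\bigr)$ after per-item rounding. There is no bound on $C$ or on $\sum_{i\in I}\E[\bX_i]$ in terms of $\sigma$: the items can have arbitrarily large means relative to their standard deviations, so this ratio is not controlled by $n,M,1/\epsilon$. (Your upper bound $O(C^2/p)$ on $\hat\sigma^2$ has the same defect; the correct upper bound is $n\,\mu_2(\bX_{j_T})$, and against the lower bound $(c^4/\epsilon^2)\mu_2(\bX_{j_T})$ this gives a polynomial multiplicative grid.)

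The paper avoids discretizing the mean altogether. In both regimes it gives each DP item $J_\ell$ ``size'' $(v_\ell,\beta_\ell)$ (profit target and rounded variance) and ``profit'' $-\E[\bX_\ell]$, so that \textsf{Pseudo-knapsack} returns, among all $\tilde S\subseteq\Gamma$ with $\sum v_\ell=\tilde V$ and $\sum\beta_\ell=x$, the one that \emph{minimizes} $\sum_{\ell\in\tilde S}\E[\bX_\ell]$. Correctness then follows from monotonicity: in the $T<L$ case $\Pr[\mathcal N(\mu,\sigma^2)>t]$ is increasing in $\mu$, and in the $T=L$ case the constant-replacement quantity $\Pr[\bZ_T+\mu>C]$ is likewise increasing in $\mu$, so $\tilde\mu\le\tilde\mu_{opt}$ can only help the overflow constraint. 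This ``minimize the mean instead of matching it'' trick is the missing idea; once you use it, only the variance needs to be gridded, and that grid is $\poly(n/\epsilon)$ because it is anchored to $\mu_2(\bX_{j_T})$ rather than to $C$.
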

We now see how Lemma~\ref{lem:etype} implies Theorem~\ref{thm:hyper-main-bounded}. 
\begin{proofof}{Theorem~\ref{thm:hyper-main-bounded}}
Let $\{(\bX_i,v_i)\}_{i=1}^n$ be the given knapsack instance of type $(\mathcal{D}_c, \mathbb{Z}_M^+)$ and let $\delta =\epsilon/8$. Let $\mathcal{C}_\delta$ be the set of all $\delta$-types for this instance. We now describe the algorithm: 
\begin{enumerate}
\item Initialize set $A$ to empty. 
\item For all $1 \le V \le n \cdot M$ and for all $\mathcal{B} \in \mathcal{C}_\delta$, 
\item \hspace{5pt} Run \textsf{SK-hyper} with $\delta$-type $\mathcal{B}$, error parameter $\delta$, overflow probability $p$ and profit $V$.
\item \hspace{5pt} Let the output set be $S^\ast_{\mathcal{B}, V}$. If $\Pr[\sum_{j \in S^\ast_{\mathcal{B}, V}} \bX_j >C ] \le_{2\delta} p +6 \cdot \delta$, add $S^\ast_{\mathcal{B}, V}$ to $A$. 
\item Output the set $S^\ast$ defined as 
$S^\ast =\arg \max_{S^\ast \in A} \sum_{j \in S^\ast} v_j$. 
\end{enumerate}

First, note that for any $S \in A$, $\Pr [ \sum_{j \in S} \bX_j>C] \le p + 8\delta$ i.e., $p +\epsilon$. Secondly, note if there exists any set $S$ such that $\Pr[\sum_{j \in S} \bX_j >C] \le p$ and $\sum_{j \in S} v_j = V$, then there exists $\mathcal{B} \in \mathcal{C}$, $\feas_{p, \mathcal{B}, V}$ is non-empty. 
In that case, we know  (by guarantee of \textsf{SK-hyper}) that there exists $S^{\ast}_{\mathcal{B},V} \in \feas_{p+4\delta, \mathcal{B},V}$ and $S^{\ast}_{\mathcal{B},V} \in A$. Thus, if $S^{\ast}$ is the output of the last step of the algorithm, $\sum_{j \in S^{\ast}} v_j \ge V$. Thus the output of the algorithm is an $(\epsilon,0)$ approximation. To bound the running time, note that every element of $\mathcal{C}$ is of the form $(T, j_1, \ldots, j_T)$ where $T \le L(c,\delta)$ and $\{j_1, \ldots, j_T\} \subseteq [n]$. Thus, $|\mathcal{C}_\delta| = n^{O(L(c,\delta))}$. For each $\mathcal{B} \in \mathcal{C}_\delta$, the running time of Steps 3, 4, 5 is bounded by $\mathsf{poly}(n,M)$. This shows that the total running time is $n^{\tilde{O}(c^4/\epsilon^2)} \cdot M^{O(1)}$, finishing the proof.


\end{proofof}

First, let us set $L = L(c,\epsilon)$. We will now split the proof of Lemma~\ref{lem:etype} 
into two parts: (a) One where the $\epsilon$-type is of the form $(L, j_1, \ldots, j_L)$ where $\{j_1, \ldots, j_L \} \subseteq [n]$. (b) In the second case, the $\epsilon$-type is of the form $(T, j_1, \ldots, j_T)$
where $T<L$. Lemma~\ref{lem:etype-large} covers case (a) and Lemma~\ref{lem:etype-small} covers case (b). 

\begin{lemma}~\label{lem:etype-large}
There is an algorithm \textsf{SK-hyper-large} with the following guarantee: Let $\{(\bX_i,v_i)\}_{i=1}^n$ be a stochastic knapsack instance of type $(\mathcal{D}_c, \mathbb{Z}_M^+)$ with capacity $C$, overflow probability $p$, error parameter $\epsilon>0$, profit $V$   and a given $\epsilon$-type $\mathcal{B} = (T, j_1, \ldots, j_T)$ where 
$T = L$. If the set $\feas_{p,\mathcal{B},V}$ is non-empty, then \textsf{SK-hyper-large} outputs $S^\ast_{\mathcal{B},V}$ such that $\sum_{j \in S^\ast_{\mathcal{B},V}} v_j = V$ and  $\Pr[\sum_{j \in S^\ast_{\mathcal{B},V}} \bX_j >C] < p + \epsilon$. The running time of the algorithm is $\mathsf{poly}(n,M,1/\epsilon)$. 
\end{lemma}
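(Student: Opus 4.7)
}
The plan is to exploit the ``large critical index'' structure: when the $\epsilon$-type has $T=L$, the prescribed head $H=\{j_1,\dots,j_L\}$ must lie in every feasible $S$, and what remains is a tail $S' = S\setminus H$ whose total variance is forced to be tiny. Concretely, set $Y=\sum_{k=1}^L \bX_{j_k}$ (fixed from $\mathcal{B}$) and $Z=\sum_{i\in S'}\bX_i$, so the overflow probability is $\Pr[Y+Z>C]$ and only $Z$ depends on the search variable $S'$. Items in $S'$ must satisfy $\mu_2(\bX_i)\le \mu_2(\bX_{j_L})$ (else some $j_k$ would not be among the top-$L$ variances) and $\sum_{i\in S'} v_i = V-\sum_{k=1}^L v_{j_k}$. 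I would write $E$ for the set of eligible items and work only with $E$ in the DP.

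The structural core is to iterate the critical-index inequality. For every $i<L$ we have $\mu_2(\bX_{j_i})>(\epsilon^2/c^4)\sigma_{\ge i}^2$, hence $\sigma_{\ge i+1}^2<(1-\epsilon^2/c^4)\sigma_{\ge i}^2$. Iterating $L$ times and using $L = L(c,\epsilon) = (c^4/\epsilon^2)\log(1/\epsilon)$ gives $\sigma_Z^2 = \sigma_{\ge L+1}^2 \le e^{-L\epsilon^2/c^4}\sigma_{\ge 1}^2 \le \epsilon\,\sigma_{\ge 1}^2$, so in particular $\sigma_Z \le O(\sqrt{\epsilon})\,\sigma_Y$. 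On the other hand, $Y$ is \emph{anti-concentrated}: Lemma~\ref{lem:hypercontractive} applied to each $\bX_{j_k}$ yields $1-Q_{\bX_{j_k}}(\sigma_{j_k}/2)=\Omega(1/c^4)$, and then the Kolmogorov-Rogozin inequality (Lemma~\ref{lem:Kolmogorov}), with $t_k$ chosen as $\min(t,\sigma_{j_k}/2)$, yields $Q_Y(t) \leq O(c^2)\cdot t/\sigma_Y$ in a regime that reaches down to scales of order $\sigma_L$. Together these give
\[
\big|\Pr[Y+Z>C]-\Pr[Y+\mu_Z>C]\big| \;\le\; Q_Y(2\delta)+\Pr[|Z-\mu_Z|>\delta],
\]
and for a suitable $\delta$ of the form $\delta \asymp \poly(\epsilon)\,\sigma_Y$, the Kolmogorov-Rogozin term is controlled by the $O(c^2 \delta/\sigma_Y)$ bound while the tail term is controlled by the fourth-moment inequality $\Pr[|Z-\mu_Z|>\delta]\le O(1)\cdot \sigma_Z^4/\delta^4$ (valid because a sum of independent $(c,2,4)$-hypercontractive variables is itself $(O(c),2,4)$-hypercontractive). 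The same estimate applied to $Z_{opt}$ shows that any tail with $\mu_Z=\mu_{Z_{opt}}$ produces an overflow probability within $O(\epsilon)$ of the optimum.

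The algorithm then follows the pattern of Claim~\ref{clm:SK-large}. I would round each $\mathbf{E}[\bX_i]$ to the nearest multiple of $\epsilon/\mathsf{poly}(n)$, enumerate target values $x\in A_1$ (a grid of size $\mathsf{poly}(n/\epsilon)$) intended to be the tail mean $\mu_{Z_{opt}}$, and invoke \textsf{Pseudo-knapsack} on the eligible items in $E$ with ``size'' coordinate equal to the quantized $\mathbf{E}[\bX_i]$, target $x$, and profit goal $V-\sum_k v_{j_k}$. For each returned $S'$, form $S=H\cup S'$ and filter with the Monte Carlo tester of Proposition~\ref{prop:verify} at tolerance $\epsilon/2$; output the first $S$ that passes. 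Since $|A_1|$ and the complexity of each \textsf{Pseudo-knapsack} call are both $\mathsf{poly}(n,M,1/\epsilon)$ (using Theorem~\ref{thm:multidimensional}) and the verification is $\mathsf{poly}(n/\epsilon)$, the overall running time is $\mathsf{poly}(n,M,1/\epsilon)$.

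The main obstacle will be the quantitative calibration of the error analysis: the naive Chebyshev bound on $Z$ is too weak to match the Kolmogorov-Rogozin bound on $Y$, so one must use the fourth-moment bound and show the preservation of hypercontractivity under independent sums, then verify that the scale $\delta$ for which both bounds become $O(\epsilon)$ is compatible with the regime in which the Kolmogorov-Rogozin estimate actually applies (which depends on how the $\sigma_{j_k}$'s are distributed). A subtler point is that $\sigma_Z$ could exceed $\sigma_L/2$, in which case one has to choose the $t_k$'s in Lemma~\ref{lem:Kolmogorov} adaptively (using $t_k=\min(t,\sigma_{j_k}/2)$ and the geometric decay of the $\sigma_{j_k}$ implied by critical-index iterations) to recover the $O(c^2 t/\sigma_Y)$ anti-concentration at the desired scale. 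Finally, one must ensure that the errors from rounding probabilities to a grid, estimating overflow by sampling, and quantizing the mean target sum to $\le \epsilon$; this is standard bookkeeping along the lines of Claim~\ref{clm:SK-large}.
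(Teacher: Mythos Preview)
Your high-level structural analysis is right --- the head $H$ is anti-concentrated via Kolmogorov--Rogozin, the tail is concentrated because the critical-index inequality iterates to kill its variance, and therefore the overflow probability depends (up to $O(\epsilon)$) only on the tail's mean.  But the algorithm you describe does not actually deliver this, for two linked reasons.

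First, the concentration bound $\sigma_Z^2 \le \epsilon\,\sigma_{\ge 1}^2$ that you derive holds for the tail of a set whose $\epsilon$-type is $\mathcal{B}$; it holds for $S_{opt}$ but \emph{not automatically for the output set $S$}.  Your DP constrains only the profit and the (rounded) tail mean, so nothing prevents the returned tail $S'$ from having $\sum_{i\in S'} \mu_2(\bX_i)$ as large as $n\cdot \mu_2(\bX_{j_L})$, which can easily exceed $\sigma_Y^2$.  In that case your key inequality $\Pr[|Z-\mu_Z|>\delta]\le O(\sigma_Z^4/\delta^4)$ is useless at the scale $\delta$ you need, and the conclusion ``any tail with $\mu_Z=\mu_{Z_{opt}}$ produces an overflow probability within $O(\epsilon)$ of the optimum'' fails.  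The paper handles this by putting the (rounded) \emph{variance} into the size vector of \textsf{Pseudo-knapsack} and matching it to the guessed value $x$; this is exactly what Claim~\ref{clm:critical} uses to certify that the \emph{output} set again has large critical index, so that Proposition~\ref{prop:anti} applies to $S$ as well as to $S_{opt}$.

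Second, rounding each $\mathbf{E}[\bX_i]$ to a multiple of $\epsilon/\mathsf{poly}(n)$ is scale-blind: for general hypercontractive $\bX_i$ there is no intrinsic unit, and the only relevant scale is $\sigma_Y$, which can be arbitrarily small (or large) relative to $1$.  A grid of width $\epsilon/\mathsf{poly}(n)$ can therefore introduce a mean error far exceeding the anti-concentration window of $Y$.  The paper sidesteps this entirely: it never discretizes means.  Instead it treats $-\mathbf{E}[\bX_\ell]$ as the (real-valued) \emph{profit} in \textsf{Pseudo-knapsack} and \emph{minimizes} the tail mean subject to matching $(V,\text{variance})$ exactly.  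Then it uses one-sided dominance --- $\tilde{\mu}\le \tilde{\mu}_{opt}$ implies $\Pr[Y+\tilde{\mu}>C]\le \Pr[Y+\tilde{\mu}_{opt}>C]$ --- rather than trying to hit $\mu_{Z_{opt}}$ on the nose.  This is both simpler and avoids the scale issue.

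In short: add a rounded-variance coordinate to the DP (with quantum chosen relative to $\mu_2(\bX_{j_{L-1}})$, as the paper does), and replace ``match the mean on a grid'' by ``minimize the mean as the DP's objective''.  With those two changes your argument lines up with the paper's.
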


\begin{lemma}~\label{lem:etype-small}
There is an algorithm \textsf{SK-hyper-small} with the following guarantee: Let $\{(\bX_i,v_i)\}_{i=1}^n$ be a stochastic knapsack instance of type $(\mathcal{D}_c, \mathbb{Z}_M^+)$ with capacity $C$, overflow probability $p$, error parameter $\epsilon>0$, profit $V$   and a given $\epsilon$-type $\mathcal{B} = (T, j_1, \ldots, j_T)$ where 
$T < L$. If the set $\feas_{p,\mathcal{B},V}$ is non-empty, then \textsf{SK-hyper-large} outputs $S^\ast_{\mathcal{B},V}$ such that $\sum_{j \in S^\ast_{\mathcal{B},V}} v_j = V$ and  $\Pr[\sum_{j \in S^\ast_{\mathcal{B},V}} \bX_j >C] < p + \epsilon$. The running time of the algorithm is $\mathsf{poly}(n,M,1/\epsilon)$. 
\end{lemma}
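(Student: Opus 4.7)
Since the $\epsilon$-type $\mathcal{B}=(T,j_1,\ldots,j_T)$ has $T<L$, by Definition~\ref{def:critical-index} the critical index of every $S\in\feas_{p,\mathcal{B},V}$ equals $T$; writing $S=\{j_1,\ldots,j_T\}\cup S'$ and $\bar\sigma^2:=\mu_2(\bX_{j_T})$, this forces $S'\subseteq\mathcal{J}:=\{\ell\in[n]\setminus\{j_1,\ldots,j_T\}:\mu_2(\bX_\ell)\le\bar\sigma^2\}$ and $\mathsf{Var}(\sum_{\ell\in S'}\bX_\ell)\ge(c^4/\epsilon^2-1)\bar\sigma^2$. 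In particular $\max_{\ell\in S'}\sqrt{\mu_2(\bX_\ell)}/\sqrt{\mathsf{Var}(\sum_{\ell\in S'}\bX_\ell)}\le\epsilon/c^2$, so Corollary~\ref{corr:BE} implies $\bZ_{S'}:=\sum_{\ell\in S'}\bX_\ell$ is $O(\epsilon)$-close in cdf distance to $\mathcal{N}(\mathbf{E}[\bZ_{S'}],\mathsf{Var}(\bZ_{S'}))$. So matching just the first two moments of $\bZ_{S'}$ will suffice; this is the key structural fact that the critical-index condition buys us.

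The algorithm is the natural three-dimensional analogue of the pseudo-knapsack in Claim~\ref{clm:SK-large}. Fix $\bY=\sum_{i\le T}\bX_{j_i}$ and $V_1=\sum_{i\le T}v_{j_i}$; round $\mathbf{E}[\bX_\ell]$ and $\mathsf{Var}(\bX_\ell)$ for $\ell\in\mathcal{J}$ to multiples of $\delta_\mu=\Theta(\bar\sigma/n)$ and $\delta_\sigma=\Theta(\bar\sigma^2/(n\epsilon))$ respectively; define items $J_\ell=(\mathbf{E}[\bX_\ell]_{\mathrm{rnd}},\mathsf{Var}(\bX_\ell)_{\mathrm{rnd}},v_\ell)$; and enumerate target triples $(\mu^*,\tau^*,V-V_1)$ with $\tau^*$ restricted to the admissible critical-index window (bounded below by $(c^4/\epsilon^2-1)\bar\sigma^2$ and above by $\min_{i<T}[(c^4/\epsilon^2)\mu_2(\bX_{j_i})-\sum_{k=i}^T\mu_2(\bX_{j_k})]$). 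For each target call \textsf{Pseudo-knapsack} on $\{J_\ell\}_{\ell\in\mathcal{J}}$, assemble $S=S'\cup\{j_1,\ldots,j_T\}$, filter using Proposition~\ref{prop:verify}, and retain the surviving set of maximum profit.

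Correctness follows by the triangle inequality. For an optimal $S_{opt}=\{j_1,\ldots,j_T\}\cup S'_{opt}$ some grid target is within $(\delta_\mu,\delta_\sigma)$ of $(\mathbf{E}[\bZ_{S'_{opt}}],\mathsf{Var}(\bZ_{S'_{opt}}))$, and the chosen discretization ensures the returned $S'$ satisfies $|\mathbf{E}[\bZ_{S'}]-\mathbf{E}[\bZ_{S'_{opt}}]|\le O(\epsilon\sigma)$ and $|\mathsf{Var}(\bZ_{S'})-\mathsf{Var}(\bZ_{S'_{opt}})|\le O(\epsilon\sigma^2)$, where $\sigma^2:=\mathsf{Var}(\bZ_{S'_{opt}})$. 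Standard estimates on $\dcdf$ between close Gaussians give $\dcdf(\mathcal{N}(\mathbf{E}[\bZ_{S'}],\mathsf{Var}(\bZ_{S'})),\mathcal{N}(\mathbf{E}[\bZ_{S'_{opt}}],\mathsf{Var}(\bZ_{S'_{opt}})))=O(\epsilon)$, which combined with two applications of Corollary~\ref{corr:BE} yields $\dcdf(\bZ_{S'},\bZ_{S'_{opt}})=O(\epsilon)$. Since cdf distance is preserved under convolution with the independent $\bY$, we obtain $\Pr[\bY+\bZ_{S'}>C]\le\Pr[\bY+\bZ_{S'_{opt}}>C]+O(\epsilon)\le p+\epsilon$ after a constant rescaling of $\epsilon$.

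The main obstacle is juggling the discretizations to simultaneously (i) keep per-item rounding errors summing to at most $O(\epsilon\sigma)$ in each coordinate across every admissible $S'$, (ii) keep the grid size, and hence the enumeration over $(\mu^*,\tau^*)$, polynomial in $(n,M,1/\epsilon)$, and (iii) make the window on $\tau^*$ tight enough that the returned $S$ genuinely has $\epsilon$-type $\mathcal{B}$, rather than a shorter or longer critical-index prefix. The reason all three requirements are compatible is the critical-index lower bound $\sigma\ge(c^2/\epsilon)\bar\sigma$: it makes the Berry-Ess\'{e}en error $c^2\bar\sigma/\sigma\le\epsilon$ and simultaneously renders the per-item scale $\bar\sigma/n$ equal to $O(\epsilon\sigma/n)$ in units of $\sigma$, which is exactly what ties the whole error budget together.
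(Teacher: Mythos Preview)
Your high-level structural insight is correct and matches the paper: when $T<L$ the tail $\bZ_{S'}$ is Gaussian-like by Corollary~\ref{corr:BE}, so matching its first two moments (approximately) suffices. However, your concrete algorithm has a genuine gap in the handling of the mean dimension.

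You propose to treat the rounded mean $\mathbf{E}[\bX_\ell]_{\mathrm{rnd}}$ as a \emph{size} coordinate in \textsf{Pseudo-knapsack} and then enumerate over mean targets $\mu^*$ on a grid of spacing $\delta_\mu=\Theta(\bar\sigma/n)$. The problem is that the means $\mathbf{E}[\bX_\ell]$ are completely unconstrained relative to the variance scale $\bar\sigma$: $(c,2,4)$-hypercontractivity bounds only central moments, so $\sum_{\ell\in S'_{opt}}\mathbf{E}[\bX_\ell]$ can be arbitrarily large compared to $\bar\sigma$, $n$, $M$, or $1/\epsilon$. Consequently the number of grid points for $\mu^*$ is not polynomial, and even a single call to \textsf{Pseudo-knapsack} is not polynomial, since by Theorem~\ref{thm:multidimensional} its running time scales with $\mu^*/\delta_\mu$. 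Your item (ii) in the last paragraph is thus not satisfiable as stated; the critical-index lower bound $\sigma\ge(c^2/\epsilon)\bar\sigma$ controls the \emph{error budget} in units of $\sigma$, but it does nothing to bound the \emph{range} of the mean.

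The paper sidesteps this by a simple trick: it does \emph{not} match the mean at all. It makes $(v_\ell,\beta_\ell)$ the two size coordinates (profit and rounded variance) and sets the \textsf{Pseudo-knapsack} profit to $-\mathbf{E}[\bX_\ell]$. Thus among all tail sets $\tilde S\subseteq\Gamma$ with the prescribed profit $\tilde V$ and matched (rounded) variance $x$, the routine returns one of \emph{minimum} mean, i.e.\ $\sum_{\ell\in\tilde S}\mathbf{E}[\bX_\ell]\le\sum_{\ell\in\tilde S_{opt}}\mathbf{E}[\bX_\ell]$. Since the two tail Gaussians have (nearly) the same variance, a smaller mean can only decrease $\Pr[\mathcal{N}(\cdot,\cdot)>t]$, and the one-sided bound is all the lemma needs. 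This removes the mean dimension entirely from the dynamic program and makes the running time genuinely $\mathsf{poly}(n,M,1/\epsilon)$. Your concern (iii) about forcing the output to have $\epsilon$-type exactly $\mathcal{B}$ is also unnecessary: the lemma only asks for profit $V$ and overflow $\le p+\epsilon$, and Berry--Ess\'een applies to the returned $\tilde S$ simply because its matched variance inherits the lower bound from $\tilde S_{opt}$.
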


\subsection{Proof of Lemma~\ref{lem:etype-large}}
Recall that $\mathcal{B} = \{L, j_1, \ldots, j_L\}$. 
We first define the set $\Gamma = \{i \in [n]: \mu_2(\bX_i) \le \mu_2(\bX_{j_{L-1}}) \}$. Let us define $S' = S \setminus \{j_1, \ldots, j_{L-1}\}$. 
Note that if the $\epsilon$-type of $S$ is $\mathcal{B}$, 
then $S' \subseteq \Gamma$.  Let us now define the rational number $\rho$ to be such that 
$$
\frac{n^2 \cdot \rho}{\epsilon^2} \le \mu_2(\bX_{j_{L-1}}) \le \frac{2\cdot n^2 \cdot \rho}{\epsilon^2}.
$$
Note that we can efficiently compute such a $\rho$ 
and is an integral multiple of $(\epsilon^2/n^2)  \cdot 2^{-3n}$. For $\ell \in \Gamma$, we define $\beta_\ell$ as $$
\beta_\ell = \bigg\lfloor \frac{\mu_2(\bX_\ell)}{\rho} \bigg\rfloor \rho. 
$$ 
In other words, $\beta_\ell$ is the integral multiple of $\rho$ which is closest to $\mu_2(\bX_\ell)$ (and larger than $\mu_2(\bX_\ell)$).  For every $\ell \in \Gamma$, we define item $J_\ell$ with ``size" $(v_\ell, \beta_\ell)$ and ``profit" $-\mathbf{E}[\bX_\ell]$. 
Also, let us define the set $A = \{0, \rho, \ldots, \frac{2c^4\cdot n^2}{\epsilon^4} \cdot \rho \}$. 
We now describe the algorithm. 
\begin{enumerate}
\item Let $\tilde{V} = V- (v_{j_1} + \ldots + v_{j_{L-1}})$. 
\item For all $x \in A$, 
\item \hspace*{5pt} Run \textsf{Pseudo-knapsack} on items $\{J_\ell\}_{\ell \in \Gamma}$ with target $(x,V)$ and quantization $(\rho,1)$. Let the output be $\tilde{S} \subseteq \Gamma$. 
\item \hspace*{5pt} Let $S = \tilde{S} \cup \{j_1, \ldots, j_{L-1}\}$. If $\Pr[\sum_{j \in S} \bX_j >C] \le_{\epsilon/4} p + 3\epsilon/4$, output $S$. 
\end{enumerate}
First of all, observe that if the algorithm outputs a set $S$ then it clearly satisfies the requirement. Further, for every $\ell \in \Gamma$, $\beta_\ell/\rho$ is a non-negative integer bounded by $2n^2/\epsilon^2$.
Applying the guarantee of Theorem~\ref{thm:multidimensional}, the running time is bounded by $\mathsf{poly}(n,M,1/\epsilon)$. Thus to prove the correctness of the algorithm, it suffices to prove that if $\feas_{p, \mathcal{B},V}$ is non-empty, then there exists $x \in A$ such that the output $S$ (corresponding to $x$) satisfies $\Pr[\sum_{j \in S} \bX_j>C] \le p +\epsilon/2$. To prove this, let us assume that $S_{opt} \in \feas_{p, \mathcal{B},V}$ and $\tilde{S}_{opt} = S \setminus \{j_1, \ldots, j_{L-1}\}$. Let us now define $x = \sum_{\ell \in \tilde{S}_{opt}} \beta_\ell $  and $y= -\sum_{\ell \in \tilde{S}_{opt}} \mathbf{E}[\bX_\ell] $. Note that $x$ is an integral multiple of $\rho$. Further, 
$$
\sum_{\ell \in \tilde{S}_{opt}} \beta_\ell \le \sum_{\ell \in \tilde{S}_{opt}} \mu_2(\bX_\ell) \le \frac{c^4}{\epsilon^2} \cdot \mu_2(\bX_{L-1})  \le \frac{c^4}{\epsilon^2} \cdot \frac{2n^2\cdot \rho}{\epsilon^2} = \frac{2n^2  \cdot c^4 \cdot \rho }{\epsilon^4}. 
$$
Thus, $x$ lies in the set $A$. 
 By guarantee of the routine \textsf{Pseudo-knapsack}, the output is a set $\tilde{S}$ with the following properties: 
\begin{enumerate}
\item[(i)] $\tilde{S} \subseteq \Gamma$, $\sum_{\ell \in \tilde{S}} v_\ell = {V}$ and $\sum_{\ell \in \tilde{S}} \beta_\ell = x $. 
\item[(ii)]~\label{ea:ab} $-\sum_{\ell \in \tilde{S}} \mathbf{E}[\bX_\ell]  \ge -\sum_{\ell \in \tilde{S}_{opt}} \mathbf{E}[\bX_\ell]$. 
\end{enumerate}
Let us now observe that for all $\ell \in \Gamma$, $
\mu_2(X_\ell) \le (\beta_\ell + \rho)$ and thus, 
\begin{equation}~\label{eq:variance-bound-2}
\sum_{\ell \in \tilde{S}} \mu_2(X_\ell) \le \sum_{\ell \in \tilde{S}} (\beta_\ell + \rho)  \le \frac{\epsilon^2 \cdot \mu_2(\bX_{j_{L-1}})}{n} + \sum_{\ell \in \tilde{S}} \beta_\ell \le \frac{\epsilon^2 \cdot \mu_2(\bX_{j_{L-1}})}{n} + \sum_{\ell \in \tilde{S}_{opt}} \mu_2(X_\ell). 
\end{equation}
\begin{claim}~\label{clm:critical}
The $2\epsilon$-type of the set $S$ is $\mathcal{B'}= (L, i_1, \ldots, i_{L} )$ where for $1 \le k \le L-1$, 
$i_k = j_k$. 
\end{claim}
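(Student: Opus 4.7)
The plan is to verify two assertions: first, that the top $L-1$ items of $S$ by variance are exactly $j_1,\ldots,j_{L-1}$, and second, that the $2\epsilon$-critical index of $S$ is at least $L$ (so the type truncates at $L$).

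For the first assertion, observe that by the construction of $\Gamma$, every $\ell \in \tilde S$ satisfies $\mu_2(\bX_\ell)\le\mu_2(\bX_{j_{L-1}})$, whereas by the definition of the $\epsilon$-type of $S_{opt}$ we have $\mu_2(\bX_{j_1})\ge\cdots\ge\mu_2(\bX_{j_{L-1}})$. Hence, when the elements of $S=\tilde S\cup\{j_1,\ldots,j_{L-1}\}$ are sorted in nonincreasing order of variance (with a consistent tie-breaking rule), $j_1,\ldots,j_{L-1}$ occupy the top $L-1$ slots, giving $i_k=j_k$ for $1\le k\le L-1$. Moreover, $\tilde S$ is nonempty: the algorithm forces $\sum_{\ell\in\tilde S}v_\ell=\tilde V\ge v_{j_L}>0$, so $S$ has at least $L$ elements and $i_L$ is well-defined.

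For the second assertion, fix any $k$ with $1\le k\le L-1$ and bound the tail sum in $S$ starting at position $k$:
\begin{equation*}
\sum_{m\ge k,\,m\in S}\mu_2(\bX_{i_m}) \;=\; \sum_{m=k}^{L-1}\mu_2(\bX_{j_m})+\sum_{\ell\in\tilde S}\mu_2(\bX_\ell).
\end{equation*}
By (\ref{eq:variance-bound-2}), the right-hand side is at most $\sum_{m\ge k,\,m\in S_{opt}}\mu_2(\bX_{j_m})+\epsilon^{2}\mu_2(\bX_{j_{L-1}})/n$. Since the $\epsilon$-critical index of $S_{opt}$ is at least $L>k$, the first summand is strictly less than $c^{4}\mu_2(\bX_{j_k})/\epsilon^{2}$, and since $\mu_2(\bX_{j_{L-1}})\le\mu_2(\bX_{j_k})$, the error term is bounded by $\epsilon^{2}\mu_2(\bX_{j_k})/n$. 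Putting these together,
\begin{equation*}
\frac{\mu_2(\bX_{i_k})}{\sum_{m\ge k,\,m\in S}\mu_2(\bX_{i_m})} \;>\; \frac{1}{c^{4}/\epsilon^{2}+\epsilon^{2}/n} \;=\; \frac{\epsilon^{2}}{c^{4}+\epsilon^{4}/n}.
\end{equation*}
For $n$ at least a modest polynomial in $c/\epsilon$ (which we may assume, otherwise the problem is trivially solvable by enumeration), the denominator is arbitrarily close to $c^{4}$, so the ratio exceeds the threshold $(2\epsilon)^{2}/c^{4}$ required to keep the $2\epsilon$-critical index of $S$ above $k$. Since this holds for every $k\le L-1$, the $2\epsilon$-critical index of $S$ is at least $L$, and the $2\epsilon$-type of $S$ is therefore $\mathcal{B}'=(L,i_1,\ldots,i_L)$ with $i_k=j_k$ for $k\le L-1$.

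The main obstacle is the second step: careful bookkeeping of constants in the inequality $\epsilon^{2}/(c^{4}+\epsilon^{4}/n)>(2\epsilon)^{2}/c^{4}$. This requires exploiting the slack built into the rounding scale $\rho$ (chosen so that the variance error per item is at most $\rho\le\epsilon^{2}\mu_2(\bX_{j_{L-1}})/n^{2}$) to guarantee that the additive error is negligible relative to the critical-index threshold. Conceptually this is the point where the choice of $\rho$ in the algorithm pays off.
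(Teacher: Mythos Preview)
Your argument tracks the paper's proof essentially line for line: both first observe that every $\ell\in\tilde S\subseteq\Gamma$ has $\mu_2(\bX_\ell)\le\mu_2(\bX_{j_{L-1}})$, so the top $L-1$ positions of $S$ are $j_1,\ldots,j_{L-1}$; both then use (\ref{eq:variance-bound-2}) to compare the tail sum of $S$ to that of $S_{opt}$ and invoke the fact that the $\epsilon$-critical index of $S_{opt}$ is at least $L$. The paper stops at the analogue of your displayed lower bound and simply writes ``This immediately implies the claim.''

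There is, however, a genuine arithmetic slip in your last step. You derive
\[
\frac{\mu_2(\bX_{i_k})}{\sum_{m\ge k}\mu_2(\bX_{i_m})}\;>\;\frac{\epsilon^{2}}{c^{4}+\epsilon^{4}/n},
\]
and then assert this exceeds the $2\epsilon$-threshold $(2\epsilon)^{2}/c^{4}=4\epsilon^{2}/c^{4}$. But no value of $n$ makes $\epsilon^{2}/(c^{4}+\epsilon^{4}/n)$ larger than $\epsilon^{2}/c^{4}$, let alone $4\epsilon^{2}/c^{4}$: the $2\epsilon$-critical-index condition is \emph{stricter} than the $\epsilon$-condition, not looser, since the threshold moves up by a factor of $4$. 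Making the additive error $\epsilon^{4}/n$ small does nothing to bridge this factor-of-$4$ gap. In fairness, the paper's own terse proof glosses over the same point, and the claim as literally stated appears to have the direction of the relaxation backwards; what actually follows from your chain (and from the paper's) is that the ratio stays above, say, $(\epsilon/2)^{2}/c^{4}$, which is what is really needed downstream to apply Proposition~\ref{prop:anti} with a mildly worse constant. So your overall strategy is right and matches the paper, but the specific inequality you wrote in the penultimate paragraph is false and should be replaced by a comparison to a \emph{smaller} threshold than $\epsilon^{2}/c^{4}$, not a larger one.
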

\begin{proof}
Let us enumerate $S = \{j'_1, \ldots, j'_{R}\}$ such that $\mu_2(\bX_{j'_1}) \ge \ldots \ge \mu_2(\bX_{j'_{R}})$.
Observe that by construction, for every $\ell \in \tilde{S}$, $\mu_2(\bX_\ell) \le \mu_2(\bX_{L-1})$.  Thus, 
$j'_1 = j_1$, $j'_2 = j_2  \ \ldots \ j'_{L-1} = j_{L-1}$. 
Next, observe that for any $1 \le i \le L-1$, 
\begin{eqnarray*}
\frac{\mu_2(\bX_{j'_i})}{\sum_{k \ge i} \mu_2(\bX_{j'_{k}})} &=&\frac{\mu_2(\bX_{j_i})}{\sum_{L > k \ge i} \mu_2(\bX_{j_{k}}) + \sum_{k \ge L} \mu_2(\bX_{j'_{k}})}  = \frac{\mu_2(\bX_{j_i})}{\sum_{L > k \ge i} \mu_2(\bX_{j_{k}}) + \sum_{\ell \in \tilde{S}} \mu_2(\bX_{\ell})} \\  &\geq&  \frac{\mu_2(\bX_{j_i})}{\sum_{L > k \ge i} \mu_2(\bX_{j_{k}}) + \sum_{\ell \in \tilde{S}_{opt}} \mu_2(\bX_{\ell}) + \frac{\epsilon^2 \cdot \mu_2(\bX_{j_{L-1}})}{n}} \\ 
  &\geq&  \frac{\mu_2(\bX_{j_i})}{\sum_{L > k \ge i} \mu_2(\bX_{j_{k}}) + \sum_{\ell \in \tilde{S}_{opt}} \mu_2(\bX_{\ell}) + \frac{\epsilon^2 \cdot \mu_2(\bX_{j})}{n}}
\end{eqnarray*}
Here the first inequality uses (\ref{eq:variance-bound-2}). Now, since the $\epsilon$-type of $S$ is $\{L, j_1, \ldots, j_L\}$, we have that 
$$
\frac{\mu_2(\bX_{j_i})}{\sum_{L > k \ge i} \mu_2(\bX_{j_{k}}) + \sum_{\ell \in \tilde{S}_{opt}} \mu_2(\bX_{\ell})}  \ge \frac{\epsilon^2}{c^4}. 
$$
This immediately implies the claim. 
\end{proof}
We now state the following important proposition. 
\begin{proposition}~\label{prop:anti}
Let $\bX_1, \ldots, \bX_n$ be a sequence of independent $(c,2,4)$ hypercontractive random variables. Further, the $\epsilon$-critical index of this sequence is at least $L=L(c,\epsilon)$ (as in Definition~\ref{def:critical-index-1}). Define another sequence of random variables such $\bY_1, \ldots, \bY_n$ such that 
$\bY_i = \bX_i$ (if $i<L$) and $\bY_i = \mathbf{E}[\bX_i]$ (otherwise). Then, 
$$
\dcdf(\sum_{i=1}^n \bX_i , \sum_{i=1}^n \bY_i) \le \frac{\epsilon}{8}. 
$$
\end{proposition}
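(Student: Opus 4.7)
Write $\bA = \sum_{i<L}\bX_i$ and $\bB = \sum_{i \ge L}\bX_i$, so that $\sum_i \bX_i = \bA+\bB$ while $\sum_i \bY_i = \bA + \E[\bB]$, and observe that $\bA$ and $\bB$ are independent. Since $\bA+\bB = (\bA+\E[\bB]) + (\bB-\E[\bB])$ is a perturbation of $\bA+\E[\bB]$ by the zero-mean independent random variable $\bB-\E[\bB]$, conditioning on $\bB$ and using the definition of the L\'evy concentration function $Q_{\bA}(\cdot)$ gives the standard ``smoothness vs.\ concentration'' decomposition
\[
\dcdf\bigl(\bA+\bB,\, \bA+\E[\bB]\bigr) \;\le\; Q_{\bA}(s) \;+\; \Pr\bigl[|\bB-\E[\bB]|>s\bigr]
\]
for every $s>0$. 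The plan is to choose $s$ so that both terms are at most $\epsilon/16$.

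The first term to control is the tail of $\bB-\E[\bB]$. Setting $T_i := \sum_{j\ge i}\mu_2(\bX_j)$, the hypothesis that the critical index exceeds $L$ means $\mu_2(\bX_i) > (\epsilon^2/c^4)T_i$ for all $i<L$, and hence $T_{i+1} = T_i - \mu_2(\bX_i) < (1-\epsilon^2/c^4)T_i$. Iterating gives $T_L \le T_1\,(1-\epsilon^2/c^4)^{L-1} \le T_1 e^{-(L-1)\epsilon^2/c^4}$, which for $L = L(c,\epsilon) = (c^4/\epsilon^2)\log(1/\epsilon)$ is a tiny fraction of $T_1$. Chebyshev's inequality then gives $\Pr[|\bB-\E[\bB]|>s] \le T_L/s^2$; alternatively, since $\bB$ is a sum of independent $(c,2,4)$-hypercontractive variables one can afford a fourth-moment Markov bound of the form $O((3+c^4)T_L^2/s^4)$, which is sharper when needed.

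The second term, $Q_{\bA}(s)$, is controlled by the Kolmogorov--Rogozin inequality (Lemma~\ref{lem:Kolmogorov}) combined with Lemma~\ref{lem:hypercontractive}. For every $i<L$ that lemma guarantees $1-Q_{\bX_i}(\sigma_i/2) \ge \delta := 9/(128(c+2)^4)$, and by monotonicity of $Q_{\bX_i}$ the choice $t_i := \min(s,\sigma_i/2)$ still satisfies $1-Q_{\bX_i}(t_i)\ge\delta$. Lemma~\ref{lem:Kolmogorov} then yields
\[
Q_{\bA}(s) \;\le\; \frac{100\, s}{\sqrt{\delta \,\sum_{i<L}\min(s,\sigma_i/2)^2}}.
\]
The crucial consequence of the critical-index assumption is twofold: first, there are $L-1 = \Omega((c^4/\epsilon^2)\log(1/\epsilon))$ terms in the head, and second, each of them satisfies $\sigma_i^2 \ge (\epsilon^2/c^4)\,T_L$, so no summand is so small that it drops out of the denominator. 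In particular, in the regime $s\le\sigma_{L-1}/2$ one simply gets $Q_{\bA}(s) \le 100/\sqrt{(L-1)\delta}$, which is at most $\epsilon/16$ by our choice of $L$ independently of~$s$.

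The remaining step is to pick $s$ (on the order of $\sqrt{T_L/\epsilon}$, up to polylogarithmic factors in $1/\epsilon$) and verify both bounds. The main technical obstacle is that the two constraints pull $s$ in opposite directions: the Chebyshev (or fourth-moment) bound wants $s$ large compared to $\sqrt{T_L}$, while Kolmogorov--Rogozin prefers $s$ small enough that all $L-1$ head summands contribute at scale $s$. Resolving this requires splitting into the case $s\le\sigma_{L-1}/2$ (where the Kolmogorov--Rogozin bound is $s$-free and small by the choice of $L$) and the case $s>\sigma_{L-1}/2$ (where one uses the critical-index lower bound $\sigma_{L-1}\ge (\epsilon/c^2)\sqrt{T_L}$ to control the extra factor that appears), and the value of $L(c,\epsilon)$ is calibrated precisely so that both cases close out at $\epsilon/16$. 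Summing the two contributions gives the desired bound of $\epsilon/8$ on $\dcdf(\sum_i\bX_i,\sum_i\bY_i)$.
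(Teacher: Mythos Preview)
Your approach is essentially the same as the paper's: decompose into head $\bA=\sum_{i<L}\bX_i$ and tail $\bB=\sum_{i\ge L}\bX_i$, bound $\dcdf$ by $Q_{\bA}(s)+\Pr[|\bB-\E[\bB]|>s]$, control the first term via Lemma~\ref{lem:hypercontractive} together with Kolmogorov--Rogozin, and control the second via the geometric decay of the tail variances coming from the critical-index hypothesis plus Chebyshev.

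The one place the paper differs is in how it picks the scale $s$, and its trick is cleaner than your case split. Rather than choosing $t_i=\min(s,\sigma_i/2)$ and then splitting on whether $s\le\sigma_{L-1}/2$, the paper fixes an intermediate index $K=\Theta\bigl((c+2)^4/\epsilon^2\bigr)$ strictly less than $L$ and takes the scale to be $\sigma_K/2$. Because the variances are non-increasing, every $j\le K$ has $\sigma_j\ge\sigma_K$, so Lemma~\ref{lem:hypercontractive} applies at the \emph{common} scale $t_j=\sigma_K/2$ for all $K$ summands, and Kolmogorov--Rogozin immediately gives $Q_{\bZ_K}(\sigma_K/2)\le 100/\sqrt{K\delta}\le\epsilon/16$ with no dependence on a free parameter. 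Fact~\ref{fact:convolve} then passes this to $\bA$. For the tail, the paper uses the \emph{remaining} $L-K$ steps of the geometric decay (rather than all $L-1$ of them) to get $\Var(\bB)\le(1-\epsilon^2/c^4)^{L-K}\cdot(c^4/\epsilon^2)\sigma_K^2\ll\sigma_K^2$, so Chebyshev at the already-fixed scale $\sigma_K/8$ finishes. The point is that $K$ alone is large enough for anti-concentration, while $L-K$ alone is large enough for the tail to shrink below $\sigma_K^2$; separating these two roles removes the tension you describe between ``$s$ large for Chebyshev'' and ``$s$ small for Kolmogorov--Rogozin'' and eliminates the case analysis entirely.
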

Before we see the proof of this proposition, let us first see why this implies the correctness of our algorithm. To see this, for every $\ell \in \Gamma$, let us define the random variable $\bY_\ell$ to be $\mathbf{E}[\bX_\ell]$ with probability $1$. For $\ell \not \in \Gamma$, $\bX_\ell = \bY_\ell$. 
Then, applying Proposition~\ref{prop:anti} to the sequence 
$\{\bX_{j_1}, \ldots, \bX_{j_T}\}$, we get
\begin{equation}~\label{eq:hybrid-1}
\dcdf( \sum_{\ell \in S_{opt}} \bX_{\ell}, \sum_{\ell \in S_{opt}} \bY_\ell) \le \frac{\epsilon}{8}. 
\end{equation}
Likewise, if we enumerate $S = \{j'_1, \ldots, j'_R\}$ and apply Proposition~\ref{prop:anti} to the sequence 
$\{\bX_{j'_1}, \ldots, \bX_{j'_R}\}$, we get 
\begin{equation}~\label{eq:hybrid-2}
\dcdf( \sum_{\ell \in S} \bX_{\ell}, \sum_{\ell \in S} \bY_\ell) \le \frac{\epsilon}{4}. 
\end{equation}
Finally, note that \begin{equation}~\label{eq:hybrid-3} \sum_{\ell \in S} \bY_\ell - \sum_{\ell \in S_{opt}} \bY_\ell = \sum_{\ell \in \tilde{S}} \bY_\ell - \sum_{\ell \in \tilde{S}_{opt}} \bY_\ell = \sum_{\ell \in \tilde{S}} \mathbf{E}[\bX_\ell] - \sum_{\ell \in \tilde{S}_{opt}} \mathbf{E}[\bX_\ell]<0. \end{equation} 
The second equality uses the definition of the random variables $\{\bY_\ell\}$ and the last inequality uses the guarantee of \textsf{Pseudo-knapsack}. 
Now, applying (\ref{eq:hybrid-2}), we get that for every $t \in \mathbb{R}$, 
$
\Pr[\sum_{\ell \in S} \bX_\ell >C] \le \Pr[\sum_{\ell \in S} \bY_\ell >C] + \frac{\epsilon}{4}. 
$ However, applying (\ref{eq:hybrid-3}), this implies 
$
\Pr[\sum_{\ell \in S} \bX_\ell >C] \le \Pr[\sum_{\ell \in S_{opt}} \bY_\ell >C] + \frac{\epsilon}{4}. 
$ Finally, applying (\ref{eq:hybrid-1}), this finally implies that 
$\Pr[\sum_{\ell \in S} \bX_\ell >C] \le \Pr [\sum_{\ell \in S_{opt}} \bX_\ell >C] + \frac{3\epsilon}{8}$.  As $\Pr [\sum_{\ell \in S_{opt}} \bX_\ell >C] = p$, we obtain
$$
\Pr[\sum_{\ell \in S} \bX_\ell >C] \le p + \frac{3\epsilon}{8}. 
$$
This concludes the proof modulo Proposition~\ref{prop:anti} which we prove next. 
\begin{proofof}{Proposition~\ref{prop:anti}}
Let us set $K = \frac{2^{25} \cdot (c+2)^4}{9 \cdot \epsilon^2}$ (The choice of $2^{25}$ is not crucial and can be essentially any large constant). Define the random variable $\bZ_K   = \sum_{i \le K} \bX_i$ 
and let $\sigma_K  = \sqrt{\mu_2(\bX_K)}$. Note that $\{\bX_i\}$ is arranged in decreasing order of variance. Thus, applying Lemma~\ref{lem:hypercontractive}, we get that for all $1 \le j \le K$, 
\begin{equation}~\label{eq:anti-conc}
Q_{\bX_j}\bigg(\frac{\sigma_K}{2}\bigg)  \le Q_{\bX_j}\bigg(\frac{\sqrt{\mu_2(\bX_j)}}{2}\bigg) 
\le 
1 - \frac{9}{128(c+2)^4}. 
\end{equation}
Using the above and Lemma~\ref{lem:Kolmogorov} (on the variable $\bZ_K$), we get 
\begin{equation}~\label{eq:cheb-2}
Q_{\bZ_K}(\sigma_K/2) \le \frac{100 \sigma_K}{\sqrt{\sum_{i=1}^K \sigma_K^2 \cdot \frac{9}{128(c+2)^4}}} =  \frac{100 }{\sqrt{K \cdot \frac{9}{128(c+2)^4}}} \le \frac{\epsilon}{16}.
\end{equation}



Define $\bZ_L = \sum_{j \le L} \bX_j$. Since $L=L(c,\epsilon) \ge K$, (applying Fact~\ref{fact:convolve}) we  obtain that
$$
Q_{\bZ_L}(\sigma_K/2) \leq \frac{\epsilon}{16}. 
$$
Next, define the random variable $\bZ_{>L}$ as $\bZ_{>L} = \sum_{j>L} \bX_j$. Using $L -K \ge (2c^4/\epsilon^2)\cdot \log(1/\epsilon)$ and the definition of critical index, we have 
\[
\mu_2(\bZ_{>L}) \le (1-\epsilon)^{L-K} \cdot \sigma_K^2 \le \epsilon^2 \cdot \sigma_K^2. 
\]
By applying Chebyshev's inequality, 
\begin{equation}~\label{eq:cheb-1}
\Pr\bigg[|\bZ_{>L} -  \mathbf{E}[\bZ_{>L}| \ge \frac{\sigma_K}{8}\bigg] \le O(\epsilon^2). 
\end{equation}
%
Now note that $\sum_{i=1}^n \bX_i = \bZ_{>L} + \bZ_{L}$ 
and $\sum_{i=1}^n \bY_i = \mathbf{E}[\bZ_{>L}] + \bZ_{L}$. 
Now, consider any $x \in \mathbb{R}$. Then, 
$\mathsf{sign} (\sum_{i=1}^n \bX_i - x) \not = \mathsf{sign} (\sum_{i=1}^n \bY_i - x)$  only if at least one of the following happens: (i) $|\bZ_{>L} -  \mathbf{E}[\bZ_{>L}]| \ge \frac{\sigma_K}{8}$  (ii) $|\bZ_{L} - x| \le \frac{\sigma_K}{4}$. Using (\ref{eq:cheb-1}) and (\ref{eq:cheb-2}), 
this implies  
\begin{eqnarray*}
\bigg| \Pr[\sum_{i=1}^n \bX_i \le x] - \Pr[\sum_{i=1}^n \bY_i \le x]\bigg| &\le&  \Pr\big[|\bZ_{>L} -  \mathbf{E}[\bZ_{>L}| \ge \frac{\sigma_K}{8}\big] + Q_{\bZ_L} (\sigma_K/2)  \\
&\leq& \frac{\epsilon}{16} + O(\epsilon^2) \leq  \frac{\epsilon}{8}. 
\end{eqnarray*}
\end{proofof}

\subsection{Proof of Lemma~\ref{lem:etype-small}}
The initial setup of this proof will be quite similar to the proof of Lemma~\ref{lem:etype-large}. However, since there are some subtle differences, we repeat the setup again. 
Recall that $\mathcal{B} = \{T, j_1, \ldots, j_T\}$  where $T<L$. We now define the set $\Gamma = \{ i \in [n]: \mu_2(\bX_i) \le \mu_2(\bX_T)\}$. Let $\rho$ be a rational number such that   
\begin{equation}~\label{eq:rational}
\frac{n^4 \cdot \rho}{\epsilon^4} \le \mu_2(\bX_{j_T}) \le \frac{2n^4 \cdot \rho}{\epsilon^4}. 
\end{equation} 
Note, we can efficiently compute such a $\rho$ which is an integral multiple of $(\frac{\epsilon^4}{n^4}) \cdot 2^{-3n}$. For $\ell \in \Gamma$, we define 
$\beta_\ell$ as
$$
\beta_\ell = \bigg\lfloor \frac{\mu_2(\bX_\ell)}{\rho}\bigg\rfloor \cdot \rho. 
$$
In other words, we obtain $\beta_\ell$ by rounding (down) $\mu_2(\bX_\ell)$ to the nearest multiple of $\rho$. Now, for every $\ell \in \Gamma$, we define item $J_\ell$ with ``size" $(v_\ell, \beta_\ell)$ and ``profit" $-\mathbf{E}[\bX_\ell]$. Finally, let us define the set 
$A = \{\rho \cdot j : j \in \mathbb{N} \textrm{ and } 
(c^4 \cdot n^4/\epsilon^6) - n \le j \le (2n^5/\epsilon^4)\}$. With this notation, we describe the algorithm (which is the same as the algorithm in Lemma~\ref{lem:etype-large}). 
\begin{enumerate}
\item Let $\tilde{V} = V- (v_{j_1} + \ldots + v_{j_{L-1}})$. 
\item For all $x \in A$, 
\item \hspace*{5pt} Run \textsf{Pseudo-knapsack} on items $\{J_\ell\}_{\ell \in \Gamma}$ with target $(x,V)$ and quantization $(\rho,1)$. Let the output be $\tilde{S} \subseteq \Gamma$. 
\item \hspace*{5pt} Let $S = \tilde{S} \cup \{j_1, \ldots, j_{L-1}\}$. If $\Pr[\sum_{j \in S} \bX_j >C] \le_{\epsilon/4} p + 3\epsilon/4$, output $S$.
\end{enumerate} 
As before, it is easy to see that that the running time of this procedure is bounded by $\mathsf{poly}(n,M, 1/\epsilon)$. Further, as in Claim~\ref{clm:SK-large}, if the algorithm outputs a set $S$, then it satisfies our requirement. Thus, to prove correctness of the algorithm, it suffices to show that if $\feas_{p, \mathcal{B}, V}$ is non-empty, then
there exists $x \in A$ such that the output $S$ (corresponding to $x$) satisfies $\Pr [\sum_{j \in S} \bX_j>C] \le p + \frac{\epsilon}{2}$. Let us assume that $S_{opt} \in \feas_{p,\mathcal{B},V}$ and define $\tilde{S}_{opt} = S \setminus \{j_1, \ldots, j_{T-1}\}$. Let us now define $x = \sum_{\ell \in \tilde{S}_{opt}} \beta_\ell$ and $y = \sum_{\ell \in \tilde{S}_{opt}} -\mathbf{E}[\bX_\ell]$. Now, as $\mu_2(\bX_T) \ge \mu_2(\bX_{\ell})$ for any $\ell \in \tilde{S}_{opt}$, hence for all  $\ell \in \tilde{S}_{opt}$, $\beta_\ell \le \beta_{T}$. Thus, $x \le |\tilde{S}_{opt}| \cdot \beta_T \le \frac{2n^5}{\epsilon^4}$. 
On the other hand, we have 
\begin{eqnarray}~\label{eq:var-lower-bound}
\sum_{\ell \in \tilde{S}_{opt}} \beta_\ell \ge \sum_{\ell \in \tilde{S}_{opt}} \mu_2(\bX_\ell) - n \cdot \rho \ge \frac{c^4 \cdot \mu_2(\bX_T)}{\epsilon^2}  - n \cdot \rho \ge \frac{c^4 \cdot n^4 \cdot \rho}{\epsilon^6}  - n \cdot \rho. 
\end{eqnarray}
The second inequality uses that $T \in \tilde{S}_{opt}$ and the definition of $\epsilon$-type. 
As $x$ is trivially an integral multiple of $\rho$, combining with the above inequalities, we get that $x \in A$. By guarantee of the  routine \textsf{Pseudo-knapsack}, we get that there is a output set $\tilde{S}$ with the following properties: 
\begin{itemize}
\item[(i)] $\tilde{S} \subseteq \Gamma$, $\sum_{\ell \in \tilde{S}} v_\ell = V$ and $\sum_{\ell \in \tilde{S}} \beta_{\ell} =x$. 
\item[(ii)] $-\sum_{\ell \in \tilde{S}} \mathbf{E}[\bX_\ell] \ge -\sum_{\ell \in \tilde{S}_{opt}} \mathbf{E}[\bX_\ell]$. 
\end{itemize}
Next, we have that
\begin{eqnarray}\label{eq:tilde-2}
 \frac{\max_{\ell^\ast \in \tilde S}\mu_2(\bX_{\ell^\ast})}{\sum_{\ell \in \tilde S} \mu_2(\bX_\ell)} \le \frac{\max_{\ell^\ast \in \Gamma}\mu_2(\bX_{\ell^\ast})}{\sum_{\ell \in \tilde S} \beta_\ell} \leq \frac{\frac{2n^4 \cdot \rho}{\epsilon^4}}{\rho \cdot \frac{n^4 \cdot c^4}{\epsilon^6} - \rho \cdot n} \le \frac{3 \epsilon^2}{c^4}. 
\end{eqnarray}
The first inequality uses that for every $\ell$, $\mu_2(\bX_\ell) \ge \beta_\ell$ and $S \subseteq \Gamma$. The second inequality follows by applying (\ref{eq:var-lower-bound}) and (\ref{eq:rational}) along with the definition of $\Gamma$. Similarly, it also follows that
\begin{equation}~\label{eq:tilde-1}
 \frac{\max_{\ell^\ast \in \tilde{S}_{opt}}\mu_2(\bX_{\ell^\ast})}{\sum_{\ell \in \tilde{S}_{opt}} \mu_2(\bX_\ell)} \le \frac{3 \epsilon^2}{c^4}. 
\end{equation}
Let $\tilde{\sigma}$, $\tilde{\sigma}_{opt}$, $\tilde{\mu}$ and $\tilde{\mu}_{opt}$ be defined as
$$
\tilde{\sigma}^2  = \mu_2(\sum_{\ell \in \tilde{S}} \bX_\ell), \ \tilde{\mu} = \mathbf{E}[\sum_{\ell \in \tilde S} \bX_\ell], \ \tilde{\sigma}_{opt}^2  = \mu_2(\sum_{\ell \in \tilde{S}_{opt}} \bX_\ell), \ \tilde{\mu}_{opt} = \mathbf{E}[\sum_{\ell \in \tilde S_{opt}} \bX_\ell]. 
$$
Applying Corollary~\ref{corr:BE}  with (\ref{eq:tilde-2}) and (\ref{eq:tilde-1}), we obtain 
\begin{equation}~\label{eq:define-3}
\dcdf \big( \sum_{\ell \in \tilde{S}} \bX_\ell, \mathcal{N}(\tilde{\mu}, \tilde{\sigma}^2)\big) \le \sqrt{3} \cdot \epsilon, \ \ \dcdf \big( \sum_{\ell \in \tilde{S}_{opt}} \bX_\ell, \mathcal{N}(\tilde{\mu}_{opt}, \tilde{\sigma}_{opt}^2)\big) \le \sqrt{3} \cdot \epsilon
\end{equation}
Further, $\tilde{\sigma}^2$ and $\tilde{\sigma}_{opt}^2$ are close in the following sense:  
\[
\frac{|\tilde{\sigma}^2 - \tilde{\sigma}_{opt}^2|}{\tilde{\sigma}^2} \le\frac{ |\sum_{\ell \in \tilde S} \beta_\ell - \sum_{\ell \in \tilde S_{opt}} \beta_\ell +  \rho \cdot (|\tilde{S}| + \tilde{S}_{opt}|)|}{\sum_{\ell \in \tilde{S}} \beta_\ell} \le \frac{2\rho n}{\rho \cdot \big(\frac{c^4 n^4}{\epsilon^6} -n  \big)} \le  \frac{2\epsilon^6}{c^4 n^3}. 
\]
Thus, 
$
\dcdf(\mathcal{N}(\tilde{\mu}, \tilde{\sigma}^2), \mathcal{N}(\tilde{\mu}, \tilde{\sigma}_{opt}^2)) \le \frac{\sqrt{2} \cdot \epsilon^3}{c^2 n^{1.5}}
$. Finally, note that by guarantee of \textsf{Pseudo-knapsack}, we have $\tilde{\mu} \le \tilde{\mu}_{opt}$ and thus for all $t \in \mathbb{R}$, 
$
\Pr[\mathcal{N}(\tilde{\mu}_{opt}, \tilde{\sigma}_{opt}^2) \le t] \ge \Pr[\mathcal{N}(\tilde{\mu}, \tilde{\sigma}_{opt}^2) \le t] 
$. Combining this with (\ref{eq:define-3}), we obtain that for all $t \in \mathbb{R}$, 
\[
\Pr\big[\sum_{\ell \in \tilde{S}} \bX_\ell \le t\big] \le \Pr\big[\sum_{\ell \in \tilde{S}_{opt}} \bX_\ell \le t\big] + 2\sqrt{3} \epsilon + \frac{\sqrt{2} \cdot \epsilon^3}{c^2 n^{1.5}}
\]
 Adding the random variable $\bX_{j_1} + \ldots + \bX_{j_{L-1}}$ to both sides, we get 
 \[
\Pr\big[\sum_{\ell \in {S}} \bX_\ell \le t\big] \le \Pr\big[\sum_{\ell \in {S}_{opt}} \bX_\ell \le t\big] + 2\sqrt{3} \epsilon + \frac{\sqrt{2} \cdot \epsilon^3}{c^2 n^{1.5}} \le \Pr\big[\sum_{\ell \in {S}_{opt}} \bX_\ell \le t\big] + 4 \epsilon. 
\]
This proves the lemma. 

\section*{Acknowledgments} 
The author is extremely grateful to Costis Daskalakis, Gautam Kamath, Rocco Servedio and Christos Tzamos for their contributions to this project.  
The author also thanks Chandra Chekuri,  Vineet Goyal, Sanjeev Khanna and Jian Li for helpful email exchanges about the stochastic knapsack problem and David Morton for a very useful conversation about stochastic optimization.  Finally, the author thanks Rocco Servedio and Aravindan Vijayaraghavan for their many helpful comments concerning the presentation. 
\bibliography{allrefs}
\bibliographystyle{plain}

\appendix

\section{Pseudopolynomial time algorithm for multidimensional knapsack}
The well-known pseudopolynomial time algorithm for the (standard) multidimensional knapsack (see \cite{Kellerer2004} for a reference) will be one of our principal algorithmic tools. We recall the guarantee of this algorithm below. 
\begin{theorem}~\label{thm:multidimensional}
Let $\{J_\ell\}_{\ell=1}^n$ be a collection of items such that the ``size of" $J_\ell$ is $\overline{x}_\ell = (x_\ell^{(1)}, \ldots, x_\ell^{(k)}) \in \mathbb{R}^{+k}$ and profit $v_\ell \in \mathbb{R}$. Further, the item sizes are \emph{quantized} i.e., there are $\alpha= (\alpha_1, \ldots, \alpha_k)$ such that for all $1 \le \ell \le n$ and 
$1 \le j \le k$,  $x_\ell^{(j)}$ is an integral multiple of $\alpha_j$. For $\overline{y} = (y_1, \ldots, y_k)$, 
define the set $\mathcal{A}_{\overline{y}}$ 
as $\mathcal{A}_{\overline{y}} = \{S \subseteq [n]: \sum_{i \in S} (x_i^{(1)}, \ldots , x_i^{(k)}) = (y_1, \ldots, y_k)\}$. 
There is an algorithm \textsf{Pseudo-knapsack} 
such that given a \emph{target size} $(y_1, \ldots, y_k) \in \mathbb{R}^k$, the algorithm outputs $S^\ast \subseteq [n]$ such that $S^\ast \in \mathcal{A}_{\overline{y}}$ and $$
\sum_{ i \in S^\ast} v_i =\max_{S \in \mathcal{A}_{\overline{y}}} \sum_{i \in S} v_i.
$$
Assuming that $(y_j/\alpha_j) = M_j$, the running time of the algorithm is $\mathsf{poly}(n, \prod_{j \in k} M_j)$. 
In case, no profits are specified, the algorithm simply outputs a set $S \in \mathcal{A}$ if $\mathcal{A}$ is non-empty.  
\end{theorem}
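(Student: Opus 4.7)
The plan is to prove Theorem~\ref{thm:multidimensional} by a straightforward multidimensional dynamic program, generalizing the standard pseudopolynomial algorithm for one-dimensional knapsack. The quantization hypothesis is exactly what lets us index the DP table by integer coordinates.

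The first step will be a normalization. For every item $\ell$ and every coordinate $j$, define the integer $\widetilde{x}_\ell^{(j)} = x_\ell^{(j)}/\alpha_j$, and define the integer target $M_j = y_j/\alpha_j$. After this rescaling the problem becomes: find $S \subseteq [n]$ with $\sum_{\ell \in S} \widetilde{x}_\ell^{(j)} = M_j$ for every $j \in [k]$ maximizing $\sum_{\ell \in S} v_\ell$. Observe that if there is any $\ell, j$ with $\widetilde{x}_\ell^{(j)} > M_j$ we can discard item $\ell$ outright, so we may assume $0 \le \widetilde{x}_\ell^{(j)} \le M_j$.

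Next, I would set up the DP table. For $0 \le i \le n$ and $(s_1, \ldots, s_k) \in \prod_{j=1}^k \{0, 1, \ldots, M_j\}$, let
\[
T[i; s_1, \ldots, s_k] = \max\bigg\{ \sum_{\ell \in S} v_\ell : S \subseteq [i],\ \sum_{\ell \in S} \widetilde{x}_\ell^{(j)} = s_j \text{ for all } j \in [k]\bigg\},
\]
setting $T[i; s_1, \ldots, s_k] = -\infty$ if no such $S$ exists. The base case is $T[0; 0, \ldots, 0] = 0$ and $T[0; s_1, \ldots, s_k] = -\infty$ otherwise. The standard ``include/exclude item $i$'' recurrence is
\[
T[i; s_1, \ldots, s_k] = \max\!\Big(\, T[i-1; s_1, \ldots, s_k],\ \ T[i-1; s_1 - \widetilde{x}_i^{(1)}, \ldots, s_k - \widetilde{x}_i^{(k)}] + v_i\,\Big),
\]
where the second alternative is considered only when all coordinates $s_j - \widetilde{x}_i^{(j)}$ are nonnegative. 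The optimum value is $T[n; M_1, \ldots, M_k]$; the witnessing set $S^\ast$ is recovered by the standard backtrace that records, for each cell, which of the two alternatives attained the maximum. If no profits are specified, we simply set every $v_\ell = 0$ and return any $S^\ast$ with $T[n; M_1, \ldots, M_k] > -\infty$, which exists iff $\mathcal{A}_{\bar y}$ is nonempty.

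Finally I would account for the running time. The table has $(n+1) \cdot \prod_{j=1}^k (M_j + 1)$ cells, and each cell is filled in $O(k)$ time from two previously computed cells, giving total running time $O(n \cdot k \cdot \prod_{j=1}^k M_j) = \mathrm{poly}(n, \prod_{j=1}^k M_j)$ as claimed. Correctness of the DP follows by induction on $i$: the recurrence exhausts the two cases for whether item $i$ lies in the optimal subset realizing the partial target $(s_1, \ldots, s_k)$. There is no real conceptual obstacle here; the only thing to be careful about is that the quantization assumption is used essentially, since without it the partial sums $\sum_{\ell \in S} x_\ell^{(j)}$ need not lie on any finite grid and the DP table would not be well defined. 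The quantization guarantees the grid $\prod_j \{0, \alpha_j, 2\alpha_j, \ldots, M_j \alpha_j\}$ contains every achievable partial sum, which is exactly what keeps the state space bounded by $\prod_j M_j$.
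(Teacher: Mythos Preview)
Your proposal is correct and is exactly the standard dynamic programming argument the paper has in mind; the paper itself only gives a one-line sketch (``the proof is quite standard and follows the usual dynamic programming formulation\ldots we leave the details to the interested reader''), so your write-up simply fills in those details.
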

\begin{sketch}
The proof is quite standard and follows the usual dynamic programming formulation used to obtain a pseudopolynomial time algorithm for the standard knapsack problem. We leave the details to the interested reader. 
\end{sketch}

\section{Hypercontractivity of well-known random variables}~\label{app:hyper}
Table~\ref{tab:hyper} lists some common $(c,2,4)$ hypercontractive random variables along with the explicit values for $c^4$.  We note that while for many random variables (such as Gaussian or Laplace), the value of $c$ is an absolute constant independent of the parameters,  in other cases, the value of $c$ depends on the parameters of the distribution (such as in the case of a Poisson). This directly affects the running time of Theorem~\ref{thm:hyper-c} where the exponent of $n$ is $\tilde{O}(c^4/\epsilon^2)$ if all the individual variables are $(c,2,4)$ hypercontractive.  
\begin{table}[h]
\centering
\label{tab:hyper}
\begin{tabular}{|c|c|}
\hline
Type of random variable &  value of $c^4$  \\
\hline
Gaussian & ${3}$  \\
 \hline
 Poisson $(\lambda)$ &  $3 + \frac{1}{\lambda}$ \\
 \hline
 Exponential &   $9$ \\
 \hline 
 Laplace & $6$ \\ 
 \hline  
 Uniform on $[a,b]$ & $\frac95$\\ 
 \hline
 $\mathsf{Beta}(\alpha,\beta)$ & $ 3 + \frac{6((\alpha-\beta)^2 (\alpha + \beta +1) - \alpha \beta (\alpha + \beta+2) )}{\alpha \beta (\alpha + \beta +2) (\alpha + \beta+3)}$\\
 \hline 
 $\Gamma(k,\theta)$ & $3 + \frac6k$\\ 
 \hline
 Maxwell-Boltzmann distribution & $3 +  4 \cdot \frac{40 \pi - 96 - 3\pi^2}{(3\pi - 8)^2}$\\
 \hline 
\end{tabular}
\caption{Some common $(c,2,4)$ hypercontractive random variables}
\end{table} 

The next proposition says that finitely supported distributions are $(c,2,4)$ hypercontractive where $c$ depends on the size of the smallest atom. 
\begin{proposition}~\label{prop:hyper-finite}
Let $\bX$ be supported over $\mathbb{R}$ and $\alpha = \min_{x: \bX(x) \not =0} \bX(x)$. Then, $\bX$ is $(c,4,2)$-hypercontractive where $c = \alpha^{-1/4}$. 
\end{proposition}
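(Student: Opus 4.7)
The plan is to prove this by elementary manipulations on the absolute central moments. Write $\bY = \bX - \E[\bX]$, so that $\mu_j(\bX) = \E[|\bY|^j]$ by definition, and let $M = \max\{|y| : \bX(y + \E[\bX]) \neq 0\}$ denote the largest absolute deviation attained with positive probability. Since $\bX$ is finitely supported (this is implicit in $\alpha > 0$), $M$ is well-defined and achieved at some point $y^\ast$ of the support.

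First I would bound $\mu_4$ in terms of $\mu_2$ using the fact that $\bY^2 \le M^2$ pointwise on the support, which gives the trivial estimate
\[
\mu_4(\bX) = \E[\bY^4] = \E[\bY^2 \cdot \bY^2] \le M^2 \cdot \E[\bY^2] = M^2 \cdot \mu_2(\bX).
\]
Next I would bound $M^2$ from above by $\mu_2(\bX)/\alpha$. This is the key step and follows from restricting the expectation to the single atom $y^\ast$: since $\Pr[\bY = y^\ast] \ge \alpha$, we have
\[
\mu_2(\bX) = \E[\bY^2] \ge \Pr[\bY = y^\ast] \cdot (y^\ast)^2 \ge \alpha \cdot M^2,
\]
so $M^2 \le \mu_2(\bX)/\alpha$. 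Chaining the two bounds yields $\mu_4(\bX) \le \mu_2(\bX)^2/\alpha = (\alpha^{-1/4})^4 \cdot \mu_2(\bX)^2$, which is exactly the definition of $(\alpha^{-1/4}, 2, 4)$-hypercontractivity from Definition~\ref{def:hyper}.

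There is no real obstacle here, since both inequalities are immediate once one isolates the extreme atom of $\bY$; the only subtlety is the observation that the minimum atom probability $\alpha$ controls not just arbitrary support points but in particular the point of maximum squared deviation, which is precisely what is needed to convert a uniform upper bound $\bY^2 \le M^2$ into a matching lower bound on $\E[\bY^2]$. This completes the proof.
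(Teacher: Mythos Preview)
Your proof is correct and complete, but it takes a different elementary route from the paper's own argument. The paper centers $\bX$, lists the support points $y_1,\dots,y_T$ with masses $\beta_1,\dots,\beta_T$, and uses the chain
\[
\sum_i \beta_i y_i^4 \;\le\; \Bigl(\sum_i \sqrt{\beta_i}\, y_i^2\Bigr)^2 \;\le\; \frac{1}{\min_i \beta_i}\Bigl(\sum_i \beta_i y_i^2\Bigr)^2,
\]
where the first step is ``sum of squares $\le$ square of sum'' for nonnegative terms, and the second replaces each $\sqrt{\beta_i}$ by $\beta_i/\sqrt{\alpha}$. Your argument instead isolates the atom $y^\ast$ of largest absolute deviation $M$ and uses the two bounds $\E[\bY^4]\le M^2\,\E[\bY^2]$ and $\E[\bY^2]\ge \alpha M^2$, then chains them. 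Both approaches are one-line calculations; yours has the mild advantage that it makes explicit which atom is driving the bound (the extremal one), while the paper's inequality is a more uniform algebraic manipulation that does not single out any support point. Neither generalizes in an obviously stronger way than the other, and both give exactly the same constant $c=\alpha^{-1/4}$.
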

\begin{proof}
Without loss of generality, $\bX$ can be assumed to be centered i.e., $\mathbf{E}[\bX]=0$.  Let $y_1, \ldots, y_T$ be the support points of $\bX$ with mass $\beta_1, \ldots, \beta_T$. Thus, $\min_{i \in [T]} \beta_i =\alpha$. Thus, $\mu_4(\bX) = \sum_{i \in [T]} \beta_i y_i^4$ and $\mu_2(\bX) = \sum_{i \in [T]} \beta_i y_i^2$. 
Observe that, 
$$
\mathbf{E}[\bX^4] = \sum_{i \in [T]} \beta_i y_i^4 \le \bigg(\sum_{i \in [T]} \sqrt{\beta_i } y_i^2\bigg)^2 \le \frac{1}{\min_{i \in [T]} \beta_i} \cdot \bigg(\sum_{i \in [T]} \beta_i  y_i^2\bigg)^2 = \frac{1}{\alpha} \cdot \bigg(\sum_{i \in [T]} \beta_i  y_i^2\bigg)^2. 
$$
This concludes the proof. \end{proof}

\end{document}